\def\doi{8(4:11)2012}
\newcommand{\midd}{\; \; \mbox{\Large{$\mid$}}\;\;}
\newcommand{\funone}{f}
\newcommand{\funtwo}{g}
\newcommand{\PCFld}{$\mathsf{d}\ell\mathsf{PCF}$}
\newcommand{\PCF}{$\mathsf{PCF}$}
\newcommand{\DML}{$\mathsf{DML}$}
\newcommand{\BLL}{$\mathsf{BLL}$}
\newcommand{\QBAL}{$\mathsf{QBAL}$}
\newcommand{\Kld}{\mathsf{K}_{\mathsf{PCF}}}
\newcommand{\carone}{\mathcal{S}}
\newcommand{\arone}{\alpha}
\newcommand{\oterms}[2]{\mathcal{O}(#1,#2)}
\newcommand{\mnu}{\dotdiv}
\newcommand{\varsetone}{\mathcal{V}}
\newcommand{\assone}{\rho}
\newcommand{\eqpone}{\mathcal{E}}
\newcommand{\eqpun}{\mathcal{U}}
\newcommand{\eqpless}{\models}
\newcommand{\funsymone}{\texttt{f}}
\newcommand{\emfs}{\texttt{empty}}
\newcommand{\pairfs}{\texttt{pair}}
\newcommand{\evalfs}{\texttt{eval}}
\newcommand{\pairing}{\mathit{pairing}}
\newcommand{\ftermone}{t}
\newcommand{\ftermtwo}{s}
\newcommand{\semu}[1]{\llbracket #1\rrbracket}
\newcommand{\semt}[3]{\llbracket #1\rrbracket^{#3}_{#2}}
\newcommand{\itermone}{\mathrm{I}}
\newcommand{\itermtwo}{\mathrm{J}}
\newcommand{\itermthree}{\mathrm{K}}
\newcommand{\itermfour}{\mathrm{H}}
\newcommand{\itermfive}{\mathrm{L}}
\newcommand{\itermsix}{\mathrm{M}}
\newcommand{\itermseven}{\mathrm{N}}
\newcommand{\itermeight}{\mathrm{P}}
\newcommand{\itermnine}{\mathrm{R}}
\newcommand{\itermten}{\mathrm{Q}}
\newcommand{\itermeleven}{\mathrm{S}}
\newcommand{\itermtwelve}{\mathrm{T}}
\newcommand{\Dfscomb}[4]{\bigotriangleup_{#1}^{#2,#3}#4}
\newcommand{\gn}[2]{\ulcorner #1,#2 \urcorner}
\renewcommand{\conv}[1]{#1\Downarrow}
\newcommand{\ivarone}{a}
\newcommand{\ivartwo}{b}
\newcommand{\ivarthree}{c}
\newcommand{\ivarfour}{d}
\newcommand{\ivarfive}{e}
\newcommand{\iconstraintone}{\Phi}
\newcommand{\iconstrainttwo}{\Psi}
\newcommand{\icontextone}{\phi}
\newcommand{\natone}{n}
\newcommand{\nattwo}{m}
\newcommand{\natthree}{p}
\newcommand{\natit}[1]{\textbf{#1}}
\newcommand{\kleq}{\simeq}
\newcommand{\tree}[1]{\mathbb{T}_{#1}}
\newcommand{\ifadd}{\texttt{add}}
\newcommand{\ifgtz}{\texttt{gt}}
\newcommand{\ifmult}{\texttt{mult}}
\newcommand{\pvdash}{\Vdash}
\newcommand{\typeone}{\sigma}
\newcommand{\typetwo}{\tau}
\newcommand{\typethree}{\mu}
\newcommand{\typefour}{\gamma}
\newcommand{\typefive}{\delta}
\newcommand{\typesix}{\eta}
\newcommand{\typeseven}{\beta}
\newcommand{\gtypeone}{\theta}
\newcommand{\gtypetwo}{\gamma}
\newcommand{\mtypeone}{A}
\newcommand{\mtypetwo}{B}
\newcommand{\mtypethree}{C}
\newcommand{\mtypefour}{D}
\newcommand{\mtsum}[2]{#1\mtsums #2}
\newcommand{\mtsums}{\uplus}
\newcommand{\Nat}{{\tt Nat}}
\newcommand{\tcontextone}{\Gamma}
\newcommand{\tcontexttwo}{\Delta}
\newcommand{\tcontexthree}{\Sigma}
\newcommand{\tcontextfour}{\Theta}
\newcommand{\qlin}[3]{[{#1}]\cdot{#2}\lin #3}
\newcommand{\qbang}[2]{[{#1}]\cdot{#2}}
\newcommand{\qbangwa}[1]{[{#1}]}
\newcommand{\tless}{\sqsubseteq}
\newcommand{\teq}{\cong}
\renewcommand{\sb}[3]{#1\{#3/#2\}}
\newcommand{\sbt}[2]{#1#2}
\newcommand{\sbst}[2]{\{#2/#1\}}
\newcommand{\sbstone}{\theta}
\newcommand{\natleq}{($\Nat$.\mathit{l})}
\newcommand{\linleq}{(\lin.\mathit{l})}
\newcommand{\qbleq}{(\qbang{-}{}.\mathit{l})}
\newcommand{\nattdef}{($\Nat$.\mathit{t})}
\newcommand{\lintdef}{(\lin.\mathit{t})}
\newcommand{\qbtdef}{(\qbang{-}{}.\mathit{t})}
\newcommand{\Axty}{V}
\newcommand{\Lamty}{L}
\newcommand{\Apty}{A}
\newcommand{\Natty}{N}
\newcommand{\Sucty}{S}
\newcommand{\Predty}{P}
\newcommand{\Ifty}{F}
\newcommand{\Recty}{R}
\newcommand{\rt}{\rightarrow}
\newcommand{\rts}{\rightarrow^*}
\newcommand{\ev}{\Downarrow}
\newcommand{\envone}{\rho}
\newcommand{\envtwo}{\mu}
\newcommand{\stone}{\xi}
\newcommand{\sttwo}{\theta}
\newcommand{\pairone}{c}
\newcommand{\casecl}[3]{(#1,#2,#3)}
\newcommand{\confone}{C}
\newcommand{\conftwo}{D}
\newcommand{\confthree}{E}
\newcommand{\prog}{\mathcal{P}}
\newcommand{\lin}{\ensuremath{\multimap}}
\newcommand{\arr}{\rightarrow}
\newcommand{\suc}{{\tt s}}
\newcommand{\pred}{{\tt p}}
\newcommand{\zero}{{\tt 0}}
\newcommand{\num}{{\tt n}}
\newcommand{\val}[1]{\underline{\tt #1}}
\newcommand{\CASE}[3]{{\ \tt ifz\ }#1{\ \tt then\ }#2{\ \tt else\ }#3}
\newcommand{\REC}[2]{{\tt fix}\ #1.#2}
\newcommand{\varone}{x}
\newcommand{\conone}{\Gamma}
\newcommand{\emcon}{\emptyset}
\newcommand{\prov}{\mathrel{\rhd}}
\newcommand{\tdone}{\pi}
\newcommand{\tdtwo}{\rho}
\newcommand{\tdthree}{\nu}
\newcommand{\tdfour}{\mu}
\newcommand{\tdfive}{\xi}
\newcommand{\termone}{t}
\newcommand{\termtwo}{u}
\newcommand{\termthree}{v}
\newcommand{\termfour}{w}
\newcommand{\termdbl}{\mathtt{dbl}}
\newcommand{\termdiv}{\mathtt{omega}}
\newcommand{\sigone}{\Sigma}
\newcommand{\sigun}{\Sigma_\eqpun}
\newcommand{\size}[1]{|#1|}
\newcommand{\NN}{\mathbb{N}}
\newcommand{\TtoNDT}[1]{(\!|#1|\!)}
\newenvironment{varitemize}
{
\begin{list}{\labelitemi}
{\setlength{\itemsep}{0pt}
 \setlength{\topsep}{0pt}
 \setlength{\parsep}{0pt}
 \setlength{\partopsep}{0pt}
 \setlength{\leftmargin}{15pt}
 \setlength{\rightmargin}{0pt}
 \setlength{\itemindent}{0pt}
 \setlength{\labelsep}{5pt}
 \setlength{\labelwidth}{10pt}
}}
{
 \end{list} 
}
\newcounter{numberone}
\newenvironment{varenumerate}
{
\begin{list}{\arabic{numberone}.}
{
  \usecounter{numberone}
  \setlength{\itemsep}{0pt}
  \setlength{\topsep}{0pt}
  \setlength{\parsep}{0pt}
  \setlength{\partopsep}{0pt}
  \setlength{\leftmargin}{15pt}
  \setlength{\rightmargin}{0pt}
  \setlength{\itemindent}{0pt}
  \setlength{\labelsep}{5pt}
  \setlength{\labelwidth}{15pt}
}}
{
\end{list} 
}
\begin{document}

\title[Linear Dependent Types and Relative Completeness]{Linear Dependent Types and Relative Completeness\rsuper*}

\author[U.~Dal Lago]{Ugo Dal Lago\rsuper a}
\address{{\lsuper a}Dipartimento di Scienze dell'Informazione -- Universit\`a di Bologna \\
 EPI FOCUS -- INRIA Sophia Antipolis}
\email{dallago@cs.unibo.it}
%\thanks{thanks 1, optional.}	

\author[M.~Gaboardi]{Marco Gaboardi\rsuper b}	%optional
\address{{\lsuper b}Dipartimento di Scienze dell'Informazione -- Universit\`a di Bologna \\
Computer and Information Science Department --  University of Pennsylvania \\
 EPI FOCUS -- INRIA Sophia Antipolis}
\email{gaboardi@cs.unibo.it}
\thanks{{\lsuper b}Marco Gaboardi was supported by the European Community's Seventh Framework 
  Programme (FP7/2007-2013) under grant agreement n$^\circ$ 272487.}

\keywords{Resource Consumption, Linear Logic, Dependent Types, Implicit Computational Complexity, Relative Completeness}
\subjclass{F.3.2, F.3.1}
\titlecomment{{\lsuper *}This work is partially supported by the INRIA 
ARC project ``ETERNAL''. This is a revised and extended version of a paper with the same title which has
appeared in the proceedings of LICS 2011.}

\begin{abstract}
A system of linear dependent types for the $\lambda$-calculus
with full higher-order recursion, called \PCFld, is introduced and proved
sound and relatively complete. Completeness holds in
a strong sense: \PCFld\ is not only able to precisely
capture the functional behavior of \PCF\ programs (i.e. how the output
relates to the input) but also some of their intensional properties, namely
the complexity of evaluating them with Krivine's Machine. \PCFld\
is designed around dependent types and linear logic and
is parametrized on the underlying language of index terms,
which can be tuned so as to sacrifice completeness for tractability.
\end{abstract}

\maketitle

%%%%%%%%%%%%%%%%%%%%%%
\section{Introduction}
%%%%%%%%%%%%%%%%%%%%%%
Type systems are powerful tools in the design of programming 
languages. While they have been employed traditionally to guarantee 
weak properties of programs (e.g. ``well-typed programs cannot go wrong''),
it is becoming more and more evident that they can be useful
when stronger properties are needed, such as security~\cite{journals/jcs/VolpanoIS96,SabelfeldMyers03}, 
termination~\cite{conf/csl/BartheGR08}, monadic temporal
properties~\cite{conf/lics/KobayashiO09} or
resource bounds~\cite{HOAA_POPL10,CraryWeirich00}.

One key advantage of type systems seen as formal methods is their
simplicity and their close relationship with programs --- checking whether a program has a type or even inferring the
(most general) type of a program is often decidable. The price to pay
is the incompleteness of most type systems: there are programs satisfying the property at hand
which cannot be given a type. This is in contrast with other formal methods,
like program logics~\cite{ABO09-Book3} where completeness is always a desirable feature, although
it only holds relatively to an oracle. Graphically, the situation is similar to the one in Figure~\ref{fig:graph}:
type systems are bound to be in the lower left corner of the diagram, where both the degree
of completeness and the complexity of the property under consideration is low; program
logics, on the other hand, are confined to the upper-right corner, where checking for 
derivability is almost always undecidable.
\begin{figure}
\begin{center}
  \includegraphics[bb = 0.144000 23.741999 142.667996 213.695993,
    angle = 270]{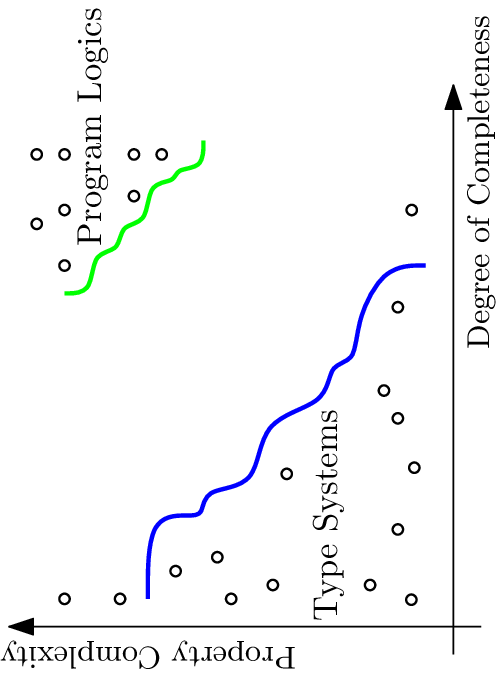}
\end{center}
\caption{Type Systems and Program Logics}
\label{fig:graph}
\end{figure}

One specific research field in which the just-described scenario 
manifests itself is implicit computational complexity, in which one aims
at defining characterizations of complexity classes by programming
languages and logical systems. Many type systems have been introduced
capturing, for instance, the polynomial time computable
functions \cite{conf/lics/Hofmann99a,BaillotTerui,BaillotGaboardiMogbil09esop}. 
All of them, under mild assumptions, can be employed as tools to certify programs as asymptotically time efficient. 
However, a tiny slice of the polytime \emph{programs} are generally typable, since
the underlying complexity class $\mathbf{FP}$ is only characterized in a purely extensional sense ---
for every function in $\mathbf{FP}$ there is \emph{at least one} typable
program computing it.

The main contribution of this paper is a \emph{type system} for the $\lambda$-calculus with
full recursion, called \PCFld, which is sound \emph{and complete}. Types of \PCFld\ are obtained, in the
spirit of \DML~\cite{POPL99*214,DBLP:journals/jfp/Xi07}, by decorating types 
of ordinary \PCF~\cite{plotkin77tcs,gunter92mit} with \emph{index terms}.
These are first-order terms freely generated from variables, function symbols
and a few more term constructs. They are indicated with metavariables like
$\itermone, \itermtwo,\itermthree$.
Type decoration reflects the standard decomposition of types into \emph{linear types} 
(as suggested by linear logic \cite{Girard87tcs}), and is inspired by recent works on 
the expressivity of bounded logics \cite{conf/tlca/LagoH09}.

Index terms and linear types permit to describe program properties
with a fine granularity. More precisely, \PCFld\ enjoys the following two properties:
\begin{varitemize}
\item
  \emph{Soundness}: if $\termone$ is a program
  and $\vdash_\itermthree\termone:\Nat[\itermone,\itermtwo]$,
  then $\termone$ evaluates to a natural number which lies
  between $\itermone$ and $\itermtwo$ and this evaluation takes
  at most $(\itermthree+1)\cdot\size{\termone}$ steps;
\item
  \emph{Completeness}: if $\termone$ is typable in \PCF\ and
  evaluates to a natural number $\natone$ in $\nattwo$ steps, then
  $\vdash_\itermone\termone:\Nat[\natit{\natone},\natit{\natone}]$, 
  where $\itermone\leq\nattwo$.
\end{varitemize}
Completeness of \PCFld\ holds not only for programs (i.e. terms of
ground types) but also for functions on the natural
numbers (see Section~\ref{sec:uniformization} for further details). 
Moreover, typing judgments tell us something 
about the functional behavior of programs
but also about their non-functional one, namely
the number of steps needed to evaluate the term in Krivine's Abstract
Machine.

As the title of this paper suggests, completeness of \PCFld\ holds in a relative
sense. Indeed, the behavior of programs can be precisely captured 
only in presence of a complete oracle for the truth of certain
assumptions in typing rules. This is exactly what happens in program logics such
as Floyd-Hoare's logic, where all true partial correctness assertions can be derived
\emph{provided} one is allowed to use all true sentences of 
first order arithmetic as axioms \cite{Cook78}. In \PCFld, those assumptions
take the form of (in)equalities between index terms, to be verified when
function symbols are interpreted as partial functions on natural numbers according
to an equational program $\eqpone$.
Actually, the whole of \PCFld\ is \emph{parametrized} on such an $\eqpone$,
but while soundness holds independently of the specific 
$\eqpone$, completeness, as is to be expected, holds only if $\eqpone$
is sufficiently powerful to encode all total computable functions (i.e. if
$\eqpone$ is universal). In other words, \PCFld\ can be claimed to be not
\emph{a} type system, but \emph{a family of} type systems obtained by 
taking a specific $\eqpone$ as the underlying ``logic'' of index terms.
The simpler $\eqpone$, the easier type checking and type inference are; the more
complex $\eqpone$, the larger the class of captured programs.

The design of \PCFld\ has been very much influenced by linear
logic~\cite{Girard87tcs}, and in particular by systems of indexed and bounded linear 
logic~\cite{GSS92,conf/tlca/LagoH09}, which have been recently shown to subsume other ICC systems as for the class
of programs they capture~\cite{conf/tlca/LagoH09}. One of the many ways to ``read'' \PCFld\ is as a
variation on the theme of \BLL~\cite{GSS92} obtained by generalizing polynomials to
arbitrary functions. The idea of going beyond a restricted, 
fixed class of bounds comes from Xi's work on \DML~\cite{POPL99*214,DBLP:journals/jfp/Xi07}. Cost 
recurrences for first order \DML\ programs have been studied \cite{conf/icfp/Grobauer01}.
No similar completeness results for dependent types are known, however.
%%%%%%%%%%%%%%%%%%%%%%%%%%%%%%%%%%%%%%%%%%%%%%%%%%%%%%%%%%%%%
\section{Types and Program Properties: An Informal Account}
%%%%%%%%%%%%%%%%%%%%%%%%%%%%%%%%%%%%%%%%%%%%%%%%%%%%%%%%%%%%%
Consider the following program:
$$
\termdbl=\REC{\funone}{\lambda\varone.\CASE{\varone}{\val{0}}{\suc(\suc(\funone(\pred(\varone))))}}.
$$
In a monomorphic, traditionally designed type system like \PCF~\cite{plotkin77tcs,gunter92mit}, the term $\termdbl$
receives type $\Nat\arr\Nat$. As a consequence, $\termdbl$ computes a function on natural numbers without ``going wrong'': 
it takes in input a natural number, and produces in output another
natural number (if any).
The type $\Nat\arr\Nat$, however, does not give any information
about \emph{which} specific function on the natural numbers $\termdbl$ computes. 
Indeed, in \PCF\ (and in most real-world programming languages) any program computing a 
function on natural numbers, being it for instance the identity function or
(a unary version of) the Ackermann function, can be typed by $\Nat\arr\Nat$. 

Some modern type systems allow one to construct and use types like $\typetwo=\Nat[\ivarone]\arr\Nat[\natit{2}\times\ivarone]$, which
tell not only what set or domain (the interpretation of) the term belongs to, but also which specific element of the
domain the term actually denotes. The type $\typetwo$, for example, could be attributed only to those programs
computing the function $\natone\mapsto 2\times\natone$, including $\termdbl$. Types of this form can be constructed in
dependent and sized type theories \cite{POPL99*214,conf/csl/BartheGR08}. The type system \PCFld\ introduced in this paper offers
this possibility, too. But, as a first contribution, it further allows to specify \emph{imprecise} types, like
$\Nat[\natit{5},\natit{8}]$, which stands for the type of those natural numbers between $5$ and $8$ (included).

A property of programs which is completely ignored by ordinary type systems is 
termination, at least if full recursion is in the underlying language. 
Typing a term $\termone$  with $\Nat\arr\Nat$ does not guarantee that $\termone$, when
applied to a natural number, terminates. In \PCF\ this is even worse: $\termone$ could possibly diverge \emph{itself}!
Consider, as another example, a slight modification of $\termdbl$, namely
$$
\termdiv=\REC{\funone}{\lambda\varone.\CASE{\varone}{\val{0}}{\suc(\suc(\funone(\varone)))}}.
$$
It behaves as $\termdbl$ when fed with $0$, but it diverges when it receives a positive natural
number as an argument. But look: $\termdiv$ is not so different from $\termdbl$. Indeed, the second can
be obtained from the first by feeding not $\varone$ but $\pred(\varone)$ to $\funone$. And any
type systems in which $\termdbl$ and $\termdiv$ are somehow recognized as being fundamentally different
must be able to detect the presence of $\pred$ in $\termdbl$ and deduct termination
from it. Indeed, sized types \cite{conf/csl/BartheGR08} and dependent types \cite{LICS01*231} are able to do so.

Going further, we could ask the type system to be able not only to guarantee termination, but also
to somehow evaluate the time or space consumption of programs. For example, we could be
interested in knowing that $\termdbl$ takes a polynomial number of steps to be evaluated on
any natural number. This cannot be achieved neither using classical type systems nor
using systems of sized types, at least when traditionally
formulated. 
However, some type systems able to control the complexity
of programs exist. Good examples are type systems
for amortized analysis \cite{HOAA_POPL10,HAH11}
or those using ideas from linear logic \cite{BaillotTerui,BaillotGaboardiMogbil09esop}.
In those type systems, typing judgements carry, 
besides the usual type information, some 
additional information about the resource consumption 
of the underlying program. As an example,
$\termdbl$ could be given a type as follows
$$
\vdash_{\itermone}\termdbl:\Nat\to\Nat
$$
where $\itermone$ is some cost information for $\termdbl$. 
This way, building a type derivation and inferring resource consumption can be done 
at the same time.
 
The type system \PCFld\ we propose in this paper 
makes some further steps in this direction. 
First of all, it combines some of the ideas presented above with the ones of
bounded linear logic.
\BLL\ allows one to explicitly count the number
of times functions
use their arguments (in rough notation, 
$!_{\natit{\nattwo}} \typeone\lin \typetwo$ says that the argument of type 
$\typeone$ is used $\nattwo$ times).
This permits to extract natural cost functions from type
  derivations. 
The cost of evaluating a term will be measured 
by counting how many times function arguments need to be copied 
during evaluation. 
Making this information explicit in types permits to compute the cost step by step during
the type derivation process. 
By the way, previous works by the first author~\cite{TOCL2009b} show that this way
of attributing a cost to (proofs seen as) programs is sound and precise as a way to measure
their time complexity.
Intuitively, typing judgements in \PCFld\ can be thought as:
$$
\vdash_{\itermtwo}\termone\ :\ !_{\natit{\nattwo}}\ \Nat[\ivarone]\lin\Nat[\itermone].
$$
where $\itermone$ and $\itermtwo$ can be derived while building a type derivation, exploiting 
the information carried by the modalities. In fact, the quantitative information in $!_{\natit{\nattwo}}$ 
allows to statically determine the number of times any subterm will 
be copied during evaluation. But this is not sufficient:
analogously to what happens in \BLL, \PCFld\ makes types more parametric.
A rough type as $!_{\natit{\natone}} \typeone\lin \typetwo$ is replaced by the 
more parametric type $\qlin{\ivarone<\natit{\natone}}{\typeone}{\typetwo}$, which tells us
that the argument will be used $\natone$ times, and each instance has type $\typeone$ 
\emph{where, however} the variable $\ivarone$ is instantiated with a value less than $\natone$.
This allows to type each copy of the argument differently but uniformly, since all 
instances of $\typeone$ have the same \PCF\ skeleton.
This form of \emph{uniform linear dependence} is actually crucial in obtaining
the result which makes \PCFld\ different from similar type systems, namely completeness.

Finally, as already stressed in the Introduction, \PCFld\ is also parametric in 
the class of functions (in the form of an equational program $\eqpone$) that can 
be used to reason about types and costs. This permits to tune the type system,
as described in Section~\ref{sec:type-checking} below.

Anticipating on the next section, we can say that $\termdbl$ 
can be typed as follows in \PCFld:
$$
\vdash_{\ivarone}^{\eqpone}\termdbl:\qbang{\ivartwo<\ivarone+\natit{1}}\Nat[\ivarone]\lin\Nat[\natit{2}\times\ivarone].
$$
This tells us that the argument will be used $\ivarone+1$ times by $\termdbl$, and 
that the cost of evaluation will be itself proportional to $\ivarone$. 
%%%%%%%%%%%%%%%%
\section{\PCFld}
%%%%%%%%%%%%%%%%
In this section, the language of programs and the type system \PCFld\ for it will be introduced formally.
Some of their basic properties will be described. The type system \PCFld\ is based 
on the notion of an index term whose semantics, in turn, is defined by an equational program.
As a consequence, all these notions must be properly introduced and are the subject of 
Section~\ref{sect:itep} below.
%%%%%%%%%%%%%%%%%%%%%%%%%%%%%%%%%%%%%%%%%%%%%%%%
\subsection{Index Terms and Equational Programs}\label{sect:itep}
%%%%%%%%%%%%%%%%%%%%%%%%%%%%%%%%%%%%%%%%%%%%%%%%
Syntactically, index terms are built either from function symbols from a given  signature or
by applying any of two special term constructs.

Formally, a \emph{signature} $\sigone$ is a pair $(\carone,\arone)$ where
$\carone$ is a finite set of \emph{function symbols}
and $\arone:\carone\rightarrow\NN$ assigns an \emph{arity}
to every function symbol. Index terms on a given signature $\sigone=(\carone,\arone)$ are generated by the following grammar:
\begin{align*}
  \itermone,\itermtwo,\itermthree &::= \ivarone\midd\funsymone(\itermone_1,\ldots,\itermone_{\arone(\funsymone)}) 
       \midd \displaystyle{\sum_{\ivarone< \itermone}\itermtwo}\midd \displaystyle{\Dfscomb{\ivarone}{\itermone}{\itermtwo}{\itermthree}},
\end{align*}
where $\funsymone\in\carone$ and $\ivarone$ is a variable drawn from a set $\varsetone$ of \emph{index
variables}. We assume the symbols $\natit{0}$, $\natit{1}$ (with arity $0$) and $+$, $\mnu$ (with arity $2$) are always
part of $\sigone$. An index term in the form $\sum_{\ivarone< \itermone}\itermtwo$
is a \emph{bounded sum}, while one in the form $\Dfscomb{\ivarone}{\itermone}{\itermtwo}{\itermthree}$
is a \emph{forest cardinality}. For every natural number $\natone$, the index term $\natit{\natone}$ is
just 
$$
\underbrace{\natit{1}+\natit{1}+\ldots+\natit{1}}_{\mbox{$\natone$ times}}.
$$
Index terms are meant to denote natural numbers, possibly depending on the (unknown) values of variables.
Variables can be instantiated with other index terms, e.g. $\sb{\itermone}{\ivarone}{\itermtwo}$. 
So, index terms can also act as first order functions. 
 What is the meaning of the function symbols
from $\sigone$? It is the one induced by  an equational program $\eqpone$.
Formally, an \emph{equational program} $\eqpone$ over a signature $\sigone$ and
a set of variables $\varsetone$ is a set of equations in the form
$\ftermone=\ftermtwo$ where both $\ftermone$ and $\ftermtwo$
are terms in the free algebra $\oterms{\sigone}{\varsetone}$ over $\sigone$ and $\varsetone$. 
We are interested in equational programs guaranteeing that, whenever
symbols in $\sigone$ are interpreted as partial functions
over $\NN$ and $\natit{0}$, $\natit{1}$, $+$ and $\mnu$ are interpreted in the
usual way, the semantics of any function symbol 
$\funsymone$ can be uniquely determined from $\eqpone$.
This can be guaranteed by, for example, taking $\eqpone$ as
an Herbrand-G\"odel scheme~\cite{Odifreddi} or as an orthogonal constructor
term rewriting system~\cite{BaaderNipkow}. 
One may wonder why the definition of index terms
is parametric on $\sigone$ and $\eqpone$. As we will see in Section~\ref{sec:type-checking},
being parametric this way allows us to tune our concrete type system from
a highly undecidable but truly powerful machinery down to a tractable but
less expressive formal system. An example of an equational program over
the signature $\sigone$ consisting of three function symbols
$\ifgtz$, $\ifadd$ and $\ifmult$ of arity two is the following sequence of equations:
\begin{align*}
\ifgtz(\natit{0},\ivartwo)&=\natit{0};\\
\ifgtz(\ivarone+\natit{1},\natit{0})&=\natit{1};\\
\ifgtz(\ivarone+\natit{1},\ivartwo+\natit{1})&=\ifgtz(\ivarone,\ivartwo);\\[1mm]
\ifadd(\natit{0},\ivartwo)&=\ivartwo;\\
\ifadd(\ivarone+\natit{1},\ivartwo)&=\ifadd(\ivarone,\ivartwo)+\natit{1};\\[1mm]
\ifmult(\natit{0},\ivartwo)&=\natit{0};\\
\ifmult(\ivarone+\natit{1},\ivartwo)&=\ifadd(\ivartwo,\ifmult(\ivarone,\ivartwo)).
\end{align*}

What about the meaning of bounded sums and forest cardinalities? The first is very intuitive:
the value of $\sum_{\ivarone< \itermone}\itermtwo$ is simply the sum of all
possible values of $\itermtwo$ with $\ivarone$ taking the values from $0$ up to
$\itermone$, excluded. Forest cardinalities, on the other hand, require some more effort to be described.
 Informally, 
$\Dfscomb{\ivarone}{\itermone}{\itermtwo}{\itermthree}$ is an
index term denoting the number of nodes in a forest composed of $\itermtwo$ trees described using $\itermthree$.
All the nodes in the forest are (uniquely) identified by natural numbers. These are obtained by consecutively 
visiting each tree in pre-order, starting from $\itermone$. The term $\itermthree$ has the role of 
describing the number of children of each forest node $\natone$ by properly instantiating the variable 
$\ivarone$, e.g the number of children of the root (of the leftmost tree in the forest) is $\sb{\itermthree}{\ivarone}{\natit{0}}$. 
More formally, the meaning of a forest cardinality is defined by the following two equations:
\begin{align}
  \Dfscomb{\ivarone}{\itermone}{\natit{0}}{\itermthree}&=\natit{0}\label{eqn:fcbase};\\
  \Dfscomb{\ivarone}{\itermone}{\itermtwo+\natit{1}}{\itermthree}&=
    \left(\Dfscomb{\ivarone}{\itermone}{\itermtwo}{\itermthree}\right)+\natit{1}+
    \left(\Dfscomb{\ivarone}{\itermone+\natit{1}+\Dfscomb{\ivarone}{\itermone}{\itermtwo}{\itermthree}}
            {\sb{\itermthree}{\ivarone}{\itermone+\Dfscomb{\ivarone}{\itermone}{\itermtwo}{\itermthree}}}{\itermthree}\right).\label{eqn:fcind}
\end{align}
Equation~(\ref{eqn:fcbase}) says that a forest of $0$ trees contains no nodes. Equation~(\ref{eqn:fcind})
tells us that a forest of $\itermtwo+1$ trees contains:
\begin{varitemize} 
\item
  the nodes in the first $\itermtwo$ trees;
\item
  and the nodes in the last tree, which are just one plus the nodes in the immediate 
  subtrees of the root, considered themselves as a forest.
\end{varitemize}
To better understand forest cardinalities, consider the following forest comprising two trees:
$$
\scalebox{0.8}{
\xymatrix{
 & &  0          &   \\
 & & 1\ar@{-}[u] &   \\
\empty&2\ar@{-}[ur] &5\ar@{-}[u] & 6\ar@{-}[ul]\\
3\ar@{-}[ur]& &  4\ar@{-}[ul]&7\ar@{-}[u]
\save "4,1"."3,4"*[F.]\frm{--}\restore 
}

\xymatrix{
 & & 8 &   \\
& 9\ar@{-}[ur] & & 11\ar@{-}[ul]\\
& 10\ar@{-}[u] & & 12\ar@{-}[u] \\
}
}
$$
and consider an index term $\itermthree$ with a free index variable $\ivarone$ such that 
$\sb{\itermthree}{\ivarone}{\natit{\natone}}=3$ for $\natone=1$;
$\sb{\itermthree}{\ivarone}{\natit{\natone}}=2$ for $\natone\in\{2,8\}$;
$\sb{\itermthree}{\ivarone}{\natit{\natone}}=1$ when
$\natone\in\{0,6,9,11\}$;
and $\sb{\itermthree}{\ivarone}{\natit{\natone}}=0$ when
$\natone\in\{3,4,7,10,12\}$. That is, $\itermthree$ describes the number of children of each node in the forest. Then 
$\Dfscomb{\ivarone}{\natit{0}}{\natit{2}}{\itermthree}=\natit{13}$ since it takes into account the entire forest;
$\Dfscomb{\ivarone}{\natit{0}}{\natit{1}}{\itermthree}=\natit{8}$ since it takes into account only the leftmost tree;
$\Dfscomb{\ivarone}{\natit{8}}{\natit{1}}{\itermthree}=\natit{5}$ since it takes into account only the second tree of the forest;
finally, $\Dfscomb{\ivarone}{\natit{2}}{\natit{3}}{\itermthree}=\natit{6}$ since it takes into account only the three trees 
(as a forest) in the dashed rectangle.

One may wonder what is the role of forest cardinalities in the type system. Actually, 
they play a crucial role in the treatment of recursive calls, where the unfolding of recursion
produces a tree-like structure whose size is just the number of times the (recursively
defined) function will be used \emph{globally}. 
Note that the value of a forest cardinality could also be undefined. 
For instance, this happens when infinite trees, corresponding to 
diverging recursive computations, are considered.

The expression $\semt{\itermone}{\assone}{\eqpone}$ denotes the meaning of $\itermone$,
defined by induction along the lines of the previous discussion, 
where $\assone:\varsetone\arr\NN$ is an assignment and
$\eqpone$ is an equational program giving meaning to the function symbols in $\itermone$.
Since $\eqpone$ does not necessarily interpret such symbols as \emph{total} functions, and moreover,
the value of a forest cardinality can be undefined,
$\semt{\itermone}{\assone}{\eqpone}$ can be undefined itself. A \emph{constraint} is an
inequality in the form $\itermone\leq\itermtwo$. A constraint is \emph{true} in an
assignment $\assone$ if $\semt{\itermone}{\assone}{\eqpone}$ and
$\semt{\itermtwo}{\assone}{\eqpone}$ are both defined and the first is smaller
or equal to the latter. 
Now, for a subset $\icontextone$ of $\varsetone$, and for 
a set $\iconstraintone$ of constraints involving variables in $\icontextone$,
the expression
\begin{equation}\label{eqn:sem}
\icontextone;\iconstraintone\models^\eqpone\itermone\leq\itermtwo
\end{equation}
denotes the fact that the truth of $\itermone\leq\itermtwo$ semantically follows
from the truth of the constraints in $\iconstraintone$. 
The expression $\icontextone;\iconstraintone\models^\eqpone\itermone\geq\natit{0}$ indicates
that (the semantics of) $\itermone$ is \emph{defined} for the relevant values of the
variables in $\icontextone$; this is usually written as 
$\icontextone;\iconstraintone\models^\eqpone\conv{\itermone}$. 

Similarly, one can define the meaning of expressions like $\icontextone;\iconstraintone\models^\eqpone\itermone=\itermtwo$
or $\icontextone;\iconstraintone\models^\eqpone\itermone\kleq\itermtwo$, the latter standing for
the equality of $\itermone$ and $\itermtwo$ in the sense of Kleene,
i.e. 
$\icontextone;\iconstraintone\models^\eqpone\conv{\itermone}$ if and
only if 
$\icontextone;\iconstraintone\models^\eqpone\conv{\itermtwo}$, and
if  $\icontextone;\iconstraintone\models^\eqpone\conv{\itermone}$ then 
$\icontextone;\iconstraintone\models^\eqpone\itermone=\itermtwo$. 
When both
$\icontextone$ and $\iconstraintone$ are empty, such expressions can be written in a much more concise
form, e.g. $\itermone\kleq\itermtwo$ stands for $\emcon;\emcon\models^\eqpone\itermone\kleq\itermtwo$.

The following two lemmas about forest cardinalities are useful, and will be crucial when proving the Substitution Lemma.
\begin{lem}\label{lemma:dfsshift}
For every index terms $\itermone,\itermtwo,\itermthree,\itermfour$, we have:
$$
\Dfscomb{\ivarone}{\itermone+\itermtwo}{\itermthree}{\itermfour}\kleq\Dfscomb{\ivarone}{\itermtwo}{\itermthree}{\sb{\itermfour}{\ivarone}{\ivarone+\itermone}}.
$$
\end{lem}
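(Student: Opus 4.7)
The plan is to translate the syntactic claim into a semantic statement about the forest-cardinality function $F:\NN\times\NN\times(\NN\rightharpoonup\NN)\rightharpoonup\NN$ defined by equations~(\ref{eqn:fcbase})--(\ref{eqn:fcind}), viewed as the least fixed point of these equations, and then to prove that semantic statement by induction on the stages of its fixed-point approximation. This avoids the trap of trying to induct directly on the syntactic structure (or semantic value) of $\itermthree$, which does not close, as explained below.

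Fix an arbitrary assignment $\assone$; let $I=\semt{\itermone}{\assone}{\eqpone}$, $T=\semt{\itermtwo}{\assone}{\eqpone}$, $m=\semt{\itermthree}{\assone}{\eqpone}$, and let $g:\NN\rightharpoonup\NN$ be the children-function $g(k):=\semt{\itermfour}{\assone[k/\ivarone]}{\eqpone}$. Under the standard freshness convention that $\ivarone$ does not occur free in $\itermone$, a routine substitution identity gives
\[
\semt{\sb{\itermfour}{\ivarone}{\ivarone+\itermone}}{\assone[k/\ivarone]}{\eqpone}\;=\;\semt{\itermfour}{\assone[(k+I)/\ivarone]}{\eqpone}\;=\;g(k+I).
\]
Writing $g_I(k):=g(k+I)$, the claim then reduces to the semantic equality $F(I+T,m,g)\kleq F(T,m,g_I)$; intuitively, the two sides count the nodes of the same forest with all labels shifted by $I$.

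To prove this semantic claim, let $F_n$ denote the $n$-th fixed-point approximant, with $F_0$ everywhere undefined and $F_{n+1}$ obtained by applying equations~(\ref{eqn:fcbase})--(\ref{eqn:fcind}) once with $F_n$ in the right-hand sides. By induction on $n$, I would establish the strengthened statement $F_n(i+t,k,h)\kleq F_n(t,k,h_i)$ for all naturals $i,t,k$ and all partial $h:\NN\rightharpoonup\NN$. The case $k=0$ is immediate from equation~(\ref{eqn:fcbase}). For $k=k'+1$, equation~(\ref{eqn:fcind}) unfolds both sides into three summands: the first summands agree by the IH applied to $(i,t,k',h)$, producing a common value $C$; the middle summands are trivially $\natit{1}$; and the IH applied to $(i,\,t+\natit{1}+C,\,h(i+t+C),\,h)$ handles the third summands, using the identity $h_i(t+C)=h(i+t+C)$ to align the children-count parameter. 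Taking the supremum over $n$ recovers $F$ and yields the desired Kleene equality.

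The main obstacle is that direct induction on $m$ fails: the third summand of equation~(\ref{eqn:fcind}) contains a forest of $h(i+t+C)$ trees, which need not be bounded by $m$, so the recursion is not primitive recursive in the number of trees. Passing through the approximants $F_n$ is precisely what circumvents this, since within one inductive stage the IH applies at arbitrary parameters. What remains is a small substitution calculation showing that $\sb{(\sb{\itermfour}{\ivarone}{\ivarone+\itermone})}{\ivarone}{\itermtwo+C}$ coincides with $\sb{\itermfour}{\ivarone}{\itermone+\itermtwo+C}$, which is routine from the composition of substitutions.
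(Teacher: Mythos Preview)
Your proof is correct and takes a route that is closely related to, but more explicit than, the paper's own argument. The paper proceeds by what it calls ``coinduction on the definition'': it unfolds both sides once via equations~(\ref{eqn:fcbase})--(\ref{eqn:fcind}), observes that the two unfoldings have exactly the same shape with the recursive subcalls again being instances of the lemma (at shifted parameters), and stops there. In other words, the paper shows that the two expressions satisfy the same defining equations, and implicitly appeals to uniqueness of the semantics to conclude Kleene equality.

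Your approach makes this implicit step explicit by passing through the Kleene approximants $F_n$ and doing ordinary induction on $n$. This is the standard way to justify such ``both sides satisfy the same recursion'' arguments when the semantics is a least fixed point. Your observation that na\"ive induction on the tree-count $m$ fails---because the third summand of equation~(\ref{eqn:fcind}) recurses with a new tree-count $h(i+t+C)$ that need not be smaller than $m$---is exactly the point, and it is why the paper reaches for coinduction rather than induction on $\itermthree$. What the paper's phrasing buys is brevity; what yours buys is a self-contained argument that does not rely on the reader supplying a coinduction or uniqueness principle for the forest-cardinality operator. Both routes land on the same two applications of the hypothesis (once for the first summand, once for the third with the shifted start), and the small substitution identity $\sb{(\sb{\itermfour}{\ivarone}{\ivarone+\itermone})}{\ivarone}{\itermtwo+C}\equiv\sb{\itermfour}{\ivarone}{\itermone+\itermtwo+C}$ that you single out is also the one the paper uses when matching the children-count parameter in the third summand.
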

\begin{proof}
The proof is  by coinduction on the definition of $\Dfscomb{\ivarone}{\itermone+\itermtwo}{\itermthree}{\itermfour}$ by distinguishing the cases for the different values of $\itermthree$. For $\itermthree\kleq \natit{0}$ we have both:
$$
\Dfscomb{\ivarone}{\itermone+\itermtwo}{\natit{0}}{\itermfour}\kleq \natit{0};\qquad
\Dfscomb{\ivarone}{\itermtwo}{\natit{0}}{\sb{\itermfour}{\ivarone}{\ivarone+\itermone}}\kleq \natit{0}.
$$
For  $\itermthree\kleq\itermfive+\natit{1}$ we have:
$$
\Dfscomb{\ivarone}{\itermone+\itermtwo}{\itermfive+\natit{1}}{\itermfour}\kleq \Dfscomb{\ivarone}{\itermone+\itermtwo}{\itermfive}{\itermfour}+
   \natit{1}+\Dfscomb{\ivarone}{\itermone+\itermtwo+\natit{1}+\Dfscomb{\ivarone}{\itermone+\itermtwo}{\itermfive}{\itermfour}}
       {\sb{\itermfour}{\ivarone}{\itermone+\itermtwo+\Dfscomb{\ivarone}{\itermone+\itermtwo}{\itermfive}{\itermfour}}}{\itermfour},
$$
and analogously
$$
\Dfscomb{\ivarone}{\itermtwo}{\itermfive+\natit{1}}{\sb{\itermfour}{\ivarone}{\ivarone+\itermone}}\kleq
   \Dfscomb{\ivarone}{\itermtwo}{\itermfive}{\sb{\itermfour}{\ivarone}{\ivarone+\itermone}}+\natit{1}+
   \Dfscomb{\ivarone}{\itermtwo+1+\Dfscomb{\ivarone}{\itermtwo}{\itermfive}{\sb{\itermfour}{\ivarone}{\ivarone+\itermone}}}
       {\sb{\itermfour}{\ivarone}{\itermone+\itermtwo+\Dfscomb{\ivarone}{\itermtwo}{\itermfive}{\sb{\itermfour}{\ivarone}
             {\ivarone+\itermone}}}}{\sb{\itermfour}{\ivarone}{\ivarone+\itermone}}.
$$
This concludes the proof.
\end{proof}

\begin{lem}\label{lemma:dfssum}
For every index term of the shape $\Dfscomb{\ivarone}{\natit{1}}{\itermtwo}{\itermone}$ we have:
$$
\Dfscomb{\ivarone}{\natit{1}}{\itermtwo}{\itermone}\kleq\sum_{\ivartwo<\itermtwo}
\Dfscomb{\ivarone}{\natit{0}}{\natit{1}}{\sb{\itermone}{\ivarone}{\ivarone+\natit{1}+\Dfscomb{\ivarone}{\natit{1}}{\ivartwo}{\itermone}}}.
$$
\end{lem}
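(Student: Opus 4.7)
The plan is to proceed by coinduction on the structure of the left-hand side, with case analysis on $\itermtwo$, in the same style as the proof of Lemma~\ref{lemma:dfsshift}. If $\itermtwo\kleq\natit{0}$, then equation~(\ref{eqn:fcbase}) collapses the left-hand side to $\natit{0}$, and the right-hand side is an empty sum, also equal to $\natit{0}$; the base case is immediate.

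For the inductive step, suppose $\itermtwo\kleq\itermfive+\natit{1}$ and abbreviate $P := \Dfscomb{\ivarone}{\natit{1}}{\itermfive}{\itermone}$. Unfolding the left-hand side by equation~(\ref{eqn:fcind}) yields
$$\Dfscomb{\ivarone}{\natit{1}}{\itermfive+\natit{1}}{\itermone}\kleq P+\natit{1}+\Dfscomb{\ivarone}{\natit{1}+\natit{1}+P}{\sb{\itermone}{\ivarone}{\natit{1}+P}}{\itermone}.$$
On the right of the goal, the bounded sum splits into the range $\ivartwo<\itermfive$ plus the single $\ivartwo=\itermfive$ summand, which is $\Dfscomb{\ivarone}{\natit{0}}{\natit{1}}{\sb{\itermone}{\ivarone}{\ivarone+\natit{1}+P}}$. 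Applying the coinductive hypothesis identifies the first chunk with $P$, so the remaining obligation is
$$\natit{1}+\Dfscomb{\ivarone}{\natit{1}+\natit{1}+P}{\sb{\itermone}{\ivarone}{\natit{1}+P}}{\itermone}\kleq\Dfscomb{\ivarone}{\natit{0}}{\natit{1}}{\sb{\itermone}{\ivarone}{\ivarone+\natit{1}+P}}.$$

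A single unfolding of the right-hand side via equation~(\ref{eqn:fcind}) cancels the leading $\natit{1}$ and, after simplifying the inner substitution $\sb{(\sb{\itermone}{\ivarone}{\ivarone+\natit{1}+P})}{\ivarone}{\natit{0}}\kleq\sb{\itermone}{\ivarone}{\natit{1}+P}$, reduces the goal to exactly the instance of Lemma~\ref{lemma:dfsshift} obtained by taking shift $\natit{1}+P$, starting index $\natit{1}$, tree-count $\sb{\itermone}{\ivarone}{\natit{1}+P}$, and child-description $\itermone$. The main source of friction is not conceptual but notational: one must be meticulous in propagating the nested substitutions applied to the child-description $\itermone$ and verify that the shift produced by Lemma~\ref{lemma:dfsshift} aligns with the one created by unfolding the inner forest cardinality on the right, which is essentially an exercise in composing substitutions.
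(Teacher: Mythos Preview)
Your proof is correct and follows essentially the same route as the paper: coinduction on $\itermtwo$, the same unfolding of both sides in the successor case, and the same appeal to Lemma~\ref{lemma:dfsshift} (with shift $\natit{1}+P$) to close the remaining obligation. Your abbreviation $P$ is the paper's $\itermthree$, and the intermediate equalities line up exactly.
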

\begin{proof}
The proof is  by coinduction on the definition of $\Dfscomb{\ivarone}{\natit{1}}{\itermtwo}{\itermone}$ by distinguishing the 
cases for the different values of $\itermtwo$. For $\itermtwo\kleq\natit{0}$, we have both:
$$
\Dfscomb{\ivarone}{\natit{1}}{\natit{0}}{\itermone}\kleq \natit{0};
\qquad
\sum_{\ivartwo<\natit{0}}\Dfscomb{\ivarone}{\natit{0}}{\natit{1}}{\sb{\itermone}{\ivarone}{\ivarone+\natit{1}+\Dfscomb{\ivarone}{\natit{1}}{\ivartwo}{\itermone}}}\kleq \natit{0}.
$$
For $\itermtwo\kleq\itermfive+\natit{1}$ we have
$$
\Dfscomb{\ivarone}{\natit{1}}{\itermfive+\natit{1}}{\itermone}\kleq\itermthree+\natit{1}+\Dfscomb{\ivarone}{\itermthree+\natit{2}}{\sb{\itermone}{\ivarone}
  {\itermthree+\natit{1}}}{\itermone}
$$
and 
$$
\sum_{\ivartwo<\itermfive+\natit{1}}\Dfscomb{\ivarone}{\natit{0}}{\natit{1}}{\sb{\itermone}{\ivarone}{\ivarone+\natit{1}+\Dfscomb{\ivarone}{\natit{1}}{\ivartwo}{\itermone}}}
  \kleq\itermfour+\Dfscomb{\ivarone}{\natit{0}}{\natit{1}}{\sb{\itermone}{\ivarone}{\ivarone+\natit{1}+\Dfscomb{\ivarone}{\natit{1}}{\itermfive}{\itermone}}},
$$
where $\itermthree$ is $\Dfscomb{\ivarone}{\natit{1}}{\itermfive}{\itermone}$ 
and $\itermfour$ is $\sum_{\ivartwo<\itermfive}\Dfscomb{\ivarone}{\natit{0}}{\natit{1}}{\sb{\itermone}{\ivarone}
{\ivarone+\natit{1}+\Dfscomb{\ivarone}{\natit{1}}{\ivartwo}{\itermone}}}$.
Now, by definition and by Lemma~\ref{lemma:dfsshift}, we have 
$$
  \Dfscomb{\ivarone}{\natit{0}}{\natit{1}}{\sb{\itermone}{\ivarone}{\ivarone+\natit{1}+\Dfscomb{\ivarone}{\natit{1}}{\itermfive}{\itermone}}}\kleq
    \natit{1}+\Dfscomb{\ivarone}{\natit{1}}{\sb{\itermone}{\ivarone}{\itermthree+\natit{1}}}{\sb{\itermone}{\ivarone}{\ivarone+\natit{1}+\itermthree}}
  \kleq \natit{1}+\Dfscomb{\ivarone}{\itermthree+\natit{2}}{\sb{\itermone}{\ivarone}{\itermthree+\natit{1}}}{\itermone}.
$$
This concludes the proof.
\end{proof}

Before embarking in the description of the type system, 
a further remark on the role of
index terms could be useful. 
Index terms are not meant to be part
of \emph{programs} but of \emph{types}. As a consequence, computation will not be
carried out on index terms but on proper terms, which are the subject of 
Section~\ref{sect:tysys} below.
%%%%%%%%%%%%%%%%%%%%%%%%%%%%%%
\subsection{The Type System}\label{sect:tysys}
%%%%%%%%%%%%%%%%%%%%%%%%%%%%%%
Terms are generated by the following grammar:
\begin{align*}
\termone ::=& \varone \midd \val{n} \midd \suc(\termone)  \midd \pred(\termone) \midd \lambda \varone.\termone\midd\termone\termtwo\\
          &\midd\CASE{\termone}{\termtwo}{\termthree} \midd\REC{\varone}{\termone}
\end{align*}
where $\tt n$ ranges over natural numbers and $\varone$ ranges over a set of \emph{variables}.
As usual, terms which are equal modulo $\alpha$-conversion are considered equal. This, in turn,
allows to define the notion of substitution in the standard way. The set of \emph{head subterms} of any term 
$\termone$ can be defined easily by induction on the structure of $\termone$,
e.g. the head subterms of $\termone=\termtwo\termthree$ are $\termone$ itself and the head subterms of $\termtwo$
(but not those of $\termthree$).

A notion of \emph{size} $\size{\termone}$ 
for a term $\termone$ will be useful in the sequel. This can be defined as follows:
\begin{align*}
\size{\varone}&=1;&\size{\lambda \varone.\termone}&=\size{\termone}+1;\\
\size{\val{\natone}}&=1;&\size{\termone\termtwo}&=\size{\termone}+\size{\termtwo}+1;\\ 
\size{\suc(\termone)}&=\size{\termone}+2;&\size{\CASE{\termone}{\termtwo}{\termthree}}&=\size{\termone}+\size{\termtwo}+\size{\termthree}+1;\\
\size{\pred(\termone)}&=\size{\termone}+2;&\size{\REC{\varone}{\termone}}&=\size{\termone}+1.
\end{align*}
Notice that for technical reasons size is defined in a slightly nonstandard way: every
integer constant has size $1$.
\begin{lem}
If $\termone$ is a term and $\termtwo$ is a subterm of $\termone$, then $\size{\termtwo}\leq\size{\termone}$.
\end{lem}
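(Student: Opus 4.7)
The proof goes by straightforward structural induction on $\termone$. The base cases are $\termone = \varone$ and $\termone = \val{\natone}$: in both situations $\termone$ is its only subterm, so if $\termtwo$ is a subterm of $\termone$ then $\termtwo = \termone$ and the inequality $\size{\termtwo} \leq \size{\termone}$ is immediate.

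For the inductive step, I would treat each compound form uniformly by observing that the definition of $\size{\cdot}$ is arranged so that $\size{\termone}$ strictly exceeds the sum of the sizes of the immediate subcomponents of $\termone$. Concretely, for $\termone = \suc(\termthree)$ or $\termone = \pred(\termthree)$ we have $\size{\termone} = \size{\termthree}+2 > \size{\termthree}$; for $\termone = \lambda\varone.\termthree$ or $\termone = \REC{\varone}{\termthree}$ we have $\size{\termone} = \size{\termthree}+1 > \size{\termthree}$; for $\termone = \termthree\termfour$ we have $\size{\termone} = \size{\termthree}+\size{\termfour}+1 > \size{\termthree}, \size{\termfour}$; and analogously for $\CASE{\termthree}{\termfour}{\termfive}$. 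In each case, any subterm $\termtwo$ of $\termone$ is either $\termone$ itself (where the conclusion is trivial) or a subterm of one of the immediate subcomponents, say $\termthree$; the induction hypothesis gives $\size{\termtwo} \leq \size{\termthree}$, and the strict inequality just noted yields $\size{\termtwo} \leq \size{\termone}$.

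There is no real obstacle here: the statement is essentially a sanity check on the size definition, and the only subtlety worth flagging is that the notion of ``subterm of $\termone$'' is the one induced by the head-subterm construction hinted at in the preceding paragraph of the paper, extended through all immediate subcomponents (not only through the head). Once this is fixed, the induction is completely routine and each case reduces to the strict arithmetical inequality built into the definition of $\size{\cdot}$.
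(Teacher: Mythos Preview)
Your structural induction is correct and is exactly the natural argument; the paper in fact states this lemma without proof, treating it as immediate from the definition of $\size{\cdot}$. One tiny remark: when you write $\size{\termthree\termfour}=\size{\termthree}+\size{\termfour}+1>\size{\termthree},\size{\termfour}$, this implicitly uses that every term has size at least $1$, which follows trivially from the size definition but is worth making explicit. Your closing comment about head subterms is unnecessary here: the lemma concerns the ordinary notion of subterm, and your induction already handles it.
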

Terms can be typed by a well-known type system called \PCF.
Types are those generated by the basic type $\Nat$ and the binary type constructor $\arr$. Typing
rules are in Figure~\ref{fig:pcf}. 
\begin{figure*}
\fbox{
\begin{minipage}[c]{.97\textwidth}
\begin{center}
$$
\begin{array}{ccccc}
\infer
  {\tcontextone,\varone:\typeone\vdash\varone:\typeone}{}
&
\hspace{5pt}
&
\infer
  {\tcontextone \vdash\lambda \varone.\termone:\typeone\arr\typetwo}
  {\tcontextone,\varone:\typeone\vdash\termone:\typetwo}
&
\hspace{5pt}
&
\infer%[(Ap)]
  {\tcontextone\vdash\termone\termtwo:\typetwo}
  {\tcontextone\vdash\termone:\typeone\arr\typetwo & \tcontextone\vdash\termtwo:\typeone}
\end{array}
$$
$$
\begin{array}{ccccc}
\infer%[(n)]
  {\tcontextone\vdash\val{n}:\Nat}{}
&
\hspace{5pt}
&
\infer%[(s)]
  {\tcontextone\vdash\suc(\termone):\Nat}
  {\tcontextone\vdash\termone:\Nat}
&
\hspace{5pt}
&
\infer
  {\tcontextone\vdash\pred(\termone):\Nat}
  {\tcontextone\vdash\termone:\Nat}
\end{array}
$$
$$
\begin{array}{ccc}
\infer
  {\tcontextone\vdash\CASE{\termone}{\termtwo}{\termthree}:\typeone}
  {
      \tcontextone\vdash\termone: \Nat
      &
      \tcontextone\vdash\termtwo: \typeone
      &
      \tcontextone\vdash\termthree: \typeone
  }
&
\hspace{5pt}
&
\infer
  {\tcontextone\vdash\REC{\varone}{\termone}:\typeone}   
  {\tcontextone,\varone:\typeone\vdash\termone:\typeone}
\end{array}
$$
\vspace{0pt}  
\end{center}
\end{minipage}}
\caption{The \PCF\ Type System.}\label{fig:pcf}
\end{figure*}
A notion of weak-head reduction $\to$ can be easily defined: see Figure~\ref{fig:whred}.
\begin{figure*}
\fbox{
\begin{minipage}[c]{.97\textwidth}
\begin{center}
$$
\begin{array}{ccccccc}
\infer
  {(\lambda\varone.\termone)\termtwo\to\sb{\termone}{\varone}{\termtwo}}{}
&
\hspace{5pt}
&
\infer
  {\suc(\val{\natone})\to \val{\natone+1}}{}
&
\hspace{5pt}
&
\infer
  {\pred(\val{\natone+1})\to \val{\natone}}{}
&
\hspace{5pt}
&
\infer
  {\pred(\val{0})\to \val{0}}{}
\end{array}
$$
$$
\begin{array}{ccccc}
\infer
  {\CASE{\val{0}}{\termtwo}{\termthree} \to \termtwo}{}
&
\hspace{5pt}
&
\infer
  {\CASE{\val{n+1}}{\termtwo}{\termthree} \to \termthree}{}
\end{array}
$$
$$
\begin{array}{ccccccc}
\infer
  {\REC{\varone}{\termone} \to \sb{\termone}{\varone}{\REC{\varone}{\termone}}}{}
&
\hspace{5pt}
&
\infer
  {\suc(\termone)\to\suc(\termtwo)}
  {\termone\to\termtwo}
&
\hspace{5pt}
&
\infer
  {\pred(\termone)\to\pred(\termtwo)}
  {\termone\to\termtwo}
&
\hspace{5pt}
&
\infer
  {\termone\termtwo\to\termthree\termtwo}
  {\termone\to\termthree}
\end{array}
$$
$$
\infer
  {\CASE{\termone}{\termtwo}{\termthree}\to\CASE{\termfour}{\termtwo}{\termthree}}
  {\termone\to\termfour}
$$
\end{center}
\vspace{3pt}
\end{minipage}}
\caption{Weak-head Reduction}\label{fig:whred}
\end{figure*}
A term $\termone$ is said to be a \emph{program} if it can be given the \PCF\ type $\Nat$ in the empty context. 

Almost all the definitions about \PCFld\ in this and the next sections should be understood as parametric on an 
equational program $\eqpone$ over a signature $\sigone$. For the sake of simplicity, however, we will often 
avoid to explicitly mention $\eqpone$ and leave it implicit.

\PCFld\ can be seen as a refinement of \PCF\ obtained by a linear decoration of its type derivations. 
Basic and modal types are defined as follows:
\begin{align*}
\typeone,\typetwo &::= \Nat[\itermone,\itermtwo] \midd\mtypeone\lin\typeone; &\mbox{basic types}\\
\mtypeone,\mtypetwo &::= \qbang{\ivarone<\itermone}{\typeone}; &\mbox{modal types}
\end{align*}
where $\itermone,\itermtwo$ range over index terms and $\ivarone$ ranges over index variables.
$\Nat[\itermone]$ is syntactic sugar for $\Nat[\itermone,\itermone]$.
%In general we use $\gtypeone,\gtypetwo,\gtypethree$ to denote either a basic type or a modal type.
Modal types need some comments. As a first approximation, they can be thought of as 
quantifiers over type variables.
So, a type like $\mtypeone=\qbang{\ivarone<\itermone}{\typeone}$ acts as a binder for the 
index variable $\ivarone$ in the basic type $\typeone$. 
Moreover, the condition $\ivarone<\itermone$ says that 
$\mtypeone$ consists of all the instances of the basic type $\typeone$ where the variable $\ivarone$ 
is successively instantiated with the values from $\natit{0}$ to (the value of) $\itermone\mnu\natit{1}$, 
i.e. $\sb{\typeone}{\ivarone}{\natit{0}},\ldots,\sb{\typeone}{\ivarone}{\itermone\mnu\natit{1}}$ .
For those readers who are familiar with linear logic, 
and in particular with \BLL, the modal type
 $\qbang{\ivarone<\itermone}{\typeone}$ is a generalization of the \BLL\ formula
$!_{\ivarone< p}\typeone$ to arbitrary index terms. As such it can be thought of as representing the type 
$\sb{\typeone}{\ivarone}{\natit{0}}\otimes\cdots\otimes\sb{\typeone}{\ivarone}{\itermone\mnu\natit{1}}$.
In analogy to what happens in the standard linear logic decomposition of the 
intuitionistic arrow, i.e. $!A\lin B=A\Rightarrow B$, it is sufficient to restrict
to modal types appearing in negative position.% , similarly to what happens in \DLAL~\cite{BaillotTerui}.
Finally, for those readers with some knowledge of \DML, modal types
are in a way similar to \DML's subset sort constructions~\cite{DBLP:journals/jfp/Xi07}.

We always assume that index terms appearing inside types are defined for all the
relevant values of the variables in $\icontextone$. This is captured by 
the judgement $\icontextone;\iconstraintone\vdash^\eqpone\conv{\typeone}$, whose
rules are in Figure~\ref{fig:wdtypes}.
\begin{figure*}
\fbox{
\begin{minipage}[c]{.97\textwidth}
\begin{center}
$$
\begin{array}{ccc} 
\infer[\nattdef]
  {\icontextone;\iconstraintone\vdash^{\eqpone}\conv{\Nat[\itermone,\itermtwo]}}
  {
    \begin{array}{c}
      \icontextone;\iconstraintone\models^\eqpone\conv{\itermone}\\
      \icontextone;\iconstraintone\models^\eqpone\conv{\itermtwo}\\
    \end{array}
  }
&
\hspace{10pt}
&
\infer[\lintdef]
  {\icontextone;\iconstraintone\vdash^{\eqpone}\conv{\mtypeone\lin\typeone}}
  {
    \begin{array}{c}
      \icontextone;\iconstraintone\vdash^{\eqpone}\conv{\mtypeone}\\
      \icontextone;\iconstraintone\vdash^{\eqpone}\conv{\typeone}
    \end{array}
  }
\end{array}
$$
\vspace{5pt}
$$
\infer[\qbtdef]
  {\icontextone;\iconstraintone\vdash^{\eqpone}\conv{\qbang{\ivarone<\itermone}{\typeone}}}
  {
    \begin{array}{c}
      \icontextone,\ivarone;\iconstraintone,\ivarone<\itermone\vdash^{\eqpone}\conv{\typeone}\\
      \icontextone;\iconstraintone\models^\eqpone\conv{\itermone}\\
    \end{array}
  }
$$
\vspace{0pt}
\end{center}
\end{minipage}}
\caption{Well-defined Types}\label{fig:wdtypes}
\end{figure*}

In the typing rules, modal types need to be manipulated in an algebraic way. 
For this reason, two operations on modal types need to be introduced.
The first one is a binary operation $\mtsums$ on modal types. Suppose
that $\mtypeone=\qbang{\ivarone<\itermone}{\sb{\typethree}{\ivarthree}{\ivarone}}$
and that $\mtypetwo=\qbang{\ivartwo<\itermtwo}{\sb{\typethree}{\ivarthree}{\itermone+\ivartwo}}$.
In other words, $\mtypeone$ consists of the first $\itermone$ instances of $\typethree$,
i.e. $\sb{\typethree}{\ivarthree}{\natit{0}},\ldots,\sb{\typethree}{\ivarthree}{\itermone\mnu\natit{1}}$
while $\mtypetwo$ consists of the next $\itermtwo$ instances of $\typethree$,
i.e. $\sb{\typethree}{\ivarthree}{\itermone+\natit{0}},\ldots,\sb{\typethree}{\ivarthree}{\itermone+\itermtwo\mnu\natit{1}}$. 
Their \emph{sum} $\mtsum{\mtypeone}{\mtypetwo}$ is naturally defined
as a modal type consisting of the first $\itermone+\itermtwo$ instances of $\typethree$, i.e.
$\qbang{\ivarthree<\itermone+\itermtwo}{\typethree}$.
An operation of bounded sum on modal types can be defined by generalizing the idea above.
Suppose that $\mtypeone=\qbang{\ivartwo< \itermtwo}{\sb{\typeone}{\ivarthree}{\sum_{\ivarfour<\ivarone}
\sb{\itermtwo}{\ivarone}{\ivarfour}+\ivartwo}}$.
Then its \emph{bounded sum} $\sum_{\ivarone<\itermone}\mtypeone$
is $\qbang{\ivarthree< \sum_{\ivarone< \itermone}\itermtwo }{\typeone}$.

To every type $\typeone$ corresponds a type $\TtoNDT{\typeone}$ of ordinary \PCF, namely a type
built from the basic type $\Nat$ and the arrow operator $\arr$:
\begin{align*}
\TtoNDT{\Nat[\itermone,\itermtwo]}&=\Nat;\\
\TtoNDT{\qbang{\ivarone<\itermone}{\typeone}\lin\typetwo}&=\TtoNDT{\typeone}\arr\TtoNDT{\typetwo}.
\end{align*}

Central to \PCFld\ is the notion of subtyping.
An inequality relation $\tless$ between (basic and modal) types can be defined by way of the formal system
in Figure~\ref{fig:tsub}. This relation corresponds to lifting index inequalities at the type level.
\begin{figure*}
\fbox{
\begin{minipage}[c]{.97\textwidth}
\begin{center}
$$
\begin{array}{ccccc} 
\infer[\natleq]
  {\icontextone;\iconstraintone\vdash^{\eqpone}\Nat[\itermone,\itermtwo]\tless\Nat[\itermthree,\itermfour]}
  {
    \begin{array}{c}
      \icontextone;\iconstraintone\models^\eqpone\itermthree\leq\itermone\\
      \icontextone;\iconstraintone\models^\eqpone\itermtwo\leq\itermfour\\
    \end{array}
  }
&
\hspace{0pt}
&
\infer[\linleq]
  {\icontextone;\iconstraintone\vdash^{\eqpone}\mtypeone\lin\typeone\tless\mtypetwo\lin\typetwo}
  {
    \begin{array}{c}
      \icontextone;\iconstraintone\vdash^{\eqpone}\mtypetwo\tless\mtypeone\\
      \icontextone;\iconstraintone\vdash^{\eqpone}\typeone\tless\typetwo
    \end{array}
  }
&
\end{array}
$$
$$
\infer[\qbleq]
  {\icontextone;\iconstraintone\vdash^{\eqpone}\qbang{\ivarone<\itermone}{\typeone}\tless\qbang{\ivarone<\itermtwo}{\typetwo}}
  {
    \begin{array}{c}
      \icontextone,\ivarone;\iconstraintone,\ivarone<\itermone\vdash^{\eqpone}\typeone\tless\typetwo\\
      \icontextone;\iconstraintone\models^\eqpone\itermtwo\leq\itermone\\
    \end{array}
  }
$$
\end{center}
\vspace{2pt}
\end{minipage}}
\caption{The Subtyping Relation}\label{fig:tsub}
\end{figure*}
The equivalence $\icontextone;\iconstraintone\vdash\typeone\teq\typetwo$ holds
when both $\icontextone;\iconstraintone\vdash\typeone\tless\typetwo$
and $\icontextone;\iconstraintone\vdash\typetwo\tless\typeone$
can be derived from the rules in Figure~\ref{fig:tsub}. $\icontextone;\iconstraintone\vdash\conv{\typeone}$
is syntactic sugar for $\icontextone;\iconstraintone\vdash\typeone\tless\typeone$.

It is now time to introduce the main object of this paper, namely the type system
\PCFld. \emph{Typing judgements} of \PCFld\ are expressions in the form
\begin{equation}\label{eq:tj}
\icontextone;\iconstraintone;\tcontextone\vdash_{\itermone}^{\eqpone}\termone:\typeone,
\end{equation}
where $\tcontextone$ is a \emph{typing context}, that is, 
a set of term variable assignments of the shape $\varone:\mtypeone$ where each variable $\varone$ occurs at most once. 
The expression (\ref{eq:tj}) can be informally read as follows: for every
values of the index variables in $\icontextone$ satisfying $\iconstraintone$,
$\termone$ can be given type $\typeone$ and \emph{cost} $\itermone$ once its free term variables have types
as in $\tcontextone$. In proving this, equations from $\eqpone$ can play a role.

Typing rules are in Figure~\ref{fig:trule}, where binary and bounded sums are used in their
natural generalization to contexts. A \emph{type derivation} is nothing more than a tree built
according to typing rules. A \emph{precise type derivation} is a type derivation such that all premises
in the form $\typeone\tless\typetwo$ (respectively, in the form $\itermone\leq\itermtwo$) are required
to be in the form $\typeone\cong\typetwo$ (respectively, $\itermone=\itermtwo$).
\begin{figure*}
\fbox{
\begin{minipage}[c]{.97\textwidth}
\begin{center}
$$
\begin{array}{ccc}
\infer[\Axty]
  {\icontextone;\iconstraintone;\tcontextone,\varone:\qbang{\ivarone<\itermone}{\typeone}\vdash_{\itermtwo}^{\eqpone} \varone:\typetwo}
  {
    \begin{array}{c}
      \icontextone;\iconstraintone\models^{\eqpone} \natit{0}\leq \itermtwo \qquad
      \icontextone;\iconstraintone\models^{\eqpone} \natit{1} \leq \itermone\\
      \icontextone;\iconstraintone\vdash^{\eqpone}\sb{\typeone}{\ivarone}{\natit{0}}\tless{\typetwo}\\
      \icontextone;\iconstraintone\vdash^{\eqpone}\conv{(\qbang{\ivarone<\itermone}{\typeone})}\qquad
      \icontextone;\iconstraintone\vdash^{\eqpone}\conv{\tcontextone}
    \end{array}
  }
&&
\infer[\Lamty]
  {\icontextone;\iconstraintone;\tcontextone \vdash^{\eqpone}_{\itermtwo} \lambda \varone.\termone:\qlin{\ivarone<\itermone}{\typeone}{\typetwo}}
  {
    \icontextone;\iconstraintone;\tcontextone,\varone :\qbang{\ivarone<\itermone}{\typeone}\vdash^{\eqpone}_{\itermtwo} \termone:\typetwo
  }
\end{array}
$$
\vspace{7pt}
$$
\begin{array}{ccc}
\infer[\Apty]
  {\icontextone;\iconstraintone;\tcontexthree\vdash^{\eqpone}_{\itermfour} \termone\termtwo:\typetwo}
  {
    \begin{array}{c}
      \icontextone;\iconstraintone;\tcontextone\vdash^{\eqpone}_{\itermtwo}\termone:\qlin{\ivarone<\itermone}{\typeone}{\typetwo}\\
      \icontextone,\ivarone;\iconstraintone,\ivarone<\itermone ;\tcontexttwo\vdash^{\eqpone}_{\itermthree} \termtwo:\typeone\\
      \icontextone;\iconstraintone\vdash^{\eqpone}\tcontexthree \tless \tcontextone\uplus\sum_{\ivarone<\itermone}{\tcontexttwo}\\
      \icontextone;\iconstraintone\models^{\eqpone}\itermfour\geq\itermtwo+\itermone+{\sum_{\ivarone<\itermone}\itermthree}
    \end{array}
  }
&
\hspace{7pt}
&
\infer[\Sucty]
  {\icontextone;\iconstraintone;\tcontextone \vdash^{\eqpone}_{\itermfive} \suc(\termone) :\Nat[\itermthree,\itermfour]}
  {
  \begin{array}{c}
    \icontextone;\iconstraintone\vdash^{\eqpone}\Nat[\itermone+\natit{1},\itermtwo+\natit{1}]\tless\Nat[\itermthree,\itermfour]\\
    \icontextone;\iconstraintone;\tcontextone\vdash^{\eqpone}_{\itermfive} \termone: \Nat[\itermone,\itermtwo]
  \end{array}
  }
\end{array}
$$
\vspace{7pt}
$$
\begin{array}{ccc}
\infer[\Natty]
  {\icontextone;\iconstraintone;\tcontextone\vdash^{\eqpone}_{\itermthree} \val{n}:\Nat[\itermone,\itermtwo]}
  { \begin{array}{c}
      \icontextone;\iconstraintone\models^{\eqpone} \itermthree\geq \natit{0}\\
      \icontextone;\iconstraintone\models^{\eqpone} \itermone\leq\natit{n}\\
      \icontextone;\iconstraintone\models^{\eqpone} \natit{n}\leq \itermtwo\\
      \icontextone;\iconstraintone\vdash^{\eqpone}\conv{\tcontextone}
    \end{array}}
&&
\infer[\Predty]
  {\icontextone;\iconstraintone;\tcontextone \vdash^{\eqpone}_{\itermfive} \pred(\termone) :\Nat[\itermthree,\itermfour]}
  {
  \begin{array}{c}
    \icontextone;\iconstraintone\vdash^{\eqpone}\Nat[\itermone\mnu \natit{1},\itermtwo\mnu \natit{1}]\tless \Nat[\itermthree,\itermfour]\\
    \icontextone;\iconstraintone;\tcontextone\vdash^{\eqpone}_{\itermfive} \termone: \Nat[\itermone,\itermtwo]
  \end{array}
  }
\end{array}
$$
\vspace{7pt}
$$
\infer[\Ifty]
  {\icontextone;\iconstraintone;\tcontexthree \vdash^{\eqpone}_{\itermfive}\CASE{\termone}{\termtwo}{\termthree}:\typeone}
  {
    \begin{array}{c}
      \icontextone;\iconstraintone;\tcontextone\vdash^{\eqpone}_{\itermthree} \termone:\Nat[\itermone,\itermtwo] \\
      \icontextone;\iconstraintone, \itermone\leq\zero;\tcontexttwo\vdash^{\eqpone}_{\itermfour} \termtwo: \typeone \\
      \icontextone;\iconstraintone,\itermtwo\geq \natit{1};\tcontexttwo\vdash^{\eqpone}_{\itermfour} \termthree: \typeone \\
      \icontextone;\iconstraintone\vdash^{\eqpone}\tcontexthree \tless \tcontextone\uplus\tcontexttwo\\
      \icontextone;\iconstraintone\models^{\eqpone}\itermfive\geq\itermthree+\itermfour
    \end{array}
  }
$$
\vspace{7pt}
$$
\infer[\Recty]
  {\icontextone;\iconstraintone;\tcontexthree\vdash^{\eqpone}_{\itermseven} \REC{\varone}{\termone}:\typethree}
  {
    \begin{array}{c}
      \icontextone,\ivartwo;\iconstraintone,\ivartwo<\itermfive;\tcontextone,
        \varone:\qbang{\ivarone<\itermone}{\typeone}\vdash^{\eqpone}_{\itermthree}\termone:\typetwo\\
      \icontextone;\iconstraintone\vdash^{\eqpone}\sb{\typetwo}{\ivartwo}{\natit{0}}\tless\typethree\\
      \icontextone,\ivarone,\ivartwo;\iconstraintone,\ivarone<\itermone,\ivartwo<\itermfive
        \vdash^{\eqpone}\sb{\typetwo}{\ivartwo}{\Dfscomb{\ivartwo}{\ivartwo+\natit{1}}{\ivarone}{\itermone}+\ivartwo+\natit{1}}\tless\typeone\\
        \icontextone;\iconstraintone\vdash^{\eqpone} \tcontexthree\tless\sum_{\ivartwo<\itermfive}\tcontextone\\
      \icontextone;\iconstraintone\models^\eqpone\Dfscomb{\ivartwo}{\natit{0}}{\natit{1}}{\itermone}\leq\itermfive,\itermsix\\
      \icontextone;\iconstraintone\models^\eqpone\itermseven\geq\itermsix\mnu \natit{1}+\sum_{\ivartwo<\itermfive}\itermthree
    \end{array}
  }
%\end{array}
$$    
\vspace{0pt}
\end{center}
\end{minipage}}
\caption{Typing Rules}\label{fig:trule}
\end{figure*}

First of all, observe that the typing rules are syntax-directed: given a term $\termone$, all type
derivations for $\termone$ end with the same typing rule, namely the one corresponding to the
last syntax rule used in building $\termone$. In particular, 
no explicit subtyping rule exists, but subtyping is applied to the context in every rule. 
A syntax-directed type system offers a key advantage:
it allows one to prove the statements about type derivations by induction on the structure
of terms. This greatly simplifies the proof of crucial properties like subject reduction.

Typing rules have premises of three different kinds:
\begin{varitemize}
\item
  Of course, typing a term requires typing its immediate subterms, so
  typing judgements can be rule premises.
\item
  As just mentioned, typing rules allow to subtype the context $\tcontextone$,
  so subtyping judgements can be themselves rule premises.
\item
  The application of typing rules (and also of subtyping rules, see Figure~\ref{fig:tsub}) sometimes depends on the truth of
  some inequalities between index terms in the model induced by
  $\eqpone$.
\end{varitemize}
As a consequence, typing rules can only be applied if some relations
between index terms are  consequences of the constraints in $\iconstraintone$. These assumptions
have a semantic nature, but could of course be verified by any sound formal system. 
Completeness (see Section \ref{sec:relative-completeness}), 
however, only holds if all \emph{true} inequalities can be used as assumptions. 
As a consequence, type inference but also type (derivation) checking are bound to be problematic from a computational
point of view. See Section~\ref{sec:type-checking} for a more thorough discussion on this issue.

As a last remark, note that each rule can be seen as a \emph{decoration} of a rule of ordinary \PCF. More: for every
\PCFld\ type derivation $\tdone$ of $\icontextone;\iconstraintone;\tcontextone\vdash_\itermone^\eqpone\termone:\typeone$
there is a structurally identical derivation in \PCF\ for the same term, i.e. a derivation 
$\TtoNDT{\tdone}\prov\TtoNDT{\tcontextone}\vdash\termone:\TtoNDT{\typeone}$. 
%%%%%%%%%%%%%%%%%%%%%%%
\subsection{An Example}
%%%%%%%%%%%%%%%%%%%%%%%
In this section, we will show how \PCFld\ can give a sensible type to the example
we talked about in the Introduction, namely 
$$
\termdbl=\REC{\funone}{\lambda\varone.\CASE{\varone}{\val{0}}{\suc(\suc(\funone(\pred(\varone))))}}.
$$
First, let us take a look at a subterm of $\termdbl$, namely
$\termone=\CASE{\varone}{\val{0}}{\suc(\suc(\funone(\pred(\varone))))}$.
In plain \PCF, $\termone$ receives the type $\Nat$ in an environment
where $\varone$ has type $\Nat$ and $\funone$ has type $\Nat\arr\Nat$. Presumably,
a \PCFld\ type for $\termone$ can be obtained by decorating in an appropriate way
the type above. In other words, we are looking for a type derivation with
conclusion:
$$
\icontextone;\iconstraintone;\varone:\qbang{\ivarone<\itermone}{\Nat[\itermtwo]},
  \funone:\qbang{\ivartwo<\itermthree}{(\qbang{\ivarthree<\itermfour}\Nat[\itermfive]\lin\Nat[\itermsix])}
  \vdash^\eqpone_\itermseven \termone:\Nat[\itermeight].
$$
But how should we proceed? What we would like, at the end of the day, is being able to
describe how the value of $\termone$ depends on the value of $\varone$, so we could
look for a type derivation in this form:
$$
\ivarfour;\emcon;\varone:\qbang{\itermone}{\Nat[\ivarfour]},
  \funone:\qbang{\ivartwo<\itermthree}{(\qbang{\itermfour}\Nat[\ivarfour\mnu\natit{1}]\lin\Nat[\natit{2}(\ivarfour\mnu\natit{1})])}
  \vdash^\eqpone_\itermseven \termone:\Nat[\natit{2}\ivarfour],
$$
where $\qbangwa{\ivarone<\itermone}$ (respectively, $\qbangwa{\ivarthree<\itermfour}$) 
has been abbreviated into $\qbangwa{\itermone} $ (respectively, $\qbangwa{\itermfour}$) 
because the bound variable $\ivarone$ (respectively, $\ivarthree$) does not appear free in
the underlying type. But how to give values to $\itermone$, $\itermthree$, and $\itermfour$?
One could be tempted to define $\itermone$ simply as $\natit{2}$, since there are two occurrences
of $\varone$ in $\termone$. However, in view of the role played by $\varone$ and $\funone$
in $\termdbl$, $\itermone$ should be rather defined taking into account the number of times
$\varone$ will be copied along the computation of $\termdbl$ on \emph{any} input. A good guess
could be, for example, $\ivarfour+\natit{1}$. Similarly, $\itermfour$ could be $\ivarfour$.
But how about $\itermthree$? How many times $\funone$ is used? If $\ivarfour=0$, then
$\funone$ is not called, while if $\ivarfour>\natit{0}$, the function is called once. In other
words, a guess for $\itermfour$ could be
$\ifgtz(\ivarfour,\natit{0})$. Here we use the infix notation
$>$ for the operator $\ifgtz$ just to improve readability. Let us now try to build a derivation
for 
$$
\ivarfour;\emcon;\varone:\qbang{\ivarfour+\natit{1}}{\Nat[\ivarfour]},
  \funone:\qbang{\ivarfour>\natit{0}}{(\qbang{\ivarfour}\Nat[\ivarfour\mnu\natit{1}]\lin\Nat[\natit{2}(\ivarfour\mnu\natit{1})])}
  \vdash^\eqpone_\natit{0} \termone:\Nat[\natit{2}\ivarfour].
$$
Actually, it has the following shape:

{\footnotesize
$$
\infer
  {\ivarfour;\emcon;\varone:\qbang{\ivarfour+\natit{1}}{\Nat[\ivarfour]},
  \funone:\qbang{\ivarfour>\natit{0}}{(\qbang{\ivarfour}\Nat[\ivarfour\mnu\natit{1}]\lin\Nat[\natit{2}(\ivarfour\mnu\natit{1})])}
  \vdash^\eqpone_\natit{0} \termone:\Nat[\natit{2}\ivarfour]}
  {
    \begin{array}{c}
    \tdone\prov\ivarfour;\emcon;\varone:\qbang{\natit{1}}{\Nat[\ivarfour]}
      \vdash^\eqpone_\natit{0} \varone:\Nat[\ivarfour]\\
    \tdtwo\prov\ivarfour;\ivarfour\leq \natit{0};\varone:\qbang{\ivarfour}{\Nat[\ivarfour]},
      \funone:\qbang{\ivarfour>\natit{0}}{(\qbang{\ivarfour}\Nat[\ivarfour\mnu\natit{1}]\lin\Nat[\natit{2}(\ivarfour\mnu\natit{1})])}
      \vdash^\eqpone_\natit{0} \val{\natit{0}}:\Nat[\natit{2}\ivarfour]\\
    \tdthree\prov\ivarfour;\ivarfour>\natit{0};\varone:\qbang{\ivarfour}{\Nat[\ivarfour]},
      \funone:\qbang{\ivarfour>\natit{0}}{(\qbang{\ivarfour}\Nat[\ivarfour\mnu\natit{1}]\lin\Nat[\natit{2}(\ivarfour\mnu\natit{1})])}
      \vdash^\eqpone_\natit{0} \suc(\suc(\funone(\pred(\varone)))):\Nat[\natit{2}\ivarfour]\\
    \end{array}
  }
$$
}where assignments to types in the form $\qbang{\natit{0}}{\typeone}$ have been omitted from
contexts. Now, $\tdone$ and $\tdtwo$ can be easily built, while $\tdthree$ requires a little effort:
it is the type derivation

{\footnotesize
$$
\infer
  {\ivarfour;\ivarfour>\natit{0};\varone:\qbang{\ivarfour}{\Nat[\ivarfour]},
      \funone:\qbang{\ivarfour>\natit{0}}{(\qbang{\ivarfour}\Nat[\ivarfour\mnu\natit{1}]\lin\Nat[\natit{2}(\ivarfour\mnu\natit{1})])}
      \vdash^\eqpone_\natit{0} \suc(\suc(\funone(\pred(\varone)))):\Nat[\natit{2}\ivarfour]}
  {
    \infer
        {\ivarfour;\ivarfour>\natit{0};\varone:\qbang{\ivarfour}{\Nat[\ivarfour]},
          \funone:\qbang{\ivarfour>\natit{0}}{(\qbang{\ivarfour}\Nat[\ivarfour\mnu\natit{1}]\lin\Nat[\natit{2}(\ivarfour\mnu\natit{1})])}
          \vdash^\eqpone_\natit{0} \suc(\funone(\pred(\varone))):\Nat[\natit{2}\ivarfour\mnu\natit{1}]}
        {
          \infer
              {\ivarfour;\ivarfour>\natit{0};\varone:\qbang{\ivarfour}{\Nat[\ivarfour]},
                \funone:\qbang{\ivarfour>\natit{0}}{(\qbang{\ivarfour}\Nat[\ivarfour\mnu\natit{1}]\lin\Nat[\natit{2}(\ivarfour\mnu\natit{1})])}
                \vdash^\eqpone_\natit{0} \funone(\pred(\varone)):\Nat[\natit{2}(\ivarfour\mnu\natit{1})]}
              {
                \begin{array}{c}
                \tdfour\prov\ivarfour;\ivarfour>\natit{0};
                      \funone:\qbang{\ivarfour>\natit{0}}{(\qbang{\ivarfour}\Nat[\ivarfour\mnu\natit{1}]\lin\Nat[\natit{2}(\ivarfour\mnu\natit{1})])}
                      \vdash^\eqpone_\natit{0} \funone:\qbang{\ivarfour}{\Nat[\ivarfour\mnu\natit{1}]}\lin\Nat[\natit{2}(\ivarfour\mnu\natit{1})]\\
                \tdfive\prov\ivarfour;\ivarfour>\natit{0};\varone:\qbang{\natit{1}}{\Nat[\ivarfour]}
                      \vdash^\eqpone_\natit{0} \pred(\varone):\Nat[\ivarfour\mnu\natit{1}]
                \end{array}
              }
        }
  }
$$
}where $\tdfour$ and $\tdfive$ are themselves easily definable. Summing up, $\termone$ can indeed be given the type
we wanted it to have. As a consequence, we can say that
$$
\ivarfour;\emcon;
  \funone:\qbang{\ivarfour>\natit{0}}{(\qbang{\ivarfour}\Nat[\ivarfour\mnu\natit{1}]\lin\Nat[\natit{2}(\ivarfour\mnu\natit{1})])}
  \vdash^\eqpone_\natit{0} \lambda\varone.\termone:\qbang{\ivarfour+\natit{1}}{\Nat[\ivarfour]}\lin\Nat[\natit{2}\ivarfour].
$$
However, we have only solved half of the problem, since the last step (namely typing the fixpoint) is definitely 
the most complicated. In particular, the rule $\Recty$ requires an index variable $\ivartwo$
which somehow ranges over all recursive calls. A different but related type can be given
to $\lambda\varone.\termone$, namely
$$
\begin{array}{c}
\ivarone,\ivartwo;\ivartwo<\ivarone+\natit{1};
  \funone:\qbang{\ivarone>\ivartwo}{(\qbang{\ivarone\mnu\ivartwo}\Nat[\ivarone\mnu\ivartwo\mnu\natit{1}]\lin\Nat[\natit{2}(\ivarone\mnu\ivartwo\mnu\natit{1})])}\\
\vdash^\eqpone_\natit{0} \lambda\varone.\termone:\qbang{\ivarone\mnu\ivartwo+\natit{1}}{\Nat[\ivarone\mnu\ivartwo]}\lin\Nat[\natit{2}(\ivarone\mnu\ivartwo)].
\end{array}
$$
By the way, this does not require rebuilding the entire type derivation (see the properties in the
forthcoming Section~\ref{sect:prop}). Let us now check whether the judgement above can be the
premise of the rule $\Recty$. Following the notation in the typing rule $\Recty$ we can stipulate that:
\begin{align*}
  \itermone&\equiv\ivarone>\ivartwo;\\
  \itermtwo&\equiv\ivarone;\\
  \itermthree&\equiv\natit{0};\\
  \itermfive&\equiv\ivarone+\natit{1};\\
\end{align*}
and
\begin{align*}
  \typeone&\equiv\qbang{\ivarone\mnu\ivartwo}{\Nat[\ivarone\mnu\ivartwo\mnu\natit{1}]\lin\Nat[\natit{2}(\ivarone\mnu\ivartwo\mnu\natit{1})]};\\
  \typetwo&\equiv\qbang{\ivarone\mnu\ivartwo+\natit{1}}{\Nat[\ivarone\mnu\ivartwo]\lin\Nat[\natit{2}(\ivarone\mnu\ivartwo)]};\\
  \typethree&\equiv\sb{\typetwo}{\ivartwo}{\natit{0}}\equiv\qbang{\ivarone+\natit{1}}{\Nat[\ivarone]\lin\Nat[\natit{2}\ivarone]};\\
  \conone&\equiv\tcontexthree\equiv\emcon.
\end{align*}
We can then conclude that, since $\ivarone<(\ivarone>\ivartwo)$ implies $\ivarone=\natit{0}$:
\begin{align*}
  \ivarone;\emcon&\models\Dfscomb{\ivartwo}{\natit{0}}{\natit{1}}{\itermone}=\ivarone+\natit{1}=\itermtwo;\\
  \ivarone,\ivartwo;\ivarone<(\ivarone>\ivartwo)&\models\Dfscomb{\ivartwo}{\ivartwo+\natit{1}}{\ivarone}{\itermone}=\natit{0};\\
  \ivarone;\emcon&\models\sb{\typetwo}{\ivartwo}{\Dfscomb{\ivartwo}{\ivartwo+\natit{1}}{\ivarone}{\itermone}+\ivartwo+\natit{1}}=\sb{\typetwo}{\ivartwo}{\ivartwo+\natit{1}}=\typeone;
\end{align*}
and, ultimately, that $\ivarone;\emcon;\emcon\vdash^\eqpone_{\ivarone}\termdbl:\typethree$.
%%%%%%%%%%%%%%%%%%%%%%%
\subsection{Properties}\label{sect:prop}
%%%%%%%%%%%%%%%%%%%%%%%
This section is mainly concerned with Subject Reduction. Subject Reduction will only be proved for closed terms, 
since the language is endowed with a weak notion of reduction and, as a consequence, reduction cannot happen in the scope of lambda abstractions.
The system \PCFld\ enjoys some nice properties that are both necessary intermediate steps towards proving subject
reduction and essential ingredients for proving soundness and relative completeness. These properties
permit to manipulate judgements being sure that derivability is preserved.

First of all, the constraints $\iconstraintone$ in a typing judgement
can be made stronger without altering the rest:
\begin{lem}[Constraint Strenghtening]
\label{model-strengthening-typing}
Let $\icontextone;\iconstraintone;\tcontextone\vdash_\itermone\termone:\typeone$
and $\icontextone;\iconstrainttwo\models^{\eqpone}
\iconstraintone$. Then,
$\icontextone;\iconstrainttwo;\tcontextone\vdash_\itermone\termone:\typeone$.
\end{lem}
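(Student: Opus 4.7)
The plan is to proceed by straightforward induction on the derivation of $\icontextone;\iconstraintone;\tcontextone\vdash_\itermone\termone:\typeone$. Since the typing rules of Figure~\ref{fig:trule} are syntax-directed, this is equivalent to induction on the structure of $\termone$. The intuition is simply that every premise appearing in a typing rule is \emph{monotone} in the constraint set: semantic judgements $\icontextone;\iconstraintone\models^\eqpone\itermone\leq\itermtwo$, well-definedness judgements $\icontextone;\iconstraintone\vdash^\eqpone\conv{\typeone}$, and subtyping judgements $\icontextone;\iconstraintone\vdash^\eqpone\typeone\tless\typetwo$ all continue to hold when $\iconstraintone$ is replaced by a stronger $\iconstrainttwo$, because every derivation/semantic entailment from $\iconstraintone$ is a fortiori a derivation/entailment from $\iconstrainttwo$.

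First I would record two trivial monotonicity facts about the entailment $\models^\eqpone$ itself: (i)~transitivity, that is if $\icontextone;\iconstrainttwo\models^\eqpone\iconstraintone$ and $\icontextone;\iconstraintone\models^\eqpone\itermone\leq\itermtwo$ then $\icontextone;\iconstrainttwo\models^\eqpone\itermone\leq\itermtwo$; and (ii)~weakening under context extension, that is if $\icontextone;\iconstrainttwo\models^\eqpone\iconstraintone$ then $\icontextone,\ivarone;\iconstrainttwo,\ivarone<\itermone\models^\eqpone\iconstraintone,\ivarone<\itermone$, and similarly for extensions by a single inequality such as $\itermone\leq\zero$. Both are immediate from the semantic definition of $\models^\eqpone$. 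From (i)~and~(ii), a routine induction gives the analogous strengthening properties for the judgements $\vdash^\eqpone\conv{\typeone}$ (Figure~\ref{fig:wdtypes}) and $\vdash^\eqpone\typeone\tless\typetwo$ (Figure~\ref{fig:tsub}).

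With these at hand, the main induction is essentially automatic. For each rule in Figure~\ref{fig:trule}, I rebuild the same rule instance with $\iconstraintone$ replaced by $\iconstrainttwo$: every semantic, well-definedness, and subtyping side-condition is transported by the auxiliary monotonicity facts; every typing premise of the form $\icontextone;\iconstraintone;\cdots\vdash_{\cdot}\cdots$ is obtained from the induction hypothesis; and every typing premise whose constraint set is of the form $\iconstraintone,\iconstraintone'$ (for instance $\iconstraintone,\ivarone<\itermone$ in $\Apty$, $\Recty$, and $\qbtdef$, or $\iconstraintone,\itermone\leq\zero$ in $\Ifty$) is handled by first observing that $\icontextone,\ldots;\iconstrainttwo,\iconstraintone'\models^\eqpone\iconstraintone,\iconstraintone'$ follows from the hypothesis and then applying the induction hypothesis to the subderivation.

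I do not expect any real obstacle: the lemma is purely a bookkeeping statement about the way constraints propagate through the rules. The only place that requires a moment of attention is $\Recty$, which carries both an extended index context ($\ivartwo;\ivartwo<\itermfive$, and also $\ivarone,\ivartwo;\ivarone<\itermone,\ivartwo<\itermfive$ in the subtyping premise) and a subderivation; but both are covered by the weakening fact~(ii) above applied before invoking the induction hypothesis. Hence the induction goes through, yielding $\icontextone;\iconstrainttwo;\tcontextone\vdash_\itermone\termone:\typeone$ as required.
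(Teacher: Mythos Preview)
Your proof is correct and matches the paper's approach: the paper simply says ``It follows easily by definition of $\icontextone;\iconstrainttwo\eqpless^{\eqpone}\iconstraintone$,'' and you have spelled out exactly what that one-line justification amounts to, namely the routine induction on the derivation together with the obvious monotonicity of semantic entailment, well-definedness, and subtyping under constraint strengthening.
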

\begin{proof}
  It follows easily by definition of $\icontextone;\iconstrainttwo\eqpless^{\eqpone} \iconstraintone$.
\end{proof}
Note that a sort of strengthening also holds for weights.
\begin{lem}[Weight Monotonicity]
\label{lem:weight-monotonicity}
Let $\icontextone;\iconstraintone;\tcontextone\vdash_\itermone\termone:\typeone$ and
$\icontextone;\iconstraintone\models^{\eqpone} \itermone\leq \itermtwo$. Then,
$\icontextone;\iconstraintone;\tcontextone\vdash_\itermtwo\termone:\typeone$.
\end{lem}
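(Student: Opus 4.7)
The plan is to proceed by induction on the structure of the type derivation $\pi$ of $\icontextone;\iconstraintone;\tcontextone\vdash_\itermone\termone:\typeone$. Since the typing rules in Figure~\ref{fig:trule} are syntax-directed, this is equivalent to induction on the term $\termone$, and in each case the last rule applied is determined by the outermost constructor.

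The key structural observation is that in every typing rule, the weight appearing in the conclusion is constrained only by a side condition of the form $\icontextone;\iconstraintone\models^{\eqpone} w_{\text{concl}} \geq E$, where $E$ is some index expression built from the weights of the premises (for instance $\itermfour \geq \itermtwo+\itermone+\sum_{\ivarone<\itermone}\itermthree$ in $\Apty$, $\itermfive \geq \itermthree+\itermfour$ in $\Ifty$, $\itermseven \geq \itermsix \mnu \natit{1} + \sum_{\ivartwo<\itermfive}\itermthree$ in $\Recty$, and the simple $\itermtwo \geq \natit{0}$ or $\itermthree \geq \natit{0}$ in the axiom rules $\Axty$ and $\Natty$); crucially, no other premise mentions the concluding weight. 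Therefore, given the hypothesis $\icontextone;\iconstraintone\models^{\eqpone} \itermone \leq \itermtwo$, transitivity of $\leq$ in the model induced by $\eqpone$ immediately yields $\icontextone;\iconstraintone\models^{\eqpone} \itermtwo \geq E$, so the same rule can be reapplied with $\itermtwo$ in place of $\itermone$ while keeping all other premises unchanged.

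More precisely, for the base cases $\Axty$ and $\Natty$ the statement is immediate since $\natit{0} \leq \itermone \leq \itermtwo$. For the unary rules $\Sucty$, $\Predty$, and $\Lamty$ the weight is simply inherited from the (unique) subderivation, so we invoke the induction hypothesis directly on that subderivation. For $\Apty$, $\Ifty$, and $\Recty$, we leave all premise derivations unchanged and only modify the arithmetic side condition: if the original rule required $\itermone \geq E$ and we are given $\itermone \leq \itermtwo$, the chain $\itermtwo \geq \itermone \geq E$ gives the required $\itermtwo \geq E$, so the rule still applies with conclusion weight $\itermtwo$.

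I do not anticipate any real obstacle: the argument is essentially a one-line observation about the shape of the weight constraints, packaged as an induction. The only minor care point is that in rules like $\Recty$ the weight condition involves a bounded sum of premise weights rather than a plain sum, but since the constraint is still a lower bound on the concluding weight and the premise derivations are untouched, the transitivity argument applies uniformly. No appeal to the lemmas on forest cardinalities is needed here.
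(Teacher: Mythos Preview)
Your proposal is correct and follows exactly the approach the paper takes: the paper's proof is a two-line sketch (``easy by induction on the derivation; observe that all rules altering the weight are designed so as to allow the latter to be lifted up''), and you have simply spelled out the case analysis that this sketch implies. In particular, your distinction between the rules where the weight is inherited verbatim from a premise (requiring the induction hypothesis) and those where it is bounded below by an explicit side condition (requiring only transitivity) is exactly right and slightly more explicit than the paper itself.
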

\begin{proof}
  It follows easily by induction on the derivation proving  
  $\icontextone;\iconstraintone;\tcontextone\vdash_\itermone\termone:\typeone$.
  In particular, observe that all rules altering the weight are designed in such a way 
  as to allow the latter to be lifted up.
\end{proof}
Whenever a parameter in a subtyping judgment needs to be specialized, we can simply substitute
it with an index term.
\begin{lem}[Index Term Substitution Respects Subtyping]
\label{lem:subst-respect-index-ineq-I2Type}
Let $\icontextone,\ivarone;\iconstraintone\vdash\gtypeone
\tless \gtypetwo$ and $\itermone$ be an index term.
Then, $\icontextone;\sb{\iconstraintone}{\ivarone}{\itermone},\iconstrainttwo\vdash \sb{\gtypeone}{\ivarone}{\itermone}
\tless \sb{\gtypetwo}{\ivarone}{\itermone}$ whenever $\icontextone;\iconstrainttwo\models\conv{\itermone}$.
\end{lem}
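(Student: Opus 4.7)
The plan is to proceed by induction on the derivation of $\icontextone,\ivarone;\iconstraintone\vdash \gtypeone \tless \gtypetwo$, following the three rule schemes in Figure~\ref{fig:tsub}. Before starting the induction, I would record the analogous semantic substitution fact at the level of index terms: if $\icontextone,\ivarone;\iconstraintone\models^\eqpone \itermtwo\leq\itermthree$ and $\icontextone;\iconstrainttwo\models^\eqpone\conv{\itermone}$, then $\icontextone;\sb{\iconstraintone}{\ivarone}{\itermone},\iconstrainttwo\models^\eqpone \sb{\itermtwo}{\ivarone}{\itermone}\leq\sb{\itermthree}{\ivarone}{\itermone}$, and similarly for $\conv{(\cdot)}$. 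This is immediate from the definition of $\models^\eqpone$ via assignments: any assignment $\assone$ satisfying $\sb{\iconstraintone}{\ivarone}{\itermone}$ and $\iconstrainttwo$ extends to an assignment on $\icontextone,\ivarone$ sending $\ivarone$ to $\semt{\itermone}{\assone}{\eqpone}$ (which is defined by the $\iconstrainttwo$ hypothesis), under which $\iconstraintone$ holds; then validity of $\itermtwo\leq\itermthree$ on this extended assignment yields, by compositionality of semantics, validity of the substituted inequality.

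The two non-binder cases are then routine. For rule $\natleq$, applied to $\Nat[\itermtwo,\itermthree]\tless\Nat[\itermfour,\itermfive]$, I just push the substitution through $\itermtwo,\itermthree,\itermfour,\itermfive$ and invoke the semantic substitution fact twice to reconstruct the premises of $\natleq$ in the new context. For rule $\linleq$, I apply the induction hypothesis separately to the two subtyping premises $\mtypetwo\tless\mtypeone$ and $\typeone\tless\typetwo$, then reassemble with $\linleq$; no variable manipulation is required since the arrow does not bind.

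The main obstacle, as usual, is the modal case $\qbleq$, because the rule introduces a fresh index variable $\ivartwo$ and an extra constraint $\ivartwo<\itermtwo$. I would first alpha-rename so that the bound variable $\ivartwo$ appearing in $\qbang{\ivartwo<\itermtwo}{\typeone}\tless\qbang{\ivartwo<\itermthree}{\typetwo}$ is distinct from $\ivarone$ and from the free variables of $\itermone$, so that substitution commutes with the binder. The premise has the form $\icontextone,\ivarone,\ivartwo;\iconstraintone,\ivartwo<\itermtwo\vdash \typeone\tless\typetwo$. Applying the induction hypothesis with the same $\itermone$ but to the enlarged contexts yields $\icontextone,\ivartwo;\sb{\iconstraintone}{\ivarone}{\itermone},\sb{\ivartwo<\itermtwo}{\ivarone}{\itermone},\iconstrainttwo\vdash \sb{\typeone}{\ivarone}{\itermone}\tless \sb{\typetwo}{\ivarone}{\itermone}$, which is exactly $\icontextone,\ivartwo;\sb{\iconstraintone}{\ivarone}{\itermone},\iconstrainttwo,\ivartwo<\sb{\itermtwo}{\ivarone}{\itermone}\vdash \sb{\typeone}{\ivarone}{\itermone}\tless \sb{\typetwo}{\ivarone}{\itermone}$. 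Combined with the semantic substitution fact applied to the second premise $\itermthree\leq\itermtwo$, this gives precisely the two premises needed to reapply $\qbleq$ and conclude $\icontextone;\sb{\iconstraintone}{\ivarone}{\itermone},\iconstrainttwo\vdash \sb{\qbang{\ivartwo<\itermtwo}{\typeone}}{\ivarone}{\itermone}\tless \sb{\qbang{\ivartwo<\itermthree}{\typetwo}}{\ivarone}{\itermone}$.

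A side care is that whenever definedness conditions $\conv{(\cdot)}$ appear implicitly (they are built into the definition of $\models^\eqpone$ via Kleene equality), the hypothesis $\icontextone;\iconstrainttwo\models^\eqpone \conv{\itermone}$ is precisely what one needs to preserve them under substitution; this is why the lemma carries that assumption. No other subtlety arises, so the proof is essentially a mechanical induction once the semantic substitution lemma for the $\models^\eqpone$ judgement is isolated.
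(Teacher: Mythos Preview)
Your proposal is correct and is exactly the intended argument: a routine induction on the subtyping derivation, using the obvious semantic substitution lemma for $\models^\eqpone$ and handling variable capture in the $\qbleq$ case by $\alpha$-renaming. The paper itself simply records the proof as ``Easy'' without spelling out the cases, so your write-up is in fact more detailed than what the paper provides.
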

\begin{proof}
Easy.
\end{proof}
Subtyping can be freely applied both to the context
$\tcontextone$ (contravariantly) and to the type $\typeone$
(covariantly), leaving the rest of the judgement unchanged:
\medskip
\begin{lem}[Subtyping]
\label{lem:proof-strenghtening}
Suppose
$\icontextone;\iconstraintone;\varone_1:\mtypeone_1,\ldots,\varone_n:\mtypeone_n\vdash_\itermone
\termone:\typeone$ and $\icontextone;\iconstraintone\vdash
\mtypetwo_i\tless \mtypeone_i$ for $1\leq i\leq n$ and
$\icontextone;\iconstraintone\vdash \typeone\tless \typetwo$.  Then,
$\icontextone;\iconstraintone;\varone_1:\mtypetwo_1,\ldots,\varone_n:\mtypetwo_n\vdash_\itermone
\termone:\typetwo$.
\end{lem}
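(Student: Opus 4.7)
The plan is to proceed by induction on the derivation $\tdone$ of $\icontextone;\iconstraintone;\varone_1:\mtypeone_1,\ldots,\varone_n:\mtypeone_n\vdash_\itermone \termone:\typeone$. The crucial observation is that the typing rules of Figure~\ref{fig:trule} are syntax-directed and that most of them already carry explicit subtyping premises: the application, conditional and fixpoint rules contain a premise of the form $\tcontexthree \tless \tcontextone\uplus\tcontexttwo$ (or $\tcontexthree \tless \sum_{\ivartwo<\itermfive}\tcontextone$), while the rules $\Natty$, $\Sucty$, $\Predty$, and the axiom $\Axty$ have a subtyping premise on the output type itself. In each such case, the hypothesised subtyping on the ambient context and on the output type can be absorbed into the existing premises by transitivity of $\tless$, so that the subderivations above are left untouched.

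The cases where the conclusion type has a structural shape, namely $\Lamty$ and $\Recty$, require an extra inversion step. Since the subtyping rules of Figure~\ref{fig:tsub} are themselves syntax-directed on the shape of types (Nat types only subtype Nat types, linear arrows only subtype linear arrows, modal types only subtype modal types), from any $\typeone \tless \typetwo$ we can recover the matching structure of $\typetwo$ together with the component-wise subtyping relations. In the $\Lamty$ case, if $\typeone = \qlin{\ivarone<\itermtwo}{\typefour}{\typefive}$, then necessarily $\typetwo = \qlin{\ivarone<\itermtwo'}{\typefour'}{\typefive'}$ with $\icontextone;\iconstraintone\vdash \qbang{\ivarone<\itermtwo'}{\typefour'} \tless \qbang{\ivarone<\itermtwo}{\typefour}$ and $\icontextone, \ivarone; \iconstraintone, \ivarone<\itermtwo'\vdash \typefive \tless \typefive'$. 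The induction hypothesis is then invoked on the unique premise, using the strengthened context where $\varone$ is bound to $\qbang{\ivarone<\itermtwo'}{\typefour'}$ and the new output $\typefive'$; reapplying $\Lamty$ produces the desired conclusion.

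The axiom rule $\Axty$ is similar: from a strengthening $\qbang{\ivarone<\itermtwo'}{\typefour'} \tless \qbang{\ivarone<\itermtwo}{\typefour}$ of the entry for $\varone$, an inversion by $\qbleq$ together with Lemma~\ref{lem:subst-respect-index-ineq-I2Type} applied with $\ivarone \mapsto \natit{0}$ yields $\sb{\typefour'}{\ivarone}{\natit{0}} \tless \sb{\typefour}{\ivarone}{\natit{0}}$, which composes by transitivity with the existing premise $\sb{\typefour}{\ivarone}{\natit{0}} \tless \typeone$ and with the new $\typeone \tless \typetwo$ to rebuild the axiom for the new context and output. The other entries in the context are covered by the premise $\conv{\tcontextone}$, which propagates since $\conv{\mtypetwo_i}$ is an implicit byproduct of $\icontextone;\iconstraintone \vdash \mtypetwo_i\tless \mtypeone_i$.

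The main obstacle, I expect, is the fixpoint case $\Recty$, whose premises entangle index constraints, forest cardinalities and several subtyping conditions on $\typeone$, $\typetwo$, and $\typethree$. Fortunately, the new subtyping on the output composes straightforwardly with $\sb{\typetwo}{\ivartwo}{\natit{0}} \tless \typethree$ by transitivity, and the new context subtyping composes with $\tcontexthree \tless \sum_{\ivartwo<\itermfive}\tcontextone$; the two internal conditions on $\typeone$ used contravariantly (as the type of $\varone$) and covariantly (through the clause involving $\Dfscomb{\ivartwo}{\ivartwo+\natit{1}}{\ivarone}{\itermone}+\ivartwo+\natit{1}$) are not affected, since we are only relaxing what is exposed at the conclusion. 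Hence no appeal to the induction hypothesis is needed in this case: the subderivation above is preserved and only the subtyping premises at the bottom of the rule get updated.
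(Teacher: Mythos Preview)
Your approach—induction on the typing derivation, using transitivity of $\tless$ and the induction hypothesis as needed—matches the paper's, and your treatment of $\Axty$, $\Lamty$, and $\Recty$ is correct (the paper in fact omits the latter two, calling them ``similar'').

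However, your claim that for $\Apty$ and $\Ifty$ the subtyping on the output type is ``absorbed into the existing premises by transitivity'' so that ``subderivations above are left untouched'' is not accurate. In $\Apty$, the conclusion type $\typetwo$ appears only as the codomain of the arrow type assigned to the function subterm; there is no separate subtyping premise carrying it. To replace $\typetwo$ by some $\typetwo'$ with $\typetwo\tless\typetwo'$, you must invoke the induction hypothesis on the first subderivation, changing its conclusion from $\qlin{\ivarone<\itermone}{\typethree}{\typetwo}$ to $\qlin{\ivarone<\itermone}{\typethree}{\typetwo'}$ (this is justified since $\linleq$ gives the required subtyping on arrow types). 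The paper's proof does exactly this in its $\Apty$ case. The same situation arises for $\Ifty$, where the output type lives in the two branch premises, and dually for $\Sucty$ and $\Predty$, where the context $\tcontextone$ is shared verbatim between premise and conclusion, so weakening the context also requires the IH. None of this invalidates your plan, but more cases need the induction hypothesis than your sketch acknowledges.
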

\begin{proof}
By induction on the structure of a derivation $\tdone$ for
$$
\icontextone;\iconstraintone;\varone_1:\mtypeone_1,\ldots,\varone_n:\mtypeone_n\vdash_{\itermone} \termone:\typeone.
$$
Let us examine some interesting cases:
\begin{varitemize}
\item
  If $\tdone$ is just
  $$
  \infer[\Axty]
        {\icontextone;\iconstraintone;\tcontextone,\varone:\qbang{\ivarone<\itermone}{\typethree}\vdash_{\itermtwo}^{\eqpone} \varone:\typeone}
        {
          \begin{array}{c}
            \icontextone;\iconstraintone\models^{\eqpone} \natit{0}\leq \itermtwo \qquad
            \icontextone;\iconstraintone\models^{\eqpone} \natit{1}\leq \itermone\\
            \icontextone;\iconstraintone\vdash^{\eqpone}\sb{\typethree}{\ivarone}{\natit{0}}\tless{\typeone}\\
            \icontextone;\iconstraintone\vdash^{\eqpone}\conv{(\qbang{\ivarone<\itermone}{\typethree})} \qquad
            \icontextone;\iconstraintone\vdash^{\eqpone}\conv{\tcontextone}
          \end{array}
        }
  $$
  then, by assumption we have that 
  $\mtypetwo\equiv\qbang{\ivarone<\itermthree}{\typefour}$ and $\icontextone;\iconstraintone\vdash 
  \qbang{\ivarone<\itermthree}{\typefour}\tless\qbang{\ivarone<\itermone}{\typethree}$. 
  Moreover, by assumption we have $\icontextone;\iconstraintone\vdash\typeone\tless \typetwo$.
  From $\icontextone;\iconstraintone\vdash\qbang{\ivarone<\itermthree}{\typefour}\tless\qbang{\ivarone<\itermone}{\typethree}$,
  it follows that $\icontextone;\iconstraintone,\ivarone<\itermthree\vdash\typefour\tless\typethree$
  and that $\icontextone;\iconstraintone\models\itermone\leq\itermthree$. By Lemma~\ref{lem:subst-respect-index-ineq-I2Type},
  $\icontextone;\iconstraintone\vdash\sb{\typefour}{\ivarone}{\natit{0}}\tless\sb{\typethree}{\ivarone}{\natit{0}}$, which
  by transitivity of $\tless$ implies $\icontextone;\iconstraintone\vdash^{\eqpone}\sb{\typefour}{\ivarone}{\natit{0}}\tless{\typetwo}$.
  Now, if $\tcontexttwo$ is a context such that (with a slight abuse of notation)
  $\icontextone;\iconstraintone\vdash^{\eqpone}\tcontexttwo\tless\tcontextone$, then
  $\icontextone;\iconstraintone\vdash^{\eqpone}\conv{\tcontexttwo}$. Summing up,
  $$
  \infer[\Axty]
        {\icontextone;\iconstraintone;\tcontexttwo,\varone:\qbang{\ivarone<\itermthree}{\typefour}\vdash_{\itermtwo}^{\eqpone} \varone:\typetwo}
        {
          \begin{array}{c}
            \icontextone;\iconstraintone\models^{\eqpone} \natit{0}\leq \itermtwo \qquad
            \icontextone;\iconstraintone\models^{\eqpone} \natit{1}\leq \itermthree\\
            \icontextone;\iconstraintone\vdash^{\eqpone}\sb{\typefour}{\ivarone}{\natit{0}}\tless{\typetwo}\\
            \icontextone;\iconstraintone\vdash^{\eqpone}\conv{(\qbang{\ivarone<\itermthree}{\typefour})}\qquad
            \icontextone;\iconstraintone\vdash^{\eqpone}\conv{\tcontexttwo}
          \end{array}
        }
  $$
  \item
    If $\tdone$ is
    $$
    \infer[\Apty]
          {\icontextone;\iconstraintone;\tcontexthree\vdash^{\eqpone}_{\itermfour} \termone\termtwo:\typeone}
          {
            \begin{array}{c}
              \icontextone;\iconstraintone;\tcontextone\vdash^{\eqpone}_{\itermtwo}\termone:\qlin{\ivarone<\itermone}{\typethree}{\typeone}\\
              \icontextone,\ivarone;\iconstraintone,\ivarone<\itermone ;\tcontexttwo\vdash^{\eqpone}_{\itermthree} \termtwo:\typethree\\
              \icontextone;\iconstraintone\vdash^{\eqpone}\tcontexthree \tless \tcontextone\uplus\sum_{\ivarone<\itermone}{\tcontexttwo}\\
              \icontextone;\iconstraintone\models^{\eqpone}\itermfour\geq\itermtwo+\itermone+ {\sum_{\ivarone<\itermone}\itermthree}
            \end{array}
          }
    $$
    but we have $\icontextone;\iconstraintone\vdash^{\eqpone}\typeone\tless\typetwo$
    and $\icontextone;\iconstraintone\vdash^{\eqpone}\tcontextfour\tless\tcontexthree$, then
    by induction hypothesis we can easily conclude that 
    $\icontextone;\iconstraintone;\tcontextone\vdash^{\eqpone}_{\itermtwo}\termone:\qlin{\ivarone<\itermone}{\typethree}{\typetwo}$
    and, by transitivity of $\tless$, that 
    $\icontextone;\iconstraintone\vdash^{\eqpone}\tcontextfour \tless \tcontextone\uplus\sum_{\ivarone<\itermone}{\tcontexttwo}$.
    As a consequence:
    $$
    \infer[\Apty]
          {\icontextone;\iconstraintone;\tcontextfour\vdash^{\eqpone}_{\itermfour} \termone\termtwo:\typetwo}
          {
            \begin{array}{c}
              \icontextone;\iconstraintone;\tcontextone\vdash^{\eqpone}_{\itermtwo}\termone:\qlin{\ivarone<\itermone}{\typethree}{\typetwo}\\
              \icontextone,\ivarone;\iconstraintone,\ivarone<\itermone ;\tcontexttwo\vdash^{\eqpone}_{\itermthree} \termtwo:\typethree\\
              \icontextone;\iconstraintone\vdash^{\eqpone}\tcontextfour \tless \tcontextone\uplus\sum_{\ivarone<\itermone}{\tcontexttwo}\\
              \icontextone;\iconstraintone\models^{\eqpone}\itermfour\geq\itermtwo+\itermone+{\sum_{\ivarone<\itermone}\itermthree}
            \end{array}
          }
    $$    
\end{varitemize}
The other cases are similar.
\end{proof}
Weakening holds for term contexts:
\begin{lem}[Context Weakening]
\label{lem:context-weakening}
Let $\icontextone;\iconstraintone;\tcontextone\vdash_{\itermone}\termone:\typeone$. Then,
$\icontextone;\iconstraintone;\tcontextone,\tcontexttwo\vdash_{\itermone}\termone:\typeone$
whenever $\icontextone;\iconstraintone\vdash\conv{\tcontexttwo}$.
\end{lem}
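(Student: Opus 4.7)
The plan is to proceed by induction on the structure of the type derivation $\tdone$ for $\icontextone;\iconstraintone;\tcontextone\vdash_{\itermone}\termone:\typeone$, producing in each case a derivation of $\icontextone;\iconstraintone;\tcontextone,\tcontexttwo\vdash_{\itermone}\termone:\typeone$ by reapplying the same final typing rule after invoking the induction hypothesis on the subderivations. By $\alpha$-renaming, I assume the variables of $\tcontexttwo$ are fresh with respect to everything else occurring in the derivation. The base cases $\Axty$ and $\Natty$ are immediate: the only new obligation is well-definedness of the added context, which is supplied directly by the assumption $\icontextone;\iconstraintone\vdash\conv{\tcontexttwo}$. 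The unary rules $\Lamty$, $\Sucty$, $\Predty$ are handled by a single appeal to the induction hypothesis on the sole subderivation.

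For $\Apty$ and $\Ifty$, whose side condition has the form ``the conclusion context is $\tless$ a union (or bounded sum) of premise contexts'', I apply the induction hypothesis to the premise typing the function (for $\Apty$) or the scrutinee (for $\Ifty$), adding $\tcontexttwo$ to its context, and then reapply the rule with $\tcontexttwo$ also added on the conclusion side. The new subtyping side condition decomposes pointwise into the original side condition on the unchanged variables together with the trivial $\tcontexttwo\tless\tcontexttwo$, using freshness of the variables of $\tcontexttwo$ so that no collision occurs with the summed argument/branch context.

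The genuinely delicate case is $\Recty$, whose side condition $\tcontexthree\tless\sum_{\ivartwo<\itermfive}\tcontextone$ places the sole premise's context under a bounded sum. A naive addition of $\tcontexttwo$ to the premise via the induction hypothesis would force me to derive $\tcontexttwo\tless\sum_{\ivartwo<\itermfive}\tcontexttwo$, which fails in general: bounded sums of modal types enlarge their bound, while $\qbleq$ requires the \emph{larger} bound on the left. To circumvent this, I first prove the weakening for the restricted shape $\tcontexttwo_0$ obtained by replacing every declaration $\varone:\qbang{\ivarone<\itermthree}{\typeone}$ in $\tcontexttwo$ with $\varone:\qbang{\ivarone<\natit{0}}{\typeone}$. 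For $\tcontexttwo_0$ the bounded sum is degenerate, $\sum_{\ivartwo<\itermfive}\tcontexttwo_0\cong\tcontexttwo_0$ (since $\sum_{\ivartwo<\itermfive}\natit{0}\kleq\natit{0}$), so the reapplication of $\Recty$ with premise context $\tcontextone,\tcontexttwo_0$ goes through without obstruction. Finally, I invoke Lemma~\ref{lem:proof-strenghtening} (Subtyping) to upgrade $\tcontexttwo_0$ to $\tcontexttwo$ contravariantly in the context: each required $\qbang{\ivarone<\itermthree}{\typeone}\tless\qbang{\ivarone<\natit{0}}{\typeone}$ is derivable by $\qbleq$, the index premise $\natit{0}\leq\itermthree$ being trivial and the inner subtyping on $\typeone$ vacuous under the empty bound $\ivarone<\natit{0}$.
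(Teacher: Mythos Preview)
Your approach is essentially the same as the paper's---induction on the derivation---but you supply considerably more detail. The paper dismisses the proof as ``Easy, by induction on the derivation proving $\icontextone;\iconstraintone;\tcontextone\vdash_{\itermone}\termone:\typeone$'', without addressing the $\Recty$ case at all. You are right that this case is not entirely routine: the side condition $\tcontexthree\tless\sum_{\ivartwo<\itermfive}\tcontextone$ forces the added declarations to appear under a bounded sum on the premise side, and naively adding $\tcontexttwo$ there would require $\tcontexttwo\tless\sum_{\ivartwo<\itermfive}\tcontexttwo$, which indeed fails in general. Your workaround---weaken the premise by the zero-bound context $\tcontexttwo_0$ first (for which $\sum_{\ivartwo<\itermfive}\tcontexttwo_0\cong\tcontexttwo_0$), reapply $\Recty$, and then upgrade $\tcontexttwo_0$ to $\tcontexttwo$ contravariantly via the Subtyping Lemma---is correct and non-circular, since Lemma~\ref{lem:proof-strenghtening} is proved before Context Weakening and does not rely on it.

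One small slip at the very end: in justifying $\qbang{\ivarone<\itermthree}{\typeone}\tless\qbang{\ivarone<\natit{0}}{\typeone}$ via $\qbleq$, you claim the inner subtyping premise is ``vacuous under the empty bound $\ivarone<\natit{0}$''. But the rule $\qbleq$ places its first premise under the \emph{left} bound, here $\ivarone<\itermthree$, not under $\ivarone<\natit{0}$. The premise still holds---it is reflexivity $\typeone\tless\typeone$ under $\ivarone<\itermthree$, which follows from the well-definedness of $\typeone$ there, itself a consequence of the hypothesis $\icontextone;\iconstraintone\vdash\conv{\tcontexttwo}$---but not for the vacuity reason you give.
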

\begin{proof}
  Easy, by induction on the derivation proving $\icontextone;\iconstraintone;\tcontextone\vdash_{\itermone}\termone:\typeone$.
\end{proof}
Another useful transformation on type derivations is substitution of an index variable for an index term:
\begin{lem}[Index Term Substitution]
\label{lem:I2Term-substitution}
Let $\icontextone,\ivarone;\iconstraintone;\tcontextone\vdash_\itermone \termone :\typeone$.
Then we have $$\icontextone;\sb{\iconstraintone}{\ivarone}{\itermtwo},\iconstrainttwo;\sb{\tcontextone}{\ivarone}{\itermtwo}\vdash_{\sb{\itermone}{\ivarone}{\itermtwo}} \termone:\sb{\typeone}{\ivarone}{\itermtwo}$$
for every $\itermtwo$ such that $\icontextone,\iconstrainttwo\models^{\eqpone}\conv{\itermtwo}$.
\end{lem}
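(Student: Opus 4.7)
The proof proceeds by induction on the structure of a derivation $\tdone$ of $\icontextone,\ivarone;\iconstraintone;\tcontextone\vdash_\itermone \termone :\typeone$. Throughout, we freely use $\alpha$-renaming of bound index variables so that any binder occurring in $\tdone$ (in types of the form $\qbang{\ivarthree<\itermfour}{\typetwo}$ or in subderivations that introduce a fresh variable, as in $\Apty$ and $\Recty$) is distinct from $\ivarone$ and from the free variables of $\itermtwo$, so that substitution commutes with the structural decomposition of judgements in the obvious way. Before starting the induction, one needs two auxiliary facts: (i) a counterpart of Lemma~\ref{lem:subst-respect-index-ineq-I2Type} for semantic consequences, namely that if $\icontextone,\ivarone;\iconstraintone\models^\eqpone \itermthree\leq\itermfour$ and $\icontextone;\iconstrainttwo\models^\eqpone\conv{\itermtwo}$, then $\icontextone;\sb{\iconstraintone}{\ivarone}{\itermtwo},\iconstrainttwo\models^\eqpone \sb{\itermthree}{\ivarone}{\itermtwo}\leq\sb{\itermfour}{\ivarone}{\itermtwo}$, which follows directly from the semantic definition of $\models^\eqpone$; and (ii) the analogous statement for $\conv{(\cdot)}$ and for well-defined types. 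Both are routine once one has checked that substitution distributes correctly across the term constructors of the index language, which for bounded sums and forest cardinalities is the standard capture-avoiding definition.

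The base cases and the propositional rules ($\Axty$, $\Natty$, $\Sucty$, $\Predty$, $\Lamty$, $\Ifty$) are handled uniformly: every premise is either a subderivation, to which the induction hypothesis applies, or a semantic/subtyping side condition, to which fact (i) applies. The interesting bookkeeping is to observe that the binder $\ivarthree$ appearing inside $\qbang{\ivarthree<\itermfour}{\typethree}$ in the $\Axty$ rule is fresh, so that $\sb{(\qbang{\ivarthree<\itermfour}{\typethree})}{\ivarone}{\itermtwo}$ is $\qbang{\ivarthree<\sb{\itermfour}{\ivarone}{\itermtwo}}{\sb{\typethree}{\ivarone}{\itermtwo}}$, and similarly for the subtyping premise $\sb{\typethree}{\ivarthree}{\natit{0}}\tless\typeone$. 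For $\Apty$, the second premise introduces a fresh index variable $\ivarone'$ bounded by the modal index $\itermone$; freshness ensures that substitution of $\ivarone$ by $\itermtwo$ commutes with the bounded sum $\sum_{\ivarone'<\itermone}\tcontexttwo$ and with $\sum_{\ivarone'<\itermone}\itermthree$.

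The main obstacle is the $\Recty$ case, where forest cardinality expressions appear in the premises and where the bound variable $\ivartwo$ of the recursion ranges over recursive calls. After applying the induction hypothesis to the premise of $\Recty$, one obtains a typing of the body $\termone$ with every free occurrence of $\ivarone$ replaced by $\itermtwo$, and one then needs to re-assemble a $\Recty$-instance whose new indices are the substituted versions of $\itermone,\itermfive,\itermsix,\itermseven$ and whose forest-cardinality side conditions are the substituted ones. Because substitution commutes with $\Dfscomb{\ivartwo}{-}{-}{-}$ (again thanks to our freshness convention for $\ivartwo$ and $\ivarone$), each of the semantic premises $\icontextone,\ldots\models^\eqpone\Dfscomb{\ivartwo}{\natit{0}}{\natit{1}}{\itermone}\leq\itermfive,\itermsix$ and $\icontextone,\ldots\models^\eqpone\itermseven\geq\itermsix\mnu\natit{1}+\sum_{\ivartwo<\itermfive}\itermthree$ becomes, after substitution, a true semantic statement in $\sb{\iconstraintone}{\ivarone}{\itermtwo},\iconstrainttwo$; this is exactly fact (i). The subtyping premise involving $\sb{\typetwo}{\ivartwo}{\Dfscomb{\ivartwo}{\ivartwo+\natit{1}}{\ivarone}{\itermone}+\ivartwo+\natit{1}}\tless\typeone$ is handled similarly, using Lemma~\ref{lem:subst-respect-index-ineq-I2Type} and the fact that double substitution $\sb{\sb{\typetwo}{\ivartwo}{\cdot}}{\ivarone}{\itermtwo}$ can be reorganised as $\sb{\sb{\typetwo}{\ivarone}{\itermtwo}}{\ivartwo}{\sb{\cdot}{\ivarone}{\itermtwo}}$ thanks to freshness. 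Putting these pieces together yields the desired instance of $\Recty$, completing the case and the induction.
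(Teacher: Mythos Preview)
Your proposal is correct and follows essentially the same approach as the paper: induction on the type derivation, relying on $\alpha$-freshness of bound index variables so that substitution commutes with binders, together with the routine semantic fact that substitution preserves entailment. The paper only spells out the $\Axty$ and $\Lamty$ cases and declares the rest ``similar,'' whereas you go further and sketch the $\Apty$ and $\Recty$ cases explicitly; this extra detail is helpful but does not constitute a different strategy.
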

\begin{proof}
By induction on the structure of a derivation $\tdone$ for
$$
\icontextone,\ivarone;\iconstraintone;\tcontextone\vdash_\itermone \termone :\typeone.
$$
Let us examine some cases:
\begin{varitemize}
\item
  If $\tdone$ is just
  $$
  \infer[\Axty]
        {\icontextone,\ivarone;\iconstraintone;\tcontextone,\varone:
          \qbang{\ivartwo<\itermthree}{\typethree}\vdash_{\itermone}^{\eqpone} \varone:\typeone}
        {
          \begin{array}{c}
            \icontextone,\ivarone;\iconstraintone\models^{\eqpone} \natit{0}\leq \itermone \qquad
            \icontextone,\ivarone;\iconstraintone\models^{\eqpone} \natit{1}\leq \itermthree\\
            \icontextone,\ivarone;\iconstraintone\vdash^{\eqpone}\sb{\typethree}{\ivartwo}{\natit{0}}\tless{\typeone}\\
            \icontextone,\ivarone;\iconstraintone\vdash^{\eqpone}\conv{(\qbang{\ivartwo<\itermthree}{\typethree})} \qquad
            \icontextone,\ivarone;\iconstraintone\vdash^{\eqpone}\conv{\tcontextone}
          \end{array}
        }
  $$
  then of course we have that $\icontextone;\sb{\iconstraintone}{\ivarone}{\itermtwo},
  \iconstrainttwo\models^{\eqpone}\natit{0}\leq\sb{\itermone}{\ivarone}{\itermtwo}$
  and that $\icontextone;\sb{\iconstraintone}{\ivarone}{\itermtwo},\iconstrainttwo\models^{\eqpone}
  \natit{1}\leq\sb{\itermthree}{\ivarone}{\itermtwo}$.
  By Lemma \ref{lem:subst-respect-index-ineq-I2Type}, one obtains
  $\icontextone;\sb{\iconstraintone}{\ivarone}{\itermtwo},\iconstrainttwo\vdash^{\eqpone}
  \sb{(\sb{\typethree}{\ivartwo}{\natit{0}})}{\ivarone}{\itermtwo}\tless\sb{\typeone}{\ivarone}{\itermtwo}$.
  Please observe that $\ivartwo$ can be assumed not to occur free in $\itermtwo$, and
  as a consequence $\sb{(\sb{\typethree}{\ivartwo}{\natit{0}})}{\ivarone}{\itermtwo}\equiv
  \sb{(\sb{\typethree}{\ivarone}{\itermtwo})}{\ivartwo}{\natit{0}}$.
  Similarly,
  $\icontextone;\sb{\iconstraintone}{\ivarone}{\itermtwo},\iconstrainttwo\vdash^{\eqpone}
  \conv{(\sb{(\qbang{\ivartwo<\itermthree}{\typethree})}{\ivarone}{\itermtwo})}$
  and $\icontextone;\sb{\iconstraintone}{\ivarone}{\itermtwo},\iconstrainttwo\vdash^{\eqpone}
  \conv{\sb{\tcontextone}{\ivarone}{\itermtwo}}$. Again, 
  $\sb{(\qbang{\ivartwo<\itermthree}{\typethree})}{\ivarone}{\itermtwo}$ is syntactically
  identical to $\qbang{\ivartwo<\sb{\itermthree}{\ivarone}{\itermtwo}}{\sb{\typethree}{\ivarone}{\itermtwo}}$.
  As a consequence:
  $$
  \infer[\Axty]
        {\icontextone;\sb{\iconstraintone}{\ivarone}{\itermtwo},\iconstrainttwo;\sb{\tcontextone}{\ivarone}{\itermtwo},\varone:
          \qbang{\ivartwo<\sb{\itermthree}{\ivarone}{\itermtwo}}{\sb{\typethree}{\ivarone}{\itermtwo}}
          \vdash_{\sb{\itermone}{\ivarone}{\itermtwo}}^{\eqpone} \varone:\sb{\typeone}{\ivarone}{\itermtwo}}
        {
          \begin{array}{c}
            \icontextone;\sb{\iconstraintone}{\ivarone}{\itermtwo},\iconstrainttwo\models^{\eqpone} 
              \natit{0}\leq\sb{\itermone}{\ivarone}{\itermtwo} \qquad
            \icontextone;\sb{\iconstraintone}{\ivarone}{\itermtwo},\iconstrainttwo\models^{\eqpone} 
              \natit{1}\leq\sb{\itermthree}{\ivarone}{\itermtwo}\\
            \icontextone;\sb{\iconstraintone}{\ivarone}{\itermtwo},\iconstrainttwo\vdash^{\eqpone}
              \sb{(\sb{\typethree}{\ivarone}{\itermtwo})}{\ivartwo}{\natit{0}}\tless{\sb{\typeone}{\ivarone}{\itermtwo}}\\
            \icontextone;\sb{\iconstraintone}{\ivarone}{\itermtwo},\iconstrainttwo\vdash^{\eqpone}
              \conv{(\qbang{\ivartwo<\sb{\itermthree}{\ivarone}{\itermtwo}}{\sb{\typethree}{\ivarone}{\itermtwo}})} \qquad
            \icontextone;\sb{\iconstraintone}{\ivarone}{\itermtwo},\iconstrainttwo\vdash^{\eqpone}
              \conv{(\sb{\tcontextone}{\ivarone}{\itermtwo})}
          \end{array}
        }
  $$
\item
  If $\tdone$ is
  $$
  \infer[\Lamty]
        {\icontextone,\ivarone;\iconstraintone;\tcontextone \vdash_{\itermone} \lambda \varone.
          \termone:\qlin{\ivartwo<\itermthree}{\typethree}{\typetwo}}
        {
          \icontextone,\ivarone;\iconstraintone;\tcontextone,\varone :\qbang{\ivartwo<\itermthree}{\typethree}
          \vdash_{\itermone} \termone:\typetwo
        }
  $$
  then, by the induction hypothesis we get
  $$
  \icontextone;\sb{\iconstraintone}{\ivarone}{\itermtwo},\iconstrainttwo;\sb{\tcontextone}{\ivarone}{\itermtwo}, 
  \varone :\qbang{\ivartwo<\sb{\itermthree}{\ivarone}{\itermtwo}}{\sb{\typethree}{\ivarone}{\itermtwo}}
  \vdash_{\sb{\itermone}{\ivarone}{\itermtwo}} \termone:\sb{\typetwo}{\ivarone}{\itermtwo}.
  $$
  As a consequence, we can conclude by 
  $$
  \infer[\Lamty]
        {\icontextone;\sb{\iconstraintone}{\ivarone}{\itermtwo},\iconstrainttwo;\sb{\tcontextone}{\ivarone}{\itermtwo} 
          \vdash_{\sb{\itermone}{\ivarone}{\itermtwo}} \lambda \varone.\termone:\sb{(\qlin{\ivartwo<\itermthree}
                {\typethree}{\typetwo})}{\ivarone}{\itermtwo}}
        {
          \icontextone;\sb{\iconstraintone}{\ivarone}{\itermtwo},\iconstrainttwo;\sb{\tcontextone}{\ivarone}{\itermtwo},
          \varone :\qbang{\ivartwo<\sb{\itermthree}{\ivarone}{\itermtwo}}{\sb{\typethree}{\ivarone}{\itermtwo}}
          \vdash_{\sb{\itermone}{\ivarone}{\itermtwo}} \termone:\sb{\typetwo}{\ivarone}{\itermtwo}
        }
  $$
since $\qlin{\ivartwo<\sb{\itermthree}{\ivarone}{\itermtwo}}{\sb{\typethree}{\ivarone}{\itermtwo}}
{\sb{\typetwo}{\ivarone}{\itermtwo}}\equiv\sb{(\qlin{\ivartwo<\itermthree}{\typethree}{\typetwo})}{\ivarone}{\itermtwo}$.
\end{varitemize}
The other cases are similar.
\end{proof}
A particularly useful instance of Lemma~\ref{lem:I2Term-substitution} is the following:
\begin{lem}[Instantiation]
\label{lem:instantiation}
Let  $\icontextone,\ivarone;\iconstraintone,\ivarone<\itermone\vdash_{\itermthree} \termone:\typeone$. If
$\icontextone;\iconstrainttwo\models_{\eqpone} \itermtwo<\itermone$,
then, $\icontextone;\sb{\iconstraintone}{\ivarone}{\itermtwo},\iconstrainttwo\vdash_{\sb{\itermthree}{\ivarone}{\itermtwo}} 
\termone:\sb{\typeone}{\ivarone}{\itermtwo}$. 
\end{lem}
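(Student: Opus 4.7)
The plan is to derive this as a direct corollary of two earlier lemmas in this section, namely Lemma~\ref{lem:I2Term-substitution} (Index Term Substitution) and Lemma~\ref{model-strengthening-typing} (Constraint Strengthening). The argument is essentially bookkeeping on the constraint set after applying the substitution $\ivarone \mapsto \itermtwo$.

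First, I would apply Lemma~\ref{lem:I2Term-substitution} to the hypothesis judgment with $\itermtwo$ as the substituting index term. The required side condition $\icontextone;\iconstrainttwo \models^{\eqpone} \conv{\itermtwo}$ holds because the assumption $\icontextone;\iconstrainttwo \models^{\eqpone} \itermtwo < \itermone$ presupposes that both sides of the inequality are defined. The substitution produces a derivation whose constraint set is $\sb{\iconstraintone}{\ivarone}{\itermtwo},\ \sb{(\ivarone<\itermone)}{\ivarone}{\itermtwo},\ \iconstrainttwo$. By the standard variable convention, the bound index variable $\ivarone$ can be chosen fresh, so it does not occur free in $\itermone$; hence the substituted constraint simplifies to $\itermtwo<\itermone$. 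The resulting judgment is therefore $\icontextone;\sb{\iconstraintone}{\ivarone}{\itermtwo},\ \itermtwo<\itermone,\ \iconstrainttwo \vdash_{\sb{\itermthree}{\ivarone}{\itermtwo}} \termone : \sb{\typeone}{\ivarone}{\itermtwo}$.

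Next, I would apply Constraint Strengthening to drop the now-redundant constraint $\itermtwo<\itermone$. The weaker set $\sb{\iconstraintone}{\ivarone}{\itermtwo},\iconstrainttwo$ semantically entails the larger one above, because the hypothesis $\icontextone;\iconstrainttwo \models^{\eqpone} \itermtwo<\itermone$ ensures $\iconstrainttwo$ alone already forces the extra constraint. Lemma~\ref{model-strengthening-typing} therefore yields exactly the desired conclusion.

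There is no real obstacle here: the heavy lifting is done by the two preceding lemmas, and the remaining steps are pure constraint-set manipulation. The only minor subtleties are the tacit use of the variable convention to ensure that the substitution leaves $\itermone$ unchanged, and the extraction of $\conv{\itermtwo}$ from the semantic truth of $\itermtwo<\itermone$; both are entirely standard.
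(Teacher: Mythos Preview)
Your proof is correct and follows essentially the same route as the paper: apply Index Term Substitution (Lemma~\ref{lem:I2Term-substitution}) to introduce $\itermtwo$ for $\ivarone$, then discharge the leftover constraint $\itermtwo<\itermone$ using a constraint-manipulation lemma. The paper's one-line proof cites Lemma~\ref{lem:I2Term-substitution} together with Lemma~\ref{lem:proof-strenghtening} (Subtyping) for the second step, whereas you invoke Lemma~\ref{model-strengthening-typing} (Constraint Strengthening); your choice is in fact the more directly applicable one, since what is actually needed is to pass from the constraint set $\sb{\iconstraintone}{\ivarone}{\itermtwo},\itermtwo<\itermone,\iconstrainttwo$ to the semantically stronger $\sb{\iconstraintone}{\ivarone}{\itermtwo},\iconstrainttwo$, and Subtyping does not touch the constraint component at all.
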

\begin{proof}
  By Lemma \ref{lem:I2Term-substitution} and Lemma \ref{lem:proof-strenghtening}.
\end{proof}
Moreover a Generation Lemma will be useful.
\begin{lem}[Generation]
\label{lem:generation}$ $
  \begin{varenumerate}
  \item 
    Let $\icontextone;\iconstraintone;\tcontextone\vdash_{\itermthree} \lambda \varone.\termone:\typeone$, then 
    $\typeone=\qlin{\ivarone<\itermone}{\typetwo}{\typethree}$ and 
    $\icontextone;\iconstraintone;\tcontextone,\varone:\qbang{\ivarone<\itermone}{\typetwo}\vdash_{\itermthree}\termone:\typethree$;
  \item  
    Let $\icontextone;\iconstraintone;\tcontextone\vdash_{\itermthree} \val{0}:\Nat[\itermone,\itermtwo]$, then
    $\icontextone;\iconstraintone\models^{\eqpone}\itermone=\natit{0}$;
  \item  
    Let $\icontextone;\iconstraintone;\tcontextone\vdash_{\itermthree} \val{n+1}:\Nat[\itermone,\itermtwo]$, then
    $\icontextone;\iconstraintone\models^{\eqpone}\itermtwo\geq\natit{1}$.
  \end{varenumerate}
\end{lem}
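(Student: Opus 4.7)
The plan is to exploit the syntax-directedness of the typing rules, a feature that the authors emphasize immediately after Figure~\ref{fig:trule}: every term shape admits exactly one applicable typing rule at the root of any derivation. Accordingly, all three items will be proved simply by inspecting the unique rule whose conclusion pattern matches the given term, and then extracting the required information from its premises. No nontrivial induction is needed.

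For item 1, I would observe that the only rule whose conclusion types a $\lambda$-abstraction is $\Lamty$. Therefore any derivation of $\icontextone;\iconstraintone;\tcontextone\vdash_{\itermthree}\lambda\varone.\termone:\typeone$ must end with this rule, which forces $\typeone$ to have the shape $\qlin{\ivarone<\itermone}{\typetwo}{\typethree}$ for some $\itermone,\typetwo,\typethree$, and its unique premise is exactly the desired judgment $\icontextone;\iconstraintone;\tcontextone,\varone:\qbang{\ivarone<\itermone}{\typetwo}\vdash_{\itermthree}\termone:\typethree$. The whole content of item 1 is thus the shape of the $\Lamty$ rule.

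For items 2 and 3, the unique applicable rule is $\Natty$. Its premises include $\icontextone;\iconstraintone\models^{\eqpone}\itermone\leq\natit{n}$ and $\icontextone;\iconstraintone\models^{\eqpone}\natit{n}\leq\itermtwo$. For item 2, specializing $n=0$ yields $\icontextone;\iconstraintone\models^{\eqpone}\itermone\leq\natit{0}$. Since truth of an inequality entails the definedness of both sides (as stipulated in Section~\ref{sect:itep}), $\itermone$ denotes a natural number, which is necessarily nonnegative; combined with $\itermone\leq\natit{0}$ this forces $\icontextone;\iconstraintone\models^{\eqpone}\itermone=\natit{0}$. For item 3, the premise $\icontextone;\iconstraintone\models^{\eqpone}\natit{n+1}\leq\itermtwo$ immediately delivers $\icontextone;\iconstraintone\models^{\eqpone}\itermtwo\geq\natit{1}$, since $\natit{n+1}\geq\natit{1}$ holds unconditionally in $\NN$.

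There is essentially no obstacle: the statement is a direct consequence of syntax-directedness, and the only small subtlety is the elementary semantic observation that index terms, when defined, denote elements of $\NN$ and are therefore nonnegative (used in item 2). The one check worth doing explicitly is that no other rule in Figure~\ref{fig:trule} could have been applied at the root in either case, which is immediate from a glance at the conclusion patterns.
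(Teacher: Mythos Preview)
Your proposal is correct and follows exactly the approach the paper takes: the paper's entire proof is the one line ``All the points are immediate by an inspection of the rules,'' and you have simply spelled out that inspection in detail. The only additional content you provide is the elementary semantic observation about nonnegativity in item~2, which is indeed the one point where a reader might want a word of explanation.
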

\begin{proof} 
  All the points are immediate by an inspection of the rules.
\end{proof}
We are now ready to embark on a proof of Subject Reduction. As usual, the first step is a Substitution Lemma:
\begin{lem}[Term Substitution]\label{lemma:subst}
\label{lem:t2tsubstitution}
Let $\icontextone,\ivarone;\iconstraintone,\ivarone<\itermone;\emcon\vdash_\itermtwo \termone :\typeone$ and
$\icontextone;\iconstraintone;\varone:\qbang{\ivarone< \itermone}{\typeone},\tcontexttwo\vdash_\itermthree \termtwo:\typetwo$.
Then we have $\icontextone;\iconstraintone;\tcontexttwo\vdash_{\itermfour} \sb{\termtwo}{\varone}{\termone}:\typetwo$ for some $\itermfour$ such that $\icontextone;\iconstraintone\models^{\eqpone} \itermfour\leq {\itermthree+\itermone+\sum_{\ivarone<\itermone}\itermtwo} $.
\end{lem}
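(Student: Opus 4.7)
The plan is to proceed by induction on the structure of the type derivation $\tdone$ of $\icontextone;\iconstraintone;\varone:\qbang{\ivarone<\itermone}{\typeone},\tcontexttwo\vdash_{\itermthree}\termtwo:\typetwo$, case-splitting on the last rule applied; by syntax-directedness this is equivalent to induction on $\termtwo$.

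The base cases cover variables and numerals. When $\termtwo=\vartwo\neq\varone$, substitution is vacuous: the last rule is $\Axty$ applied to $\vartwo$, the hypothesis on $\varone$ plays no role, and one simply re-derives the axiom after dropping it (the well-definedness requirements on the remaining context are preserved), keeping $\itermfour=\itermthree$. When $\termtwo=\varone$, we have $\sb{\termtwo}{\varone}{\termone}=\termone$, and the $\Axty$ premises supply $\icontextone;\iconstraintone\vdash\sb{\typeone}{\ivarone}{\natit{0}}\tless\typetwo$ together with $\icontextone;\iconstraintone\models\natit{0}<\itermone$. Applying Instantiation (Lemma~\ref{lem:instantiation}) to the hypothesis with witness $\natit{0}$ yields $\icontextone;\iconstraintone;\emcon\vdash_{\sb{\itermtwo}{\ivarone}{\natit{0}}}\termone:\sb{\typeone}{\ivarone}{\natit{0}}$; Subtyping (Lemma~\ref{lem:proof-strenghtening}) then lifts the type to $\typetwo$, Context Weakening (Lemma~\ref{lem:context-weakening}) reintroduces $\tcontexttwo$, and Weight Monotonicity (Lemma~\ref{lem:weight-monotonicity}) relaxes the weight, since $\sb{\itermtwo}{\ivarone}{\natit{0}}\leq\sum_{\ivarone<\itermone}\itermtwo$ whenever $\itermone\geq\natit{1}$.

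For the inductive cases, the guiding principle is that the modal hypothesis $\varone:\qbang{\ivarone<\itermone}{\typeone}$ must be distributed through the bounded-sum structure of contexts introduced by $\Apty$, $\Ifty$ and $\Recty$. Consider $\termtwo=\termtwo_1\termtwo_2$: the $\Apty$ premises give subderivations for $\termtwo_1$ and $\termtwo_2$, and the context subtyping forces $\qbang{\ivarone<\itermone}{\typeone}$ to dominate $\mtypeone_1\uplus\sum_{\ivarone'<\itermone'}\mtypeone_2$, where $\mtypeone_1,\mtypeone_2$ are the assumptions on $\varone$ inherited by the two subderivations. Using the algebraic presentation of such sums as slices of $\qbang{\ivarone<\itermone}{\typeone}$, one partitions the range $\ivarone<\itermone$ between the function and the argument, applies the induction hypothesis to each subderivation (with appropriately index-shifted instances of the derivation of $\termone$, obtained via Lemma~\ref{lem:I2Term-substitution}), and reassembles the resulting costs through the bounded-sum operator. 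The total is bounded by $\itermthree+\itermone+\sum_{\ivarone<\itermone}\itermtwo$ by Weight Monotonicity. The $\Ifty$, $\Lamty$, $\Sucty$, $\Predty$ and $\Natty$ cases are analogous or strictly simpler, with $\Lamty$ requiring only $\alpha$-renaming of the bound variable away from $\varone$.

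The main obstacle will be the $\Recty$ case: the derivation for a fixpoint $\REC{\varthree}{\termfour}$ indexes the ``recursion tree'' via a forest cardinality of shape $\Dfscomb{\ivartwo}{-}{-}{\itermone'}$, and substituting for $\varone$ inside $\termfour$ means propagating the hypothesis through a context assumption distributed by $\sum_{\ivartwo<\itermfive}\tcontextone$. Here Lemma~\ref{lemma:dfsshift} is needed to move between absolute and relative offsets of the recursion tree after substitution, and Lemma~\ref{lemma:dfssum} is needed to rewrite the global count as a bounded sum of single-tree cardinalities, which is precisely the shape required to feed an occurrence of $\varone$ inside the fixpoint body into the induction hypothesis. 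Once these two index-term identities are in place, the case reduces to index-term bookkeeping and a final appeal to Weight Monotonicity and Subtyping to absorb the residual cost terms.
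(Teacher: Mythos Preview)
Your proposal is correct and follows essentially the same route as the paper: induction on the derivation for $\termtwo$, with the $\Axty$ case for $\varone$ handled exactly as you describe (Instantiation at $\natit{0}$, then Subtyping, Context Weakening, Weight Monotonicity), and the $\Apty$ case handled by splitting the modal assumption along the $\uplus$/bounded-sum decomposition, reindexing the derivation for $\termone$ via Lemma~\ref{lem:I2Term-substitution}, applying the induction hypothesis to each premise, and reassembling with a weight calculation. The paper in fact only spells out those two cases and dismisses the rest with ``the other cases are similar''.

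One remark on your treatment of $\Recty$: you are over-anticipating the difficulty. Lemmas~\ref{lemma:dfsshift} and~\ref{lemma:dfssum} manipulate forest cardinalities and are the key ingredients when \emph{unfolding} a fixpoint (as in the Subject Reduction proof), because there one must relate the recursion tree before and after peeling off the root. In the present lemma, however, the fixpoint is not unfolded: the variable being substituted is a \emph{free} variable $\varone$ distinct from the recursion variable, and it lives in the outer context $\tcontexthree$ subject only to $\tcontexthree\tless\sum_{\ivartwo<\itermfive}\tcontextone$. Propagating the substitution through this bounded-sum decomposition is therefore structurally the same as in the $\Apty$ case (partition, reindex via Lemma~\ref{lem:I2Term-substitution}, apply the induction hypothesis under the extra constraint $\ivartwo<\itermfive$, reassemble), and the forest-cardinality identities play no role. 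The weight bookkeeping is ordinary bounded-sum algebra.
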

\begin{proof}
As usual, this is an induction on the structure of a type derivation for $\termtwo$. All relevant inductive cases require
some manipulation of the type derivation for $\termone$. The previous lemmas give exactly the right degree of ``malleability''. 
Let $\tdone$ be a derivation for
$$
\icontextone;\iconstraintone;\varone:\qbang{\ivarone<\itermone}{\typeone},\tcontexttwo\vdash_{\itermthree} \termtwo:\typetwo.
$$
Let us examine some interesting cases, dependently on the shape of $\tdone$:
\begin{varitemize}
\item
  Consider $\tdone$ to be just
  $$
  \infer[\Axty]
        {\icontextone;\iconstraintone;\tcontexttwo,\varone:
          \qbang{\ivarone<\itermone}{\typeone}\vdash_{\itermthree}\varone:\typetwo}
        {
          \begin{array}{c}
            \icontextone;\iconstraintone\models^{\eqpone} \natit{0}\leq \itermthree \qquad
            \icontextone;\iconstraintone\models^{\eqpone} \natit{1}\leq \itermone\\
            \icontextone;\iconstraintone\vdash^{\eqpone}\sb{\typeone}{\ivarone}{\natit{0}}\tless{\typetwo}\\
            \icontextone;\iconstraintone\vdash^{\eqpone}\conv{(\qbang{\ivarone<\itermone}{\typeone})} \qquad
            \icontextone;\iconstraintone\vdash^{\eqpone}\conv{\tcontexttwo}
          \end{array}
        }
  $$
  Since $\icontextone;\emcon\models\conv{\natit{0}}$, applying Lemma 
  \ref{lem:instantiation} we have 
  $$
  \icontextone;\sb{\iconstraintone}{\ivarone}{\natit{0}};\emptyset\vdash_{\sb{\itermtwo}{\ivarone}{\natit{0}}} 
    \termone :\sb{\typeone}{\ivarone}{\natit{0}}
  $$
  and since $\iconstraintone$  does not contain 
  free occurrences of $\ivarone$ we obtain:
  $$
  \icontextone;\iconstraintone;\emptyset\vdash_{\sb{\itermtwo}{\ivarone}{\natit{0}}} \termone :\sb{\typeone}{\ivarone}{\natit{0}}.
  $$
  Now, by applying Lemma \ref{lem:context-weakening}, Lemma \ref{lem:weight-monotonicity}
  and Lemma~\ref{lem:proof-strenghtening} we can conclude
  $$
  \icontextone;\iconstraintone;\tcontexttwo\vdash_{\itermthree+\itermone+\sum_{\ivarone<\itermone}\itermtwo} 
    \termone:\typetwo
  $$
  since clearly
  $$
  \icontextone;\iconstraintone\models\sb{\itermtwo}{\ivarone}{\natit{0}}\leq\itermthree+\itermone+\sum_{\ivarone<\itermone}\itermtwo.
  $$
\item 
  Let us consider the case  $\tdone$ ends by an instance of the
  $\Apty$ rule. In particular, without loss of generality we can consider a situation as
  the following:
  $$
  \infer[\Apty]
        {\icontextone;\iconstraintone;\varone:\qbang{\ivarone<\itermone}{\typeone}
          \vdash_{\itermten} \termthree\termtwo:\typetwo}
        {
          \begin{array}{c}
            \icontextone;\iconstraintone;\varone:\qbang{\ivarone< \itermthree}{\typefour}
              \vdash_{\itermfive}\termthree:\qlin{\ivartwo<\itermseven}{\typethree}{\typetwo}\\
            \icontextone,\ivartwo;\iconstraintone,\ivartwo<\itermseven;
              \varone:\qbang{\ivarone<\itermfour}{(\sb{\typefour}{\ivarone}
                {\itermthree+\ivarone+\sum_{\ivarfour<\ivartwo}{\sb{\itermfour}{\ivartwo}{\ivarfour}}})}
              \vdash_{\itermsix} \termtwo:\typethree\\
            \icontextone;\iconstraintone\vdash\qbang{\ivarone<\itermone}{\typeone} 
              \tless\qbang{\ivarone<\itermthree+\sum_{\ivartwo<\itermseven}{\itermfour}}{\typefour}\\
            \icontextone;\iconstraintone\models\itermten\geq\itermfive+\itermseven+\sum_{\ivartwo<\itermseven}\itermsix 
          \end{array}
        }
   $$
By definition of subtyping,
$\icontextone;\iconstraintone,\ivarone<\itermone\vdash\typeone\tless\typefour$,
and $\icontextone;\iconstraintone\models^{\eqpone} \itermthree+\itermeight\leq \itermone$,
where $\itermeight\equiv\sum_{\ivartwo<\itermseven}{\itermfour}$.
So, by Lemma \ref{model-strengthening-typing}, we have 
$$
 \icontextone;\iconstraintone,\ivarone<\itermthree+\itermeight;\emptyset\vdash_{\itermtwo}\termone:\typeone
$$
and by Lemma \ref{lem:proof-strenghtening} we have 
$$
 \icontextone;\iconstraintone,\ivarone<\itermthree+\itermeight;\emptyset\vdash_{\itermtwo}\termone:\typefour
$$
(since $\icontextone;\iconstraintone,\ivarone<\itermthree+\itermeight\vdash\typeone\tless\typefour$).
Applying again Lemma \ref{model-strengthening-typing}
we obtain 
$$
 \icontextone;\iconstraintone,\ivarone<\itermthree;\emcon\vdash_{\itermtwo}\termone:\typefour
$$
and by induction hypothesis we get
$$
 \icontextone;\iconstraintone;\emcon\vdash_{\itermtwelve}\sb{\termthree}{\varone}{\termone}:\qlin{\ivartwo<\itermseven}{\typethree}{\typetwo}
$$
with $\icontextone;\iconstraintone\models^{\eqpone}\itermtwelve\leq \itermfive+\itermthree+\sum_{\ivarone<\itermthree}\itermtwo$.
We observe that
$$
\icontextone,\ivartwo,\ivarthree;\iconstraintone,\ivarone\leq\itermthree+\ivarthree+
  \sum_{\ivarfour<\ivartwo}\sb{\itermfour}{\ivartwo}{\ivarfour},\ivartwo<\itermseven,\ivarthree<\itermfour
  \models^\eqpone\ivarone<\itermthree+\itermeight.
$$
By Lemma \ref{model-strengthening-typing} we get
$$
 \icontextone,\ivartwo,\ivarthree;\iconstraintone,\ivarone\leq\itermthree+\ivarthree+
  \sum_{\ivarfour<\ivartwo}\sb{\itermfour}{\ivartwo}{\ivarfour},\ivartwo<\itermseven,\ivarthree<\itermfour;\emptyset
  \vdash_{\itermtwo}\termone:\typefour
$$
and by Lemma \ref{lem:I2Term-substitution} and Lemma \ref{lem:proof-strenghtening}
we obtain 
$$
 \icontextone;\iconstraintone,\ivarone<\itermfour,\ivartwo<\itermseven;\emptyset\vdash_{\itermnine}\termone:\sb{\typefour}{\ivarone}{\itermthree+\ivarone+\sum_{\ivarfour<\ivartwo}{\sb{\itermfour}{\ivartwo}{\ivarfour}}},
$$
where $\itermnine\equiv\sb{\itermtwo}{\ivarone}{\itermthree+\ivarone+\sum_{\ivarfour<\ivartwo}{\sb{\itermfour}{\ivartwo}{\ivarfour}}}$.
By induction hypothesis, we get
$$
 \icontextone;\iconstraintone,\ivartwo<\itermseven;\emcon\vdash_{\itermeleven} \sb{\termtwo}{\varone}{\termone}:\typethree
$$
with $\icontextone;\iconstraintone\models^{\eqpone}\itermeleven\leq\itermsix+ \itermfour+ \sum_{\ivarone<\itermfour}\itermnine$.
And we can conclude as follows:
$$
\infer[\Apty]
      { \icontextone;\iconstraintone;\emcon\vdash_{\itermtwelve+\itermseven+\sum_{\ivartwo<\itermtwo}\itermeleven} 
        \sb{\termthree}{\varone}{\termone}\sb{\termtwo}{\varone}{\termone}:\typetwo
      }{ 
        \begin{array}{c}
          \icontextone;\iconstraintone;\emcon\vdash_{\itermtwelve} \sb{\termthree}{\varone}{\termone}:
          \qlin{\ivartwo<\itermseven}{\typethree}{\typetwo}\\
          \icontextone;\iconstraintone;\emcon\vdash_{\itermeleven} \sb{\termtwo}{\varone}{\termone}:\typethree\\
        \end{array}
      }
$$
Please observe that:
\begin{align*}
\icontextone;\iconstraintone\models^{\eqpone}\itermtwelve+\itermseven+\sum_{\ivartwo<\itermtwo}\itermeleven&
  \leq(\itermfive+ \itermthree+ \sum_{\ivarone<\itermthree}\itermtwo)+\itermseven+
  \sum_{\ivartwo<\itermseven}(\itermsix+ \itermfour+ \sum_{\ivarone<\itermfour}\itermnine)\\
  &\leq(\itermfive+\itermseven+\sum_{\ivartwo<\itermseven}\itermsix)+(\itermthree+\sum_{\ivartwo<\itermseven}\itermfour)+
    (\sum_{\ivarone<\itermthree}\itermtwo+\sum_{\ivartwo<\itermseven}\sum_{\ivarone<\itermfour}\itermnine)\\
  &\leq(\itermfive+\itermseven+\sum_{\ivartwo<\itermseven}\itermsix)+(\itermthree+\sum_{\ivartwo<\itermseven}\itermfour)+
    \sum_{\ivarone<\itermthree+\sum_{\ivartwo<\itermseven}\itermfour}\itermtwo\\
  &\leq\itermten+\itermone+\sum_{\ivarone<\itermone}\itermtwo.
\end{align*}
\end{varitemize}
The other cases are similar.
\end{proof}
\begin{thm}[Subject Reduction]
Let $\icontextone;\iconstraintone;\emcon\vdash_\itermone \termone :\typeone$ and
$\termone\to \termtwo$. Then, $\icontextone;\iconstraintone;\emcon\vdash_\itermtwo \termtwo :\typeone$,
where $\icontextone;\iconstraintone\models\itermtwo\leq\itermone$.
\end{thm}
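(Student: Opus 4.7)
The plan is to proceed by induction on the derivation of the reduction $\termone \to \termtwo$, which splits into a handful of axiom cases (one per redex shape) and a few congruence cases. In each case the starting point is the Generation Lemma (Lemma~\ref{lem:generation}) together with a careful inspection of the rule in Figure~\ref{fig:trule} that must conclude the derivation of $\termone$, since the typing rules are syntax directed. The overall shape of every case is the same: extract the relevant premises, apply the induction hypothesis or the appropriate substitution/lifting lemma, and then rebuild a derivation for $\termtwo$, finally invoking Weight Monotonicity (Lemma~\ref{lem:weight-monotonicity}) so that the resulting weight can be chosen to be dominated by $\itermone$.

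The congruence cases (reduction inside $\suc$, $\pred$, $\CASE$, or at the head of an application) are routine. For example, in $\termone\termthree \to \termtwo\termthree$ the rule $\Apty$ gives a typing of $\termone$ at some weight $\itermtwo$, the induction hypothesis supplies a typing of $\termtwo$ at some smaller weight $\itermtwo'$, and plugging this back into $\Apty$ produces a derivation for $\termtwo\termthree$ whose total weight is bounded by the original one. The non-recursive axiom cases for $\suc$, $\pred$, and $\CASE$ are straightforward manipulations: the Generation Lemma pins down the bounds on the natural-number type and the context splitting, and a direct construction (using $\Natty$, $\Ifty$, and subtyping) yields the required judgement. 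The beta case $(\lambda\varone.\termone)\termtwo\to \sb{\termone}{\varone}{\termtwo}$ is the first genuinely interesting one: unfolding the $\Apty$ and $\Lamty$ rules exposes a typing of $\termone$ in a context extended by $\varone:\qbang{\ivarone<\itermone}{\typethree}$ and a typing of $\termtwo$ under the assumption $\ivarone<\itermone$, which is exactly the shape required by the Term Substitution Lemma (Lemma~\ref{lemma:subst}); the resulting weight bound matches the one demanded by $\Apty$.

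The hard case, and the one on which the entire argument pivots, is the fixpoint unfolding $\REC{\varone}{\termone}\to\sb{\termone}{\varone}{\REC{\varone}{\termone}}$. Here the $\Recty$ rule gives a derivation $\icontextone,\ivartwo;\iconstraintone,\ivartwo<\itermfive;\varone:\qbang{\ivarone<\itermone}{\typeone}\vdash_{\itermthree}\termone:\typetwo$, with $\ivartwo$ acting as a pointer into the tree of recursive calls and the forest cardinality $\Dfscomb{\ivartwo}{\natit{0}}{\natit{1}}{\itermone}$ counting them all. After unfolding, each of the $\sb{\itermone}{\ivartwo}{\natit{0}}$ occurrences of $\varone$ in $\termone$ must be replaced by a \emph{shifted} fixpoint that corresponds to the subtree rooted at the corresponding child. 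The plan is to instantiate the $\Recty$ premise at $\ivartwo:=\natit{0}$ to type the ``outer'' copy of $\termone$, and then, for each child position $\ivarone<\sb{\itermone}{\ivartwo}{\natit{0}}$, to reapply the $\Recty$ rule to build a typing of the copied $\REC{\varone}{\termone}$ whose bound $\itermfive$ is reindexed from the original one using exactly the identities provided by Lemmas~\ref{lemma:dfsshift} and~\ref{lemma:dfssum}: the former shifts the starting index of the forest so that the children's subtrees become forests in their own right, and the latter decomposes the parent forest as the sum of the singleton forests over each child. With these two identities one can align the type $\sb{\typetwo}{\ivartwo}{\Dfscomb{\ivartwo}{\ivartwo+\natit{1}}{\ivarone}{\itermone}+\ivartwo+\natit{1}}$ with the fresh occurrence of $\REC{\varone}{\termone}$ inside the substitution, then apply the Term Substitution Lemma once more, and finally check that the resulting weight is bounded by $\itermsix\mnu\natit{1}+\sum_{\ivartwo<\itermfive}\itermthree\geq\itermseven$ using the usual forest-cardinality rearrangements.

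The main obstacle, unsurprisingly, is exactly this fixpoint case: keeping straight the bookkeeping between the global index $\ivartwo$ of the original $\Recty$ derivation and the local indices of the child derivations, and verifying that the various subtyping side conditions of $\Recty$ still hold after substitution. Every ingredient for this has been set up in advance — the shift lemma, the sum lemma, Weight Monotonicity, Constraint Strengthening, and the Term Substitution Lemma — but assembling them correctly, and discharging the constraint on $\itermseven$ arising from the induction, is where the delicate work lies.
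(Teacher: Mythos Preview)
Your proposal is correct and follows essentially the same route as the paper: the $\beta$-case is handled by the Term Substitution Lemma, the $\mathtt{ifz}$ cases by Generation plus Constraint Strengthening, and the fixpoint unfolding by instantiating the $\Recty$ premise at $\ivartwo=\natit{0}$, rebuilding a $\Recty$ derivation for each child subtree via the forest-cardinality identities (Lemmas~\ref{lemma:dfsshift} and~\ref{lemma:dfssum}), and then invoking Term Substitution. The only cosmetic difference is that the paper phrases the outer induction as one on the \emph{typing} derivation rather than on the reduction derivation, but since the typing rules are syntax-directed the two organisations are interchangeable.
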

\begin{proof}
  By induction on the structure of a derivation $\tdone$ for
 $\icontextone;\iconstraintone;\emptyset\vdash_{\itermone} \termone :\typeone$
Let us examine the distinct cases:
\begin{varitemize}
\item
  Suppose $\tdone$ is 
  $$
  \infer[\Apty]{\icontextone;\iconstraintone;\emptyset\vdash_\itermone (\lambda \varone.\termone)\termtwo:\typeone}{
    \begin{array}{c}
      \icontextone;\iconstraintone;\emptyset\vdash_{\itermthree} \lambda \varone.\termone:\qlin{\ivarone<\itermfour}{\typetwo}{\typeone}\\
      \icontextone;\iconstraintone,\ivarone<\itermfour;\emptyset\vdash_{\itermfive} \termtwo:\typetwo\\
      \icontextone;\iconstraintone\models\itermthree+\itermfour+\sum_{\ivarone<\itermfour}\itermfive\leq\itermone
    \end{array}
  }
  $$
  By Lemma \ref{lem:generation}, Point 1, we have 
  $\icontextone;\iconstraintone;\varone:\qbang{\ivarone<\itermfour}{\typetwo} \vdash_{\itermthree} \termone:{\typeone}$.
  Then by Lemma \ref{lem:t2tsubstitution} we can conclude:
  $$
  \icontextone;\iconstraintone;\emptyset\vdash_{\itermtwo} \sb{\termone}{\varone}{\termtwo}:\typeone
  $$
  for $\icontextone;\iconstraintone\models^{\eqpone}\itermtwo\leq\itermthree+\itermfour+\sum_{\ivarone<\itermfour}\itermfive\leq\itermone$.
\item
  Suppose $\tdone$ is 
  $$
  \infer[\Ifty]
  {\icontextone;\iconstraintone;\emptyset \vdash_{\itermone}\CASE{\val{0}}{\termthree}{\termfour}:\typeone}
  {
    \begin{array}{c}
      \icontextone;\iconstraintone;\emptyset\vdash_{\itermthree} \val{0}:\Nat[\itermthree,\itermfour] \\
      \icontextone;\iconstraintone,\itermfour\leq\zero;\emptyset\vdash_{\itermfive} \termthree: \typeone \\
      \icontextone;\iconstraintone,\itermthree\geq 1;\emptyset\vdash_{\itermfive} \termfour: \typeone \\
      \icontextone;\iconstraintone\models\itermthree+\itermfive\leq\itermone
    \end{array}
  }
  $$
  By Lemma \ref{lem:generation}, Point 2, we have $\icontextone;\iconstraintone\models^{\eqpone} \itermfour\leq\zero$. So, by Lemma
  \ref{model-strengthening-typing} we can conclude 
  $\icontextone;\iconstraintone;\emptyset\vdash_{\itermfive} \termthree: \typeone$.
\item
  Suppose $\tdone$ is 
  $$
  \infer[\Ifty]
        {\icontextone;\iconstraintone;\emptyset \vdash_{\itermone}\CASE{\val{n+1}}{\termthree}{\termfour}:\typeone}
        {
          \begin{array}{c}
            \icontextone;\iconstraintone;\emptyset\vdash_{\itermthree} \val{n+1}:\Nat[\itermthree,\itermfour] \\
            \icontextone;\iconstraintone,\itermfour\leq\zero;\emptyset\vdash_{\itermfive} \termthree: \typeone \\
            \icontextone;\iconstraintone,\itermthree\geq 1;\emptyset\vdash_{\itermfive} \termfour: \typeone \\
            \icontextone;\iconstraintone\models\itermthree+\itermfive\leq\itermone
          \end{array}
        }
  $$
  By Lemma \ref{lem:generation}, Point 3, we have $\icontextone;\iconstraintone\models^{\eqpone} \itermthree\geq 1$. So, by Lemma
  \ref{model-strengthening-typing} we have $\icontextone;\iconstraintone;\emptyset\vdash_{\itermfive}\termfour:\typeone$.
\item
  Suppose $\tdone$ is
  $$
  \infer[\Recty]
  {\icontextone;\iconstraintone;\emcon\vdash_{\itermone} \REC{\varone}{\termone}:\typeone}
  {
    \begin{array}{c}
      \icontextone;\iconstraintone\vdash\Dfscomb{\ivartwo}{\natit{0}}{\natit{1}}{\itermtwo}\leq\itermfive,\itermeight\\
      \icontextone,\ivartwo;\iconstraintone,\ivartwo<\itermfive;
      \varone:\qbang{\ivarone<\itermtwo}{\typethree}\vdash_{\itermthree}\termone:\typetwo\\
      \icontextone;\iconstraintone\vdash\sb{\typetwo}{\ivartwo}{\natit{0}}\tless\typeone\\
      \icontextone,\ivarone,\ivartwo;\iconstraintone,\ivarone<\itermtwo,\ivartwo<\itermfive
        \vdash\sb{\typetwo}{\ivartwo}{\Dfscomb{\ivartwo}{\ivartwo+1}{\ivarone}{\itermtwo}+\ivartwo+1}\tless\typethree\\
      \icontextone,\iconstraintone\models\itermeight\mnu 1+\sum_{\ivartwo<\itermfive}\itermthree\leq\itermone
    \end{array}
  }
  $$    
  The index term $\itermtwo$ describes a tree $\tree{\itermtwo}$ (in the sense of forest
  cardinalities, see Section~\ref{sect:itep}) which in turn represents the tree of recursive calls.
  $\tree{\itermtwo}$ looks as follows:
  $$
  \xymatrix
  {
    & & \cdot\ar@{-}[dll]\ar@{-}[dl]\ar@{-}[dr] & \\
    \tree{\itermtwo}^1 & \tree{\itermtwo}^2 & \ldots & \tree{\itermtwo}^{\sb{\itermtwo}{\ivartwo}{\natit{0}}}
  }
  $$
  where $\tree{\itermtwo}^i$ represents the tree of recursive calls triggered by the $i$-th call
  to $\varone$ in $\termone$. We first proceed by giving a type to $\termone$ which somehow corresponds
  to the root of $\tree{\itermtwo}$. This will be done by substituting $\ivartwo$ for $\natit{0}$ in
  the derivation we get as an hypothesis of $\tdone$. Since $\icontextone;\iconstraintone\models_{\eqpone}\natit{0}<\itermfive$,
  by Lemma \ref{lem:instantiation} we have
  $$
  \icontextone;\iconstraintone;
  \varone:\qbang{\ivarone<\sb{\itermtwo}{\ivartwo}{\natit{0}}}{\sb{\typeone}{\ivartwo}{\natit{0}}}\vdash_{\sb{\itermthree}{\ivartwo}{\natit{0}}}
    \termone:\sb{\typetwo}{\ivartwo}{\natit{0}}.
  $$
  From the hypothesis $\icontextone;\iconstraintone\vdash\sb{\typetwo}{\ivartwo}{\natit{0}}\tless\typeone$
  and by the Subtyping Lemma, we obtain 
  $$
  \icontextone;\iconstraintone;
  \varone:\qbang{\ivarone<\sb{\itermtwo}{\ivartwo}{\natit{0}}}{\sb{\typeone}{\ivartwo}{\natit{0}}}\vdash_{\sb{\itermthree}{\ivartwo}{\natit{0}}}
    \termone:\typeone.
  $$
  Our objective now is building \emph{one} type derivation for $\REC{\varone}{\termone}$ that somehow
  reflect the $\sb{\itermtwo}{\ivartwo}{\natit{0}}$ subtrees $\tree{\itermtwo}^1,\ldots,\tree{\itermtwo}^{\sb{\itermtwo}{\ivartwo}{\natit{0}}}$.
  Speaking more formally, we want to prove that:
  \begin{equation}\label{equ:srfix-I}
  \icontextone;\iconstraintone,\ivarone<\sb{\itermtwo}{\ivartwo}{\natit{0}}\vdash_{\itermnine}\REC{\varone}{\termone}:
  \sb{\typeone}{\ivartwo}{\natit{0}}
  \end{equation}
  where
  $$
  \icontextone;\iconstraintone\models\sb{\itermthree}{\ivartwo}{\natit{0}}+\sb{\itermtwo}{\ivartwo}{\natit{0}}+
    \sum_{\ivarone<\sb{\itermtwo}{\ivartwo}{\natit{0}}}\itermnine\leq\itermone.
  $$
  That would immediately lead to the thesis. To reach (\ref{equ:srfix-I}), we proceed by first defining
  two index terms with a quite intuitive informal semantics:
  \begin{varitemize}
  \item
    First of all, we define $\itermsix$ to be
    $\Dfscomb{\ivartwo}{\natit{0}}{\natit{1}}{\sb{\itermtwo}{\ivartwo}{\ivartwo+\natit{1}+\Dfscomb{\ivartwo}{\natit{1}}{\ivarthree}{\itermtwo}}}$.
    Observe that $\ivarthree$ occurs free in $\itermsix$; indeed, $\itermsix$ 
    counts the number of nodes in the tree $\tree{\itermtwo}^\ivarthree$.
  \item
    Another useful index term is $\itermseven$, which is defined to be
    $\natit{1}+\ivartwo+\sum_{\ivarthree<\ivarone}\itermsix$. $\itermseven$ is designed
    as to return the label of a node in $\tree{\itermtwo}^\ivarone$ given
    $\ivarone$ and the offset $\ivartwo$. In other words,
    $\tree{\sb{\itermtwo}{\ivartwo}{\itermseven}}$ is a
    recursion tree isomorphic to $\tree{\itermtwo}^\ivarone$.
  \end{varitemize}
  Now, if we substitute $\ivartwo$ for $\itermseven$
  in one of the premises of $\tdone$, we get
  \begin{multline}\label{equ:srfix-II}
  \qquad\icontextone,\ivarone,\ivartwo;\iconstraintone,\ivarone<\sb{\itermtwo}{\ivartwo}{\natit{0}},\ivartwo<\sb{\itermsix}{\ivarthree}{\ivarone};
  \varone:\qbang{\ivarfour<\sb{\itermtwo}{\ivartwo}{\itermseven}}
         {\sb{\sb{\typethree}{\ivarone}{\ivarfour}}{\ivartwo}{\itermseven}}\\
         \vdash_{\sb{\itermthree}{\ivartwo}{\itermseven}}\termone:\sb{\typetwo}{\ivartwo}{\itermseven}.
  \end{multline}
  Since by Lemma \ref{lemma:dfssum} we have $\sum_{\ivarthree<\ivarfive}\itermsix \kleq\Dfscomb{\ivartwo}{\natit{1}}{\ivarfive}{\itermtwo}$
  we know that
  \begin{equation}\label{equ:srfix-III}
    \Dfscomb{\ivartwo}{\natit{0}}{\natit{1}}{\sb{\itermtwo}{\ivartwo}{\itermseven}}
    \kleq\Dfscomb{\ivartwo}{\natit{0}}{\natit{1}}{\sb{\itermtwo}{\ivartwo}{\natit{1}+\ivartwo+ \sum_{\ivarthree<\ivarone}\itermsix}}
    \kleq\Dfscomb{\ivartwo}{\natit{0}}{\natit{1}}{\sb{\itermtwo}{\ivartwo}{\natit{1}+\ivartwo+ \Dfscomb{\ivartwo}{\natit{1}}{\ivarone}\itermtwo}}
    \kleq\sb{\itermsix}{\ivarthree}{\ivarone}.
  \end{equation}
  Now, consider the problem of determining the index (in $\tree{\itermtwo}$)
  of the $(\ivarfour+1)$-th children of a node of index $\ivartwo$ inside
  $\tree{\itermtwo}^\ivarone$. There are two equivalent ways to
  compute it:
  \begin{varitemize}
  \item
    either you start from $\itermseven$, but then you substitute $\ivartwo$
    by $\ivartwo+\natit{1}+\Dfscomb{\ivartwo}{\ivartwo+\natit{1}}{\ivarfour}{\sb{\itermtwo}{\ivartwo}{\itermseven}}$;
  \item
    or you simply consider $\itermseven+\natit{1}+\Dfscomb{\ivartwo}{\itermseven+\natit{1}}{\ivarfour}{\itermtwo}$. 
  \end{varitemize}
  In the first case, you compute the desired index by merely instantiating $\itermseven$ appropriately, while in the
  second case you use $\itermseven$ without altering it. The observation above can be
  formalized as follows:
  \begin{multline*}
    \qquad\icontextone,\ivarone,\ivartwo,\ivarfour;\iconstraintone,\ivarone<\sb{\itermtwo}{\ivartwo}{\natit{0}},
    \ivartwo<\sb{\itermsix}{\ivarthree}{\ivarone},\ivarfour<
    \sb{\itermtwo}{\ivartwo}{\itermseven} \vdash\\
    \sb{\sb{\typetwo}{\ivartwo}{\itermseven}}
       {\ivartwo}{\ivartwo+\natit{1}+\Dfscomb{\ivartwo}{\ivartwo+\natit{1}}{\ivarfour}{
           \sb{\itermtwo}{\ivartwo}{\itermseven} }}
       \kleq\sb{\typetwo}{\ivartwo}{\itermseven+\natit{1}+\Dfscomb{\ivartwo}{\itermseven+\natit{1}}{\ivarfour}{\itermtwo}}.
  \end{multline*}
  By Lemma \ref{lem:instantiation}, we also obtain: 
  \begin{multline}\label{equ:srfix-IV}
    \qquad\icontextone,\ivarone,\ivartwo,\ivarfour;\iconstraintone,\ivarone<\sb{\itermtwo}{\ivartwo}{\natit{0}},
    \ivartwo<\sb{\itermsix}{\ivarthree}{\ivarone},\ivarfour< \sb{\itermtwo}{\ivartwo}{\itermseven}
    \vdash \\
    \sb{\sb{\typetwo}{\ivartwo}{\itermseven}}
       {\ivartwo}{\ivartwo+\natit{1}+\Dfscomb{\ivartwo}{\ivartwo+\natit{1}}{\ivarfour}{
           \sb{\itermtwo}{\ivartwo}{\itermseven}
       }}
       \tless
       \sb{\sb{\typeone}{\ivarone}{\ivarfour}}{\ivartwo}{\itermseven}.
  \end{multline}
  Now, (\ref{equ:srfix-II}), (\ref{equ:srfix-III}) and (\ref{equ:srfix-IV}) can be put together
  by way of rule $\Recty$, and one then conclude that
  $$
  \icontextone;\iconstraintone,\ivarone<\sb{\itermtwo}{\ivartwo}{\natit{0}};\emcon
  \vdash_{\sb{\itermsix}{\ivarthree}{\ivarone}\mnu\natit{1}+\sum_{\ivartwo<\sb{\itermsix}{\ivarthree}{\ivarone}}\sb{\itermthree}{\ivartwo}{\itermseven}}
  \REC{\varone}{\termone}:
  \sb{\typetwo}{\ivartwo}{\sb{\itermseven}{\ivartwo}{\natit{0}}}.
  $$
  But instantiating one of the hypothesis' of $\tdone$, we obtain 
  $$
  \icontextone,\ivarone;\iconstraintone,\ivarone<\sb{\itermtwo}{\ivartwo}{\natit{0}}
  \vdash\sb{\typetwo}{\ivartwo}{\Dfscomb{\ivartwo}{\natit{1}}{\ivarone}{\itermtwo}+\natit{1}}\tless\sb{\typethree}{\ivartwo}{\natit{0}}.
  $$
  By Lemma~\ref{lemma:dfssum}, we can prove that
  $\Dfscomb{\ivartwo}{\natit{1}}{\ivarone}{\itermtwo}+\natit{1}=\sb{\itermseven}{\ivartwo}{\natit{0}}$. Indeed, this is quite intuitive:
  the index of the root of $\tree{\itermtwo}^\ivarone$ can be computed in two equivalent ways through
  $\itermtwo$ or through $\itermseven$. As a consequence,
  $$
  \icontextone;\iconstraintone,\ivarone<\sb{\itermtwo}{\ivartwo}{\natit{0}};\emcon
  \vdash_{\itermnine}\REC{\varone}{\termone}:\sb{\typeone}{\ivartwo}{\natit{0}},
  $$
  where $\itermnine\equiv\sb{\itermsix}{\ivarthree}{\ivarone}\mnu \natit{1}+
  \sum_{\ivartwo<\sb{\itermsix}{\ivarthree}{\ivarone}}\sb{\itermthree}{\ivartwo}{\itermseven}$.
  But we are done, since
  \begin{align*}
    \icontextone;\iconstraintone&\models\sb{\itermthree}{\ivartwo}{\natit{0}}+\sb{\itermtwo}{\ivartwo}{\natit{0}}+  
      \sum_{\ivarone<\sb{\itermtwo}{\ivartwo}{\natit{0}}}\itermnine\\
    &\equiv\sb{\itermthree}{\ivartwo}{\natit{0}}+\sb{\itermtwo}{\ivartwo}{\natit{0}}+
      \sum_{\ivarone<\sb{\itermtwo}{\ivartwo}{\natit{0}}}(\sb{\itermsix}{\ivarthree}{\ivarone}\mnu\natit{1}+
      \sum_{\ivartwo<\sb{\itermsix}{\ivarthree}{\ivarone}}\sb{\itermthree}{\ivartwo}{\itermseven})\\
    &=(\sb{\itermtwo}{\ivartwo}{\natit{0}}+
      \sum_{\ivarone<\sb{\itermtwo}{\ivartwo}{\natit{0}}}(\sb{\itermsix}{\ivarthree}{\ivarone}\mnu\natit{1}))+
      \sb{\itermthree}{\ivartwo}{\natit{0}}+\sum_{\ivarone<\sb{\itermtwo}{\ivartwo}{\natit{0}}}
      \sum_{\ivartwo<\sb{\itermsix}{\ivarthree}{\ivarone}}\sb{\itermthree}{\ivartwo}{\itermseven}\\
    &\leq \Dfscomb{\ivartwo}{\natit{1}}{\sb{\itermtwo}{\ivartwo}{\natit{0}}}{\itermtwo} + \sb{\itermthree}{\ivartwo}{\natit{0}}+
      \sum_{\ivarone<\sb{\itermtwo}{\ivartwo}{\natit{0}}}
      \sum_{\ivartwo<\sb{\itermsix}{\ivarthree}{\ivarone}}\sb{\itermthree}{\ivartwo}{\itermseven}\\
    &\leq \itermeight\mnu \natit{1}+\sum_{\ivartwo<\itermfive}\itermthree\leq\itermone.
  \end{align*}
\end{varitemize}
This concludes the proof.
\end{proof}
%%%%%%%%%%%%%%%%%%%%%%%%%%%%%%%%
\section{Intensional Soundness}
%%%%%%%%%%%%%%%%%%%%%%%%%%%%%%%%
Subject Reduction already implies an \emph{extensional} notion of soundness for programs: if a term $\termone$ can 
be typed with $\vdash_\itermthree\termone:\Nat[\itermone,\itermtwo]$, then its normal form (if any) is
a natural number between $\semu{\itermone}$ and $\semu{\itermtwo}$. However, Subject Reduction does not tell us whether 
the evaluation of $\termone$ terminates, and in how much time. Has $\itermthree$ anything to do with
the complexity of evaluating $\termone$? The only information that can be extracted from the Subject Reduction
Theorem is that $\itermthree$ does not increase along reduction. 
\begin{figure*}\centering
\fbox{
\begin{minipage}[c]{.97\textwidth}
\begin{center}
{\footnotesize
\begin{tabular}{ccccccc}
\textbf{Term}        & \textbf{Environment} & \textbf{Stack} &     & \textbf{Term} & \textbf{Environment} & \textbf{Stack}\\ \hline
%%% APPLICATION %%%
$\termone\termtwo$ &                      
$\envone$ &
$\stone$ &                
$\rt$ &
$\termone$ &
$\envone$ &
$(\termtwo,\envone)\cdot\stone$ \\
%%% ABSTRACTION %%%
$\lambda\varone.\termone$ &                      
$\envone$ &
$\pairone\cdot\stone$ &                
$\rt$ &
$\termone$ &
$\pairone\cdot\envone$ &
$\stone$ \\
%%% VARIABLE %%%
$\varone$ &                      
$(\termone_0,\envone_0)\cdots(\termone_n,\envone_n)$ &
$\stone$ &                
$\rt$ &
$\termone_\varone$ &
$\envone_\varone$ &
$\stone$ \\
%%% CASE %%%
\hspace{-3mm}$\CASE{\termone}{\termtwo}{\termthree}$ &                      
$\envone$ &
$\stone$ &                
$\rt$ &
$\termone$ &
$\envone$ &
$\casecl{\termtwo}{\termthree}{\envone}\cdot\stone$ \\
%%% REC %%%
$\REC{\varone}{\termone}$ &                      
$\envone$ &
$\stone$ &                
$\rt$ &
$\termone$ &
$(\REC{\varone}{\termone},\envone)\cdot\envone$ &
$\stone$\\
%%% NUM SUC %%%
$\val{n}$ &                      
$\envone$ &
$\suc\cdot\stone$ &                
$\rt$ &
$\val{n+1}$ &
$\envone$ &
$\stone$ \\
%%% NUM PRED %%%
$\val{n}$ &                      
$\envone$ &
$\pred\cdot\stone$ &                
$\rt$ &
$\val{n\mnu 1}$ &
$\envone$ &
$\stone$ \\
%%% ZERO COND %%%
$\val{0}$ &                      
$\envone$ &
$\casecl{\termone}{\termtwo}{\envtwo}\cdot\stone$ &                
$\rt$ &
$\termone$ &
$\envtwo$ &
$\stone$ \\
%%% suc COND %%%
$\val{n+1}$ &                      
$\envone$ &
$\casecl{\termone}{\termtwo}{\envtwo}\cdot\stone$ &                
$\rt$ &
$\termtwo$ &
$\envtwo$ &
$\stone$ \\
%%% SUCCESSOR  %%%
$\suc(\termone)$ &                      
$\envone$ &
$\stone$ &                
$\rt$ &
$\termone$ &
$\envone$ &
$\suc\cdot\stone$ \\
%%% PREDECESSOR  %%%
$\pred(\termone)$ &                      
$\envone$ &
$\stone$ &                
$\rt$ &
$\termone$ &
$\envone$ &
$\pred\cdot\stone$ 
\end{tabular}}
\end{center}
\end{minipage}}
\caption{The $\Kld$ Transition Steps.}
\label{fig:kld-machine}
\end{figure*}

In this section, \emph{Intensional Soundness}  (Theorem \ref{thm:int-soundness} below) 
for the type system \PCFld\ will be proved. A Krivine's Machine $\Kld$ for 
\PCF\ programs will be first defined in Section \ref{sec:kam-definition}. Given a 
program (i.e. a closed term of base type), the machine $\Kld$ 
either evaluates it to normal form or diverges. 
A formal connection between the machine $\Kld$ and the type system 
\PCFld\ will be established by means of a weighted typability notion for 
machine configurations, introduced in Section \ref{sec:kam-weights}. 
This notion is the fundamental ingredient to keep track of the number of machine steps.
%%%%%%%%%%%%%%%%%%%%%%%%%%%%%%%%
\subsection{The $\Kld$ Machine}
%%%%%%%%%%%%%%%%%%%%%%%%%%%%%%%%
\label{sec:kam-definition}
The Krivine's Machine has been introduced as a natural device to evaluate pure lambda-terms under a weak-head notion of 
reduction~\cite{Krivine07}. Here, the standard Krivine's Machine is extended to a machine 
$\Kld$ which handles not only abstractions and applications, but also constants, conditionals and fixpoints.

The \emph{configurations} of the machine $\Kld$, ranged over by $\confone,\conftwo,\ldots$, 
are triples $\confone=(\termone,\envone,\stone)$ 
where $\envone$ and $\stone$ are two additional constructions: $\envone$ is an \emph{environment}, 
that is a (possibly empty) finite 
sequence of \emph{closures}; while $\stone$ is a (possibly empty) \emph{stack} of \emph{contexts}. 
Stacks are ranged over by $\stone,\sttwo,\ldots$. 
A \emph{closure}, as usual, is a pair $\pairone=(\termone,\envone)$ where $\termone$ is a
term and $\envone$ is an environment. A \emph{context} is either a closure,  a term $\suc$, a term $\pred$, or 
a triple $(\termtwo, \termthree,\envone)$ where $\termtwo, \termthree$ are terms and $\envone$ is an environment.

The transition steps between configurations of the $\Kld$ machine are given in Figure
\ref{fig:kld-machine}. The transition rules require some comments. First of all, a na\"ive management of name variables is used. 
A more effective description however, could be given by using standard de Bruijn indexes. 
Note that the triple $(\termtwo, \termthree,\envone)$ is used as a context for the conditional construction; moreover, in a 
recursion step, a copy of the recursive term is put in a closure on the top of the current environment.
As usual, the symbol $\rts$ denotes the reflexive and transitive closure of the transition relation $\rt$. 
The relation $\rts$ implements weak-head reduction. Weak-head normal 
form and the normal form coincide for programs. So the machine $\Kld$ is a correct device to evaluate programs. 
For this reason, the notation $\termone\ev \val{n}$ can be used as a shorthand
for $(\termone,\varepsilon,\varepsilon)\rts (\val{n},\envone,\varepsilon)$. Moreover, notations like $\confone\ev^{\natone}$ could also 
be used to stress that $\confone$ reduces to an irreducible configuration in exactly $\natone$ steps. The 
proof of the formal correctness of the abstract machine is outside the scope of this paper, however it should be clear 
that it could be obtained as a simple extension of the original one~\cite{Krivine07}.

Intensional Soundness will be proved by studying how the weight $\itermone$ of any program $\termone$ evolves during
the evaluation of $\termone$ by $\Kld$. This is possible because every reduction step in $\termone$ is decomposed into a number
of transitions in $\Kld$, and this decomposition highlights \emph{when}, precisely, the weight changes. The same
would be more difficult when performing plain reduction on terms. Proving Intensional Soundness this 
way requires, however, to keep track of the types and weights of all objects in a machine configuration. In other words, the
type system should be somehow generalized to an assignment system on \emph{configurations}.

%%%%%%%%%%%%%%%%%%%%%%%%%%%%%%%%%%%%%%%%%%%%%%%%%
\subsection{Types and Weights for Configurations}
%%%%%%%%%%%%%%%%%%%%%%%%%%%%%%%%%%%%%%%%%%%%%%%%%
\label{sec:kam-weights}
Assigning types and weights to configurations amounts to somehow keeping track of the nature of all terms
appearing in environments and stacks. This is captured by the rules in Figure~\ref{fig:liftingpcfld}.
\begin{figure*}
\fbox{
\begin{minipage}[c]{.97\textwidth}
\vspace{10pt}
\begin{center}
\textbf{Closures}
\end{center}
\vspace{2pt}
$$
\infer
{\icontextone;\iconstraintone\vdash^\eqpone_{\itermtwo}(\termone,\pairone_1\cdots\pairone_n):\typeone}
{
  \begin{array}{c}
    \icontextone;\iconstraintone;\varone_1:\qbang{\ivarone<\itermone_1}\typetwo_1,\ldots,\varone_n:\qbang{\ivarone<\itermone_n}\typetwo_n
      \vdash_\itermthree\termone:\typeone\\
    \icontextone,\ivarone;\iconstraintone,\ivarone<\itermone_i\vdash^\eqpone_{\itermfour_i}\pairone_i:\typetwo_i\\
    \icontextone;\iconstraintone\models\itermtwo\geq\itermthree+\itermone_1+\ldots+\itermone_n+\sum_{\ivarone<\itermone_1}
      \itermfour_1+\ldots+\sum_{\ivarone<\itermone_n}\itermfour_n.
  \end{array}
}
$$
\vspace{5pt}
\begin{center}
\textbf{Stacks}
\end{center}
\vspace{2pt}
$$
\begin{array}{ccc}
\infer
{\icontextone;\iconstraintone\vdash^\eqpone_\itermtwo\varepsilon:(\typeone,\typetwo)}
{
  \begin{array}{c}
    \icontextone;\iconstraintone\models^\eqpone\itermtwo\geq\natit{0}\\
    \icontextone;\iconstraintone\vdash^\eqpone\typeone\tless\typetwo
  \end{array}
}
&&
\infer
{\icontextone;\iconstraintone\vdash^\eqpone_\itermtwo\pairone\cdot\sttwo:(\qbang{\ivarone<\itermone}\typefour\lin\typethree,\typetwo)}
{
  \begin{array}{c}
    \icontextone;\iconstraintone,\ivarone<\itermone\vdash^\eqpone_{\itermthree}\pairone:\typefour\\
    \icontextone;\iconstraintone\vdash^\eqpone_\itermfour\sttwo:(\typethree,\typetwo)\\
    \icontextone;\iconstraintone\models^\eqpone\itermtwo\geq\itermfour+\sum_{\ivarone<\itermone}\itermthree+\itermone
  \end{array}
}
\end{array}
$$
\vspace{3pt}
$$
\begin{array}{ccc}
\infer
{\icontextone;\iconstraintone\vdash^\eqpone_\itermtwo\suc\cdot\sttwo:(\Nat[\itermone,\itermfive],\typetwo)}
{
  \begin{array}{c}
    \icontextone;\iconstraintone\vdash^\eqpone_\itermtwo\sttwo:(\Nat[\itermthree,\itermfour],\typetwo)\\
    \icontextone;\iconstraintone\vdash^\eqpone \Nat[\itermone+1,\itermfive+1]\tless \Nat[\itermthree, \itermfour]\\
  \end{array}
}
&&
\infer
{\icontextone;\iconstraintone\vdash^\eqpone_\itermtwo\pred\cdot\sttwo:(\Nat[\itermone,\itermfive],\typetwo)}
{
  \begin{array}{c}
    \icontextone;\iconstraintone\vdash^\eqpone_\itermtwo\sttwo:(\Nat[\itermthree,\itermfour],\typetwo)\\
    \icontextone;\iconstraintone\vdash^\eqpone \Nat[\itermone\mnu 1,\itermfive\mnu 1]\tless \Nat[\itermthree, \itermfour]\\
  \end{array}
}
\end{array}
$$
\vspace{3pt}
$$
\infer
{\icontextone;\iconstraintone\vdash^\eqpone_\itermtwo(\termone,\termtwo,\envone)\cdot\sttwo:(\Nat[\itermone,\itermfive],\typetwo)}
{
  \begin{array}{c}
    \icontextone;\iconstraintone,\itermone\leq 0\vdash^\eqpone_\itermthree(\termone,\envone):\typethree\\
    \icontextone;\iconstraintone,\itermfive\geq 1\vdash^\eqpone_\itermthree(\termtwo,\envone):\typethree
  \end{array}
  &
  \begin{array}{c}
    \icontextone;\iconstraintone\vdash^\eqpone_\itermfour\sttwo:(\typethree,\typetwo)\\
    \icontextone;\iconstraintone\models^\eqpone\itermtwo\geq\itermthree+\itermfour
  \end{array}
}
$$
\vspace{5pt}
\begin{center}
\textbf{Configurations}
\end{center}
$$
\infer
{\icontextone;\iconstraintone\vdash^\eqpone_\itermone(\termone,\envone,\stone):\typetwo}
{
  \icontextone;\iconstraintone\vdash^\eqpone_\itermthree(\termone,\envone):\typeone
  &
  \icontextone;\iconstraintone\vdash^\eqpone_\itermtwo\stone:(\typeone,\typetwo)
  &
  \icontextone;\iconstraintone\models^\eqpone\itermone\geq\itermthree+\itermtwo
}
$$
\vspace{-1pt}
\end{minipage}}
\caption{Lifting \PCFld\ Typing to Closures, Stacks and Configurations.}
\label{fig:liftingpcfld}
\end{figure*}
A formal connection between typed terms and typed configurations could be established as expected, 
and such connection could be shown to be preserved by reduction. However, the following lemma 
is everything we need in the sequel:
\begin{lem}
\label{lem:program-typing}
Let $\termone\in \prog$. Then,
$\icontextone;\iconstraintone;\emptyset\vdash^\eqpone_{\itermone} \termone:\typeone$ if and only if 
$\icontextone;\iconstraintone\vdash^\eqpone_\itermone(\termone,\varepsilon,\varepsilon):\typeone$.
\end{lem}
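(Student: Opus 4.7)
The plan is to prove both directions by direct inversion and construction, invoking the closure, stack, and configuration rules of Figure~\ref{fig:liftingpcfld}, together with the Subtyping Lemma (Lemma~\ref{lem:proof-strenghtening}) and Weight Monotonicity (Lemma~\ref{lem:weight-monotonicity}) to absorb the inequalities that appear in the premises.

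For the forward direction, assume $\icontextone;\iconstraintone;\emptyset\vdash^\eqpone_{\itermone}\termone:\typeone$. I first build the empty-environment closure $\icontextone;\iconstraintone\vdash^\eqpone_{\itermone}(\termone,\varepsilon):\typeone$ by instantiating the closure rule with $n=0$, so that the sum in its side-condition degenerates to just $\itermthree$ and the inequality $\itermone \geq \itermone$ holds trivially. Next I build $\icontextone;\iconstraintone\vdash^\eqpone_{\natit{0}}\varepsilon:(\typeone,\typeone)$ via the empty-stack rule, using reflexivity of $\tless$ (which follows from well-definedness of $\typeone$). Finally, the configuration rule assembles these into $\icontextone;\iconstraintone\vdash^\eqpone_{\itermone}(\termone,\varepsilon,\varepsilon):\typeone$, with side-condition $\itermone \geq \itermone + \natit{0}$ satisfied.

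For the backward direction, suppose $\icontextone;\iconstraintone\vdash^\eqpone_{\itermone}(\termone,\varepsilon,\varepsilon):\typeone$. Since the assignment rules for closures, stacks, and configurations are syntax-directed on the structure of the machine datum, I invert the configuration rule to obtain some $\typetwo$, $\itermtwo$, $\itermthree$ with $\icontextone;\iconstraintone\vdash^\eqpone_{\itermthree}(\termone,\varepsilon):\typetwo$, $\icontextone;\iconstraintone\vdash^\eqpone_{\itermtwo}\varepsilon:(\typetwo,\typeone)$, and $\icontextone;\iconstraintone\models^\eqpone \itermone \geq \itermthree + \itermtwo$. Inverting the empty-stack rule yields $\icontextone;\iconstraintone\vdash^\eqpone \typetwo \tless \typeone$ (plus $\itermtwo \geq \natit{0}$), and inverting the empty-environment closure rule yields some $\itermfour$ with $\icontextone;\iconstraintone;\emptyset\vdash^\eqpone_{\itermfour}\termone:\typetwo$ and $\icontextone;\iconstraintone\models^\eqpone \itermthree \geq \itermfour$.

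Combining, an application of Lemma~\ref{lem:proof-strenghtening} promotes the type from $\typetwo$ to $\typeone$, giving $\icontextone;\iconstraintone;\emptyset\vdash^\eqpone_{\itermfour}\termone:\typeone$. Chaining the side-condition inequalities yields $\icontextone;\iconstraintone\models^\eqpone \itermone \geq \itermfour$, so Lemma~\ref{lem:weight-monotonicity} lifts the weight from $\itermfour$ to $\itermone$, producing the desired $\icontextone;\iconstraintone;\emptyset\vdash^\eqpone_{\itermone}\termone:\typeone$. I do not anticipate any real obstacle: the statement is essentially an observation that on trivial stacks and environments the configuration rules reduce to (bounded versions of) the term typing judgment, and the only work is the bookkeeping to realign the weight and apply subtyping to the type component, both of which are handled by lemmas already proved.
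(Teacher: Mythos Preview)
Your proof is correct and complete. The paper does not actually provide a proof of this lemma---it is stated without proof as an immediate consequence of the definitions in Figure~\ref{fig:liftingpcfld}---and your argument spells out exactly the straightforward inversion and construction that the authors leave implicit.
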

Analogous notions of typability for closures, stacks and configurations
can be given following the simpler type discipline of \PCF\ proper. They can be 
obtained by simplifying those for \PCFld, see Figure~\ref{fig:liftingpcf}.
\begin{figure*}
\fbox{
\begin{minipage}[c]{.97\textwidth}
\vspace{10pt}
\begin{center}
\textbf{Closures}
\end{center}
\vspace{2pt}
$$
\infer
{\vdash(\termone,\pairone_1\cdots\pairone_n):\typeone}
{
  \varone_1:\typetwo_1,\ldots,\varone_n:\typetwo_n\vdash\termone:\typeone
  &
  \vdash\pairone_i:\typetwo_i
}
$$
\vspace{5pt}
\begin{center}
\textbf{Stacks}
\end{center}
\vspace{2pt}
$$
\begin{array}{ccc}
\infer
{\varepsilon:(\typeone,\typeone)}
{}
&&
\infer
{\vdash\pairone\cdot\sttwo:(\typefour\arr\typethree,\typetwo)}
{
  \vdash\pairone:\typefour
  &
  \vdash\sttwo:(\typethree,\typetwo)
}
\end{array}
$$
\vspace{3pt}
$$
\begin{array}{ccccc}
\infer
{\vdash\suc\cdot\sttwo:(\Nat,\typetwo)}
{\vdash\sttwo:(\Nat,\typetwo)}
&&
\infer
{\vdash\pred\cdot\sttwo:(\Nat,\typetwo)}
{\vdash\sttwo:(\Nat,\typetwo)}
&&
\infer
{\vdash(\termone,\termtwo,\envone)\cdot\sttwo:(\Nat,\typetwo)}
{
  \vdash(\termone,\envone):\typethree
  &
  \vdash(\termtwo,\envone):\typethree
  &
  \vdash\sttwo:(\typethree,\typetwo)
}
\end{array}
$$
\vspace{5pt}
\begin{center}
\textbf{Configurations}
\end{center}
$$
\infer
{\vdash(\termone,\envone,\stone):\typetwo}
{
  \vdash(\termone,\envone):\typeone
  &
  \vdash\stone:(\typeone,\typetwo)
}
$$
\vspace{-1pt}
\end{minipage}}
\caption{Extending \PCF\ Typing to Closures, Stacks and Configurations.}
\label{fig:liftingpcf}
\end{figure*}
If $\confone\rt\conftwo$ and $\tdone$ is a derivation of $\vdash\confone:\typeone$, then a derivation
$\tdtwo$ of $\vdash\conftwo:\typeone$ can be easily obtained by manipulating $\tdone$, and
we write $\tdone\rt\tdtwo$.
%%%%%%%%%%%%%%%%%%%%%%%%%%%%%%%%%%%%%%%%%%%%%%%%%%%%%%%%%
\subsection{Measure Decreasing and Intensional Soundness}\label{sec:int-sound-theorem}
%%%%%%%%%%%%%%%%%%%%%%%%%%%%%%%%%%%%%%%%%%%%%%%%%%%%%%%%%
An important property of Krivine's Machine says that during the evaluation of programs only subterms 
of the initial program are recorded in the environment. This justifies the notion of \emph{size} 
for configurations, denoted $\size{\confone}$, that will be used in the sequel. 
This is defined as  $\size{(\termone,\envone,\stone)}=\size{\termone}+\size{\stone}$. The size
$\size{\stone}$ of a stack $\stone$ is defined as the sum of sizes of its elements,
where $\size{(\termone,\envone)}=\size{\termone}$, $\size{\suc}=\size{\pred}=1$, and
$\size{(\termone,\termtwo,\envone)}=\size{\termone}+\size{\termtwo}$. 
Moreover, another consequence of the same property is the following lemma.

\begin{lem}
\label{lem:size-bound-terms-km}
Let $\termone\in\prog$ and let  $\confone=(\termone,\varepsilon,\varepsilon)$.
Then, for each $\conftwo=(\termtwo,\envone,\stone)$ such that $\confone\rightarrow^*\conftwo$ and for each $\termthree$ occurring in $\envone$ or 
$\stone$, $\size{\termthree}\leq \size{\termone}$.
\end{lem}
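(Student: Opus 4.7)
The plan is to prove a slightly stronger statement by induction on the length of the reduction $\confone\rts\conftwo$: namely, every term occurring in the term-position, in the environment, or in the stack of any reachable configuration $\conftwo$ is a subterm of the initial program $\termone$. Once this is established, the lemma follows from the already-stated fact that subterms have size bounded by the size of the enclosing term.

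The base case is immediate: when $\conftwo = \confone = (\termone,\varepsilon,\varepsilon)$ the environment and stack are empty, and the single term occurrence is $\termone$ itself. For the inductive step, assume the property for $\conftwo$ and consider any transition $\conftwo \to \confthree$. I would go through the transition rules of $\Kld$ (Figure~\ref{fig:kld-machine}) one by one and check that each new term appearing in the term-position, the environment, or the stack of $\confthree$ is a subterm of some term already present in $\conftwo$, which by induction is itself a subterm of $\termone$. For instance, in the application step $(\termfour\termfive,\envone,\stone) \to (\termfour,\envone,(\termfive,\envone)\cdot\stone)$, both $\termfour$ and $\termfive$ are subterms of $\termfour\termfive$; in the $\lambda$-step the body $\termfour$ of $\lambda\varone.\termfour$ is a subterm of the current term, and the closure pushed onto the environment comes directly from the stack; in the variable lookup step the closure $(\termfour_\varone,\envone_\varone)$ moves from the environment to the term-position and stack (it was already present); and analogously for the successor, predecessor, and conditional transition steps.

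The potentially delicate cases are the fixpoint rule and the variable-lookup rule. In the fixpoint step $(\REC{\varone}{\termfour},\envone,\stone) \to (\termfour,(\REC{\varone}{\termfour},\envone)\cdot\envone,\stone)$, the closure being pushed on the environment contains the current term $\REC{\varone}{\termfour}$ itself, which by inductive hypothesis is already a subterm of $\termone$, so no growth occurs. In the variable-lookup rule, no new term is created: a closure already in the environment is promoted to the term-position, and the environment of that closure was itself populated only with closures containing subterms of $\termone$.

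Having established that every term occurring in $\envone$ or $\stone$ of $\conftwo$ is a subterm of $\termone$, we invoke the earlier lemma stating that $\size{\termthree}\leq\size{\termone}$ whenever $\termthree$ is a subterm of $\termone$. The main conceptual point — and the only place where care is needed — is to formulate the invariant so that it speaks about \emph{all} term occurrences (including those buried inside closures within environments inside other closures), so that the induction hypothesis is strong enough to carry through the fixpoint step; once that is done, each transition rule only reorganises existing subterms without creating new ones.
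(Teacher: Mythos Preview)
Your overall strategy---strengthen the statement to cover the current term as well, then induct on the reduction length---is exactly what the paper does. However, your specific strengthening is too strong and the induction fails at a case you dismiss as ``analogous'': in the transition $(\val{n},\envone,\suc\cdot\stone)\to(\val{n+1},\envone,\stone)$ (and the companion $\pred$ step), the numeral $\val{n+1}$ now sitting in the term position is in general \emph{not} a subterm of $\termone$. For instance, if $\termone\equiv\suc(\suc(\val{0}))$ then after two machine steps the current term is $\val{2}$, which never occurs syntactically in $\termone$. So the invariant ``every term in the configuration is a subterm of $\termone$'' is not preserved.

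The fix is easy: state the invariant in terms of \emph{size} rather than subterm-hood. Since $\size{\val{n}}=1$ for every $n$ by definition, the numeral steps are harmless for the size bound, and all the other cases go through exactly as you describe. This is what the paper does: its proof sketch says the strengthening one needs is that the size bound $\size{\termthree}\leq\size{\termone}$ holds not only for terms in $\envone$ and $\stone$ but also for the (non-head) subterms of the current term $\termtwo$. Alternatively, you could keep your subterm formulation but add the disjunct ``or is a numeral'' for the term position; since numerals are atomic and never get decomposed into the environment or stack, that also suffices.
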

\begin{proof}
  Easy, by induction on the length of the reduction
  $\confone\rightarrow^*\conftwo$. In fact, a strengthening of the
  statement is needed for induction to work. In particular, not only $\size{\termthree}\leq \size{\termone}$ for every
  $\termthree$ in $\envone$ and $\stone$, but also for the
  \emph{non-head subterms} of $\termtwo$.
\end{proof}
Intensional Soundness (Theorem \ref{thm:int-soundness}) expresses the fact that for a program $\termone\in\prog$ such that
$\emcon;\emcon;\emcon\vdash_{\itermone}^{\eqpone}\termone:\Nat[\itermtwo,\itermthree]$, the number 
$\semt{\itermone}{\assone}{\eqpone}$ is a good estimate of the number of steps needed to evaluate $\termone$. Moreover, thanks
to Subject Reduction, the numbers $\semt{\itermtwo}{\assone}{\eqpone}$ and $ \semt{\itermthree}{\assone}{\eqpone}$ give an upper 
and a lower bound, respectively, to the result of such an evaluation. This is proved by showing that during reduction a measure, 
expressed as the combination of the weight and the size of a configuration, decreases. In turn, this requires extending some 
of the properties in Section~\ref{sect:prop} from terms to configurations. 
As an example, substitution holds on configurations, too:
\begin{lem}
\label{lem:weigth-typability-properties}
If $\icontextone,\ivarone;\iconstraintone\vdash^\eqpone_\itermfour(\termone,\envone):\typeone$,
then $\icontextone;\sb{\iconstraintone}{\ivarone}{\itermtwo},\iconstrainttwo\vdash^\eqpone_{\sb{\itermfour}{\ivarone}{\itermtwo}}
(\termone,\envone):\sb{\typeone}{\ivarone}{\itermtwo}$ for every $\itermtwo$
such that $\icontextone,\iconstrainttwo\models^{\eqpone}\conv{\itermtwo}$.
\end{lem}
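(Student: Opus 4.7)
The plan is to prove this by straightforward induction on the structure of the derivation $\tdone$ of $\icontextone,\ivarone;\iconstraintone\vdash^\eqpone_\itermfour(\termone,\envone):\typeone$. Since there is exactly one rule for typing closures (see Figure~\ref{fig:liftingpcfld}), $\tdone$ must end with an instance of that rule. Writing $\envone=\pairone_1\cdots\pairone_n$, the premises are a term typing derivation $\icontextone,\ivarone;\iconstraintone;\varone_1:\qbang{\ivarthree<\itermone_1}\typetwo_1,\ldots,\varone_n:\qbang{\ivarthree<\itermone_n}\typetwo_n\vdash_\itermthree\termone:\typeone$, together with, for each $i$, a closure typing derivation $\icontextone,\ivarone,\ivarthree;\iconstraintone,\ivarthree<\itermone_i\vdash^\eqpone_{\itermfour_i}\pairone_i:\typetwo_i$, and the weight inequality $\icontextone,\ivarone;\iconstraintone\models\itermfour\geq\itermthree+\sum_i\itermone_i+\sum_i\sum_{\ivarthree<\itermone_i}\itermfour_i$.

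Next I would substitute $\itermtwo$ for $\ivarone$ throughout. For the term typing premise, I apply Lemma~\ref{lem:I2Term-substitution} directly, obtaining
$$\icontextone;\sb{\iconstraintone}{\ivarone}{\itermtwo},\iconstrainttwo;\sb{\varone_1:\qbang{\ivarthree<\itermone_1}\typetwo_1,\ldots,\varone_n:\qbang{\ivarthree<\itermone_n}\typetwo_n}{\ivarone}{\itermtwo}\vdash_{\sb{\itermthree}{\ivarone}{\itermtwo}}\termone:\sb{\typeone}{\ivarone}{\itermtwo},$$
where, as in the proof of Lemma~\ref{lem:I2Term-substitution}, the substitution commutes with the $\qbang{\cdot}{\cdot}$ binder as long as $\ivarthree$ is chosen fresh. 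For each closure premise $\pairone_i$, I apply the induction hypothesis (with the extended index context $\icontextone,\ivarthree$ playing the role of $\icontextone$ and with the constraint set augmented by $\ivarthree<\itermone_i$), yielding
$$\icontextone,\ivarthree;\sb{\iconstraintone}{\ivarone}{\itermtwo},\ivarthree<\sb{\itermone_i}{\ivarone}{\itermtwo},\iconstrainttwo\vdash^\eqpone_{\sb{\itermfour_i}{\ivarone}{\itermtwo}}\pairone_i:\sb{\typetwo_i}{\ivarone}{\itermtwo}.$$
The weight inequality from $\tdone$ is preserved under substitution, because the semantic consequence relation $\models^\eqpone$ is stable under replacing $\ivarone$ by any defined index term: this is essentially the same fact used to justify Lemma~\ref{lem:subst-respect-index-ineq-I2Type}, and yields $\icontextone;\sb{\iconstraintone}{\ivarone}{\itermtwo},\iconstrainttwo\models\sb{\itermfour}{\ivarone}{\itermtwo}\geq\sb{\itermthree}{\ivarone}{\itermtwo}+\sum_i\sb{\itermone_i}{\ivarone}{\itermtwo}+\sum_i\sum_{\ivarthree<\sb{\itermone_i}{\ivarone}{\itermtwo}}\sb{\itermfour_i}{\ivarone}{\itermtwo}$.

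Finally I reassemble these substituted judgments by a single application of the closure rule of Figure~\ref{fig:liftingpcfld}, obtaining exactly $\icontextone;\sb{\iconstraintone}{\ivarone}{\itermtwo},\iconstrainttwo\vdash^\eqpone_{\sb{\itermfour}{\ivarone}{\itermtwo}}(\termone,\envone):\sb{\typeone}{\ivarone}{\itermtwo}$. The only subtle point — what might count as the main obstacle — is that closures can be nested (a closure in $\envone$ contains its own environment of closures), so the induction must be on the \emph{height} of the derivation $\tdone$ rather than on $\envone$ itself; with this framing, the induction hypothesis applies cleanly to each sub-closure derivation. All remaining manipulations are the routine bookkeeping of substitution commuting with binders, freshness of $\ivarthree$, and commutation of $\sb{\cdot}{\ivarone}{\itermtwo}$ with the syntactic operations on types, contexts, and bounded sums.
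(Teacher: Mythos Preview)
Your proposal is correct and follows essentially the same approach as the paper: the paper's proof is simply ``By induction on the proof of $\icontextone,\ivarone;\iconstraintone\vdash^\eqpone_\itermfour(\termone,\envone):\typeone$, using Lemma~\ref{lem:I2Term-substitution},'' and you have spelled out exactly this induction, applying Lemma~\ref{lem:I2Term-substitution} to the term premise and the induction hypothesis to the nested closure premises.
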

\begin{proof}
  By induction on the proof of $\icontextone,\ivarone;\iconstraintone\vdash^\eqpone_\itermfour(\termone,\envone):\typeone$,
  using Lemma~\ref{lem:I2Term-substitution}.
\end{proof}
Moreover, type derivations for closures can be ``split'', exactly as  terms:
\begin{lem}
  \label{lem:typable-separating-sum}
  Let $\icontextone;\iconstraintone\vdash^\eqpone \qbang{\ivarone<\itermone}{\typeone}\tless \qbang{\ivarone<\itermtwo+\itermthree}{\typetwo}$
  and let  $\icontextone,\ivarone;\iconstraintone,\ivarone <\itermone\vdash^\eqpone_\itermfour(\termone,\envone):\typeone$
  Then, both $(\icontextone,\ivarone;\iconstraintone,\ivarone <\itermtwo)\vdash^\eqpone_\itermfour(\termone,\envone):\typetwo$
  and $\icontextone,\ivarone;\iconstraintone,\ivarone <\itermthree
  \vdash^\eqpone_{\sb{\itermfour}{\ivarone}{\itermtwo+\ivarone}}(\termone,\envone):\sb{\typetwo}{\ivarone}{\itermtwo+\ivarone}$.
\end{lem}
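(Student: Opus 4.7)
The plan is to derive both conclusions from the hypothesis by unpacking the subtyping assumption and then applying substitution and constraint strengthening at the level of closures. By inversion on rule $\qbleq$, the subtyping hypothesis $\icontextone;\iconstraintone\vdash^\eqpone\qbang{\ivarone<\itermone}{\typeone}\tless\qbang{\ivarone<\itermtwo+\itermthree}{\typetwo}$ gives us $\icontextone;\iconstraintone\models^\eqpone \itermtwo+\itermthree\leq\itermone$ together with $\icontextone,\ivarone;\iconstraintone,\ivarone<\itermtwo+\itermthree\vdash^\eqpone \typeone\tless\typetwo$.

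For the first claim, I would observe that $\iconstraintone,\ivarone<\itermtwo$ semantically implies $\iconstraintone,\ivarone<\itermone$ (since $\itermtwo\leq\itermtwo+\itermthree\leq\itermone$), so a closure-level analogue of Constraint Strengthening (Lemma \ref{model-strengthening-typing}) applied to the hypothesis yields $\icontextone,\ivarone;\iconstraintone,\ivarone<\itermtwo\vdash^\eqpone_\itermfour(\termone,\envone):\typeone$. Since $\ivarone<\itermtwo$ also implies $\ivarone<\itermtwo+\itermthree$, the subtyping derivation gives $\typeone\tless\typetwo$ in this context, and a closure-level Subtyping Lemma (the analogue of Lemma \ref{lem:proof-strenghtening}) converts the conclusion type into $\typetwo$, as required.

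For the second claim, I would invoke Lemma \ref{lem:weigth-typability-properties} to substitute $\ivarone$ by $\itermtwo+\ivarone$ in the hypothesis. Under $\iconstraintone,\ivarone<\itermthree$, the term $\itermtwo+\ivarone$ is defined and satisfies $\itermtwo+\ivarone<\itermtwo+\itermthree\leq\itermone$, so substitution gives $\icontextone,\ivarone;\iconstraintone,\ivarone<\itermthree\vdash^\eqpone_{\sb{\itermfour}{\ivarone}{\itermtwo+\ivarone}}(\termone,\envone):\sb{\typeone}{\ivarone}{\itermtwo+\ivarone}$. Applying Lemma \ref{lem:subst-respect-index-ineq-I2Type} to the subtyping derivation yields $\sb{\typeone}{\ivarone}{\itermtwo+\ivarone}\tless\sb{\typetwo}{\ivarone}{\itermtwo+\ivarone}$ in that context, and one final appeal to the closure-level Subtyping Lemma produces the desired judgement.

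The main obstacle is really just the bookkeeping: one has to verify that Constraint Strengthening and the Subtyping Lemma (both originally stated for terms) lift to closures. This should be straightforward because the closure rule in Figure \ref{fig:liftingpcfld} decomposes a closure judgement into a term judgement in an enlarged context plus independent premises for each element of the environment, and the substitution $\ivarone\mapsto\itermtwo+\ivarone$ commutes with this decomposition since $\ivarone$ is a fresh outer index variable that does not clash with the bound variables of the constituent judgements.
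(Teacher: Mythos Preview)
The paper states this lemma without proof, so there is no author's argument to compare against. Your approach is correct and is the natural one: unpack the $\qbleq$ rule to extract $\icontextone;\iconstraintone\models^\eqpone\itermtwo+\itermthree\leq\itermone$ and a pointwise subtyping $\typeone\tless\typetwo$, then obtain the two conclusions by (i) constraint strengthening along $\ivarone<\itermtwo\Rightarrow\ivarone<\itermone$ followed by subtyping, and (ii) index substitution $\ivarone\mapsto\itermtwo+\ivarone$ followed by the same two steps.

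Two minor remarks. First, inversion on $\qbleq$ literally gives the subtyping premise under $\ivarone<\itermone$, not $\ivarone<\itermtwo+\itermthree$; your stated form already incorporates one step of constraint strengthening. This is harmless but worth saying explicitly. Second, when you invoke Lemma~\ref{lem:weigth-typability-properties} with the substitution $\ivarone\mapsto\itermtwo+\ivarone$, the target variable is the same as the one being eliminated, so formally you substitute with a fresh name and then $\alpha$-rename; the side condition $\icontextone;\iconstrainttwo\models^\eqpone\conv{\itermtwo}$ is satisfied since the paper assumes bounds in well-defined modal types are total. Your final paragraph correctly identifies that the needed closure-level versions of Constraint Strengthening and Subtyping follow by unfolding the closure typing rule and applying the term-level lemmas componentwise.
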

The key step towards Intensional Soundness is the following:
\begin{lem}[Weighted Subject Reduction]
\label{lem:weight-decrease}
Suppose that $(\termone,\varepsilon,\varepsilon)\rightarrow^*\conftwo\rightarrow\confthree$ and let
$\conftwo$ be such that $\icontextone;\iconstraintone\vdash^\eqpone_\itermone\conftwo:\typeone$.
Then $\icontextone;\iconstraintone\vdash^\eqpone_\itermtwo\confthree:\typeone$, and one of the following holds:
\begin{varenumerate}
\item
  $\icontextone;\iconstraintone\models\itermone=\itermtwo$ but
  $\size{\conftwo}>\size{\confthree}$;
\item
  $\icontextone;\iconstraintone\models\itermone>\itermtwo$ and
  $\size{\confthree}<\size{\conftwo}+\size{\termone}$.
\end{varenumerate}
\end{lem}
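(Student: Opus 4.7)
The plan is to proceed by case analysis on the single transition $\conftwo\to\confthree$, as enumerated in Figure~\ref{fig:kld-machine}. In every case I would invert the derivation witnessing $\icontextone;\iconstraintone\vdash^\eqpone_\itermone\conftwo:\typeone$, splitting it via the configuration rule of Figure~\ref{fig:liftingpcfld} into a closure derivation for the head and a stack derivation with matching source and target types, and then reassemble the pieces into a derivation of the successor configuration, tracking both the weight and the size. The only global assumption on the trace is used through Lemma~\ref{lem:size-bound-terms-km}: every term that ever appears inside an environment or a stack has size at most $\size{\termone}$, which is exactly what is needed to secure the bound in case~(2).

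The purely administrative transitions---pushing the argument of an application onto the stack, unfolding $\CASE{-}{-}{-}$, $\suc(-)$ or $\pred(-)$ into a head plus a new stack context, and firing $\suc$, $\pred$ or a conditional against a numeral---fall into case~(1). In each of them the closure and stack typing rules can be rearranged so as to redistribute exactly the same weight between the new head and the new stack, while one syntactic constructor is consumed, so that $\size{\confthree}=\size{\conftwo}-1$ and $\itermtwo$ may be chosen equal to $\itermone$.

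The genuinely computational transitions---beta, variable lookup and recursive unfolding---fall into case~(2). For the beta step $(\lambda\varone.\termone,\envone,\pairone\cdot\stone)\to(\termone,\pairone\cdot\envone,\stone)$, Lemma~\ref{lem:generation} is used to invert the closure derivation of the abstraction, exposing a typing premise for $\termone$ under an enlarged context; this premise combines with the typing of $\pairone$ inherited from the stack to build a closure derivation for $(\termone,\pairone\cdot\envone)$, and the summand $\itermone+\sum_{\ivarone<\itermone}\itermthree$ in the closure rule (with $\itermone\geq\natit{1}$) witnesses the strict weight decrease. The variable lookup $(\varone,(\termone_0,\envone_0)\cdots(\termone_n,\envone_n),\stone)\to(\termone_\varone,\envone_\varone,\stone)$ extracts the closure sub-derivation for $\pairone_\varone$ from the environment, recovers the strict weight decrease from the disappearance of the remaining bindings (plus the slackness in $\Axty$), and derives the size bound directly from Lemma~\ref{lem:size-bound-terms-km}. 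The recursive unfolding is the most delicate: it requires replicating inside the configuration framework the combinatorial reasoning of the $\Recty$ case of Subject Reduction, exploiting Lemmas~\ref{lemma:dfsshift} and~\ref{lemma:dfssum} to reindex forest cardinalities so that the body $\termone$ becomes typable in the enlarged environment $(\REC{\varone}{\termone},\envone)\cdot\envone$, with the strict weight decrease stemming from the $\itermsix\mnu\natit{1}$ ingredient of the $\Recty$ rule. Throughout, Lemmas~\ref{lem:proof-strenghtening}, \ref{lem:weight-monotonicity}, \ref{lem:I2Term-substitution}, \ref{lem:weigth-typability-properties} and~\ref{lem:typable-separating-sum} furnish the low-level manipulations on derivations needed to align types and indices between the old and the new decompositions.

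The principal obstacle is the recursion case: reproducing the forest-cardinality bookkeeping inside the configuration format follows the same pattern as the $\Recty$ case of Subject Reduction, but one must now additionally ensure that \emph{exactly one} unit of weight is shed at each unfolding, since it is this per-step decrease that, together with the size bound of case~(2), will make the original weight $\itermone$ an upper bound on the length of the $\Kld$-computation in the ensuing Intensional Soundness theorem.
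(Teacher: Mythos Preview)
Your overall decomposition into a case analysis on the transition, with inversion of the configuration typing and reassembly, is correct and matches the paper. However, you misclassify the cases. In the paper, \emph{only} the variable-lookup transition falls under case~(2); \emph{all} other transitions---including the $\beta$-step and the fixpoint unfolding---fall under case~(1). Your claim that $\beta$ and recursion shed a unit of weight is wrong.

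For the $\beta$-step $(\lambda\varone.\termtwo,\envone,\pairone\cdot\stone)\to(\termtwo,\pairone\cdot\envone,\stone)$, the summand $\itermone+\sum_{\ivarone<\itermone}\itermthree$ associated with the argument closure does not disappear: it simply migrates from the stack-typing derivation (where it was contributed by the rule for $\pairone\cdot\sttwo$) to the closure-typing derivation for $(\termtwo,\pairone\cdot\envone)$ (where $\pairone$ is now an environment entry). The total weight is preserved exactly, and there is no side condition forcing the modal bound to be at least~$\natit{1}$ here; if the bound variable is unused, that bound is~$\natit{0}$. What does decrease is the size, since the $\lambda$-binder is consumed while the environment is not counted in $\size{-}$.

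The same holds for fixpoint unfolding: the forest-cardinality bookkeeping (mirroring the $\Recty$ case of Subject Reduction) shows that the weight of $(\termtwo,(\REC{\varone}{\termtwo},\envone)\cdot\envone)$ is \emph{at most} the weight of $(\REC{\varone}{\termtwo},\envone)$, so one may reuse $\itermone$ as the new weight; meanwhile $\size{\confthree}=\size{\termtwo}+\size{\stone}<\size{\REC{\varone}{\termtwo}}+\size{\stone}=\size{\conftwo}$. This is again case~(1). Your remark that ``exactly one unit of weight is shed at each unfolding'' reflects a misreading of the cost model: the bound in Theorem~\ref{thm:int-soundness} is $\size{\termone}\cdot(\semu{\itermone}+1)$, not $\semu{\itermone}$, precisely because up to $\size{\termone}$ many size-decreasing (weight-preserving) steps can occur between two variable lookups. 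The strict weight drop in case~(2) comes solely from the premise $\natit{1}\leq\itermthree_m$ of the $\Axty$ rule, which guarantees that the environment-binding counter contributes at least~$\natit{1}$ that is discarded when the closure is extracted.
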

\begin{proof}
%[Proof sketch.] 
The proof is by cases on the reduction $\conftwo\rt\confthree$. 
Condition $1$ can be shown to apply to all the cases but the one 
in which $\conftwo=(\varone,\envone,\stone)$. 
In that one, weight decreasing relies on 
the side condition in the typing rule for variables, while the bound 
on the size increasing comes from Lemma \ref{lem:size-bound-terms-km}.
We just present some cases, the others can be obtained analogously:
\begin{varitemize}
\item 
  Consider the case $\conftwo\equiv(\CASE{\termfour}{\termtwo}{\termthree},\envone,\stone)$.
  We want to prove Point 1, namely
  that $\confthree\equiv(\termone,\envone,(\termtwo,\termthree,\envone)\cdot\stone)$ is 
  such that $\icontextone;\iconstraintone\vdash_\itermtwo\confthree:\typeone$ where
  $\icontextone;\iconstraintone\models\itermone=\itermtwo$ and $\size{\conftwo}>\size{\confthree}$.
  The latter is immediate:
  \begin{align*}
  \size{\conftwo}&=1+\size{\termfour}+\size{\termtwo}+\size{\termthree}+\size{\stone}>\size{\termfour}+
    (\size{\termtwo}+\size{\termthree})+\size{\stone}\\
  &=\size{\termfour}+\size{(\termtwo,\termthree,\envone)\cdot\stone}=\size{\confthree}.
  \end{align*}
  Let us consider the former. By inspecting a proof of $\icontextone;\iconstraintone\vdash^\eqpone_\itermone\conftwo:\typeone$, we
  can easily derive the following judgments (where $\envone\equiv\pairone_1,\ldots,\pairone_n$):

  {\footnotesize
  \begin{align}
    \icontextone;\iconstraintone;\varone_1:\qbang{\ivarone<\itermthree_1^\termfour}\typethree_1,\ldots,\varone_n:
      \qbang{\ivarone<\itermthree_n^\termfour}\typethree_n
      &\vdash_{\itermone_\termfour}\termfour:\Nat[\itermfour,\itermfive];\label{equ:ifw}\\
    \icontextone;\iconstraintone,\itermfive\leq \natit{0};\varone_1:\qbang{\ivarone<\itermthree_1^{\termtwo\termthree}}
      {\sb{\typethree_1}{\ivarone}{\itermthree_1^{\termfour}+\ivarone}},
      \ldots,\varone_n:\qbang{\ivarone<\itermthree_n^{\termtwo\termthree}}{\sb{\typethree_n}{\ivarone}{\itermthree_n^{\termfour}+\ivarone}}
      &\vdash_{\itermone_{\termtwo\termthree}}\termtwo:{\typetwo};\label{equ:ifu}\\
    \icontextone;\iconstraintone,\itermfour\geq\natit{1};\varone_1:\qbang{\ivarone<\itermthree_1^{\termtwo\termthree}}
      {\sb{\typethree_1}{\ivarone}{\itermthree_1^{\termfour}+\ivarone}},
      \ldots,\varone_n:\qbang{\ivarone<\itermthree_n^{\termtwo\termthree}}{\sb{\typethree_n}{\ivarone}{\itermthree_n^{\termfour}+\ivarone}}
      &\vdash_{\itermone_{\termtwo\termthree}}\termthree:{\typetwo};\label{equ:ifv}\\
    \icontextone,\ivarone;\iconstraintone,\ivarone<\itermthree_i&\vdash_{\itermone_{\pairone_i}}\pairone_i:\typethree_i;\label{equ:ifclos}\\
    \icontextone;\iconstraintone&\vdash_{\itermone_{\stone}}\stone:(\typetwo,\typeone).\label{equ:ifstack}
  \end{align}}
where
  \begin{align}
    \icontextone;\iconstraintone&\vdash\qbang{\ivarone<\itermthree_i}\typetwo_i\tless 
      \qbang{\ivarone<\itermthree^{\termfour}_i}\typethree_i\uplus\qbang{\ivarone<\itermthree^{\termtwo\termthree}_i}
      {\sb{\typethree_i}{\ivarone}{\itermthree_i^\termfour+\ivarone}};\label{equ:iftless}\\
    \icontextone;\iconstraintone&\models\itermone\geq\itermone_\termfour+\itermone_{\termtwo\termthree}+\itermthree_1+\ldots+\itermthree_n+
      \sum_{\ivarone<\itermthree_1}\itermone_{\pairone_1}+\ldots+\sum_{\ivarone<\itermthree_n}\itermone_{\pairone_n}+\itermone_\stone.\label{equ:ifwei}
  \end{align}
  By Lemma~\ref{lem:typable-separating-sum} applied to (\ref{equ:ifclos}) and exploiting (\ref{equ:iftless}), we obtain that
  \begin{align*}
    \icontextone,\ivarone;\iconstraintone,\ivarone<\itermthree^\termfour_i&\vdash_{\itermone_{\pairone_i}}\pairone_i:\typethree_i;\\
    \icontextone,\ivarone;\iconstraintone,\ivarone<\itermthree^{\termtwo\termthree}_i&
      \vdash_{\sb{\itermone_{\pairone_i}}{\ivarone}{\itermthree^\termfour+\ivarone}}\pairone_i:
      \sb{\typethree_i}{\ivarone}{\itermthree^\termfour+\ivarone}.
  \end{align*}
  By way of (\ref{equ:ifw}), (\ref{equ:ifu}) and (\ref{equ:ifv}), we obtain
  \begin{align*}
    \icontextone;\iconstraintone,\itermfive\leq \zero&\vdash_{\itermone_{(\termfour,\envone)}}(\termfour,\envone):\Nat[\itermfour,\itermfive];\\
    \icontextone;\iconstraintone,\itermfour\geq\natit{1}&\vdash_{\itermone_{(\termtwo\termthree,\envone)}}(\termtwo,\envone):\typetwo;\\
    \icontextone;\iconstraintone&\vdash_{\itermone_{(\termtwo\termthree,\envone)}}(\termthree,\envone):\typetwo;
  \end{align*}
  where
  \begin{align*}
    \itermone_{(\termfour,\envone)}&\equiv\itermone_\termfour+\itermthree^\termfour_1+\ldots+\itermthree^\termfour_n+
      \sum_{\ivarone<\itermthree^\termfour_1}\itermone_{\pairone_1}+\ldots+\sum_{\ivarone<\itermthree^\termfour_n}\itermone_{\pairone_n};\\
    \itermone_{(\termtwo\termthree,\envone)}&\equiv\itermone_{\termtwo\termthree}+\itermthree_1^{\termtwo\termthree}+\ldots+
      \itermthree_n^{\termtwo\termthree}+\sum_{\ivarone<\itermthree_1^{\termtwo\termthree}}\sb{\itermone_{\pairone_1}}{\ivarone}{\itermthree_1^\termfour+\ivarone}
      \sum_{\ivarone<\itermthree_n^{\termtwo\termthree}}\sb{\itermone_{\pairone_n}}{\ivarone}{\itermthree_n^\termfour+\ivarone}.
  \end{align*}
  So, by definition and by (\ref{equ:ifstack})  we have that $\icontextone;\iconstraintone\vdash_{\itermone_{\termtwo\termthree}+\itermone_\stone}
  (\termtwo,\termthree,\envone)\cdot\stone:(\Nat[\itermfour,\itermfive],\termone)$.
  Thus, we can conclude
  that $\icontextone;\iconstraintone\vdash_\itermone\confthree:\typeone$ (since from (\ref{equ:ifwei}),
  it easily follows that $\icontextone;\iconstraintone\models\itermone\geq\itermone_{(\termfour,\envone)}+
    \itermone_{(\termtwo\termthree,\envone)}+\itermone_\stone$).
\item 
  Consider the case $\conftwo\equiv(\lambda\varone.\termtwo,\envone,\pairone\cdot\stone)$. 
  We want to prove Point 1, namely that $\confthree=(\termtwo,\pairone\cdot\envone,\stone)$ is 
  such that $\icontextone;\iconstraintone\vdash_\itermtwo\confthree:\typeone$ where
  $\icontextone;\iconstraintone\models\itermone=\itermtwo$ and $\size{\conftwo}>\size{\confthree}$.
  The latter is immediate, so let us consider the former. By inspecting a proof of 
  $\icontextone;\iconstraintone\vdash^\eqpone_\itermone\conftwo:\typeone$, we
  can easily derive the following judgments (where $\envone\equiv\pairone_1,\ldots,\pairone_n$), in
  particular using the Generation Lemma: 
  \begin{align}
    \icontextone;\iconstraintone;\varone_1:\qbang{\ivarone<\itermthree_1}\typethree_1,\ldots,\varone_n:
      \qbang{\ivarone<\itermthree_n}\typethree_n,\varone:\qbang{\ivarone<\itermfour}\typefour
      &\vdash_{\itermone_\termtwo}\termtwo:\typetwo;\label{equ:lamu}\\
    \icontextone,\ivarone;\iconstraintone,\ivarone<\itermthree_i&\vdash_{\itermone_{\pairone_i}}\pairone_i:\typethree_i;\label{equ:lamclos}\\
    \icontextone,\ivarone;\iconstraintone,\ivarone<\itermfour&\vdash_{\itermone_{\pairone}}\pairone:\typefour;\label{equ:lamcloss}\\
    \icontextone;\iconstraintone&\vdash_{\itermone_{\stone}}\stone:(\typetwo,\typeone).\label{equ:lamstack}
  \end{align}
  Moreover:
  $$
  \icontextone;\iconstraintone\vdash\itermone\geq\itermone_\termtwo+\itermthree_1+\ldots+\itermthree_n+
  \sum_{\ivarone<\itermthree_1}\itermone_{\pairone_1}+\ldots+\sum_{\ivarone<\itermthree_n}\itermone_{\pairone_n}+
  \itermfour+\sum_{\ivarone<\itermfour}\itermone_{\pairone}+\itermone_{\stone}.
  $$
  From (\ref{equ:lamu}), (\ref{equ:lamclos}) and (\ref{equ:lamcloss}), we obtain
  $\icontextone;\iconstraintone;\emptyset\vdash_{\itermone_{\pairone\cdot\envone}}(\termtwo,\pairone\cdot\envone):\typetwo$,
  where
  $$
  \itermone_{\pairone\cdot\envone}\equiv\itermone_\termtwo+\itermthree_1+\ldots+\itermthree_n+
  \sum_{\ivarone<\itermthree_1}\itermone_{\pairone_1}+\ldots+\sum_{\ivarone<\itermthree_n}\itermone_{\pairone_n}+
  \itermfour+\sum_{\ivarone<\itermfour}\itermone_{\pairone}.
  $$
  This, together with (\ref{equ:lamstack}) easily yields the thesis.
\item 
  Consider the case $\conftwo\equiv(\val{n},\envone,\suc\cdot\stone)$. 
  Again, we want to prove Point 1, that is $\confthree=(\val{n+1},\envone,\stone)$ is 
  such that $\icontextone;\iconstraintone\vdash_\itermtwo\confthree:\typeone$, where
  $\icontextone;\iconstraintone\models\itermone=\itermtwo$ and $\size{\conftwo}>\size{\confthree}$.
  The latter is easy: 
  $$
  \size{\conftwo}=\size{\val{n}}+\size{\suc\cdot\stone}=2+\size{\stone}+1>1+\size{\stone}=\size{\val{n+1}}+\size{\stone}=\size{\confthree},
  $$  
  so we consider the former. By inspecting a proof of 
  $\icontextone;\iconstraintone\vdash^\eqpone_\itermone\conftwo:\typeone$, we
  can easily derive the following judgments (where $\envone\equiv\pairone_1,\ldots,\pairone_n$) in
  particular using the Generation Lemma: 
  \begin{align}
    \icontextone;\iconstraintone;\varone_1:\qbang{\ivarone<\itermthree_1}\typethree_1,\ldots,\varone_n:
      \qbang{\ivarone<\itermthree_n}\typethree_n
      &\vdash_{\itermone_{\val{n}}}\val{n}:\Nat[\itermfour,\itermfive];\label{equ:natn}\\
    \icontextone,\ivarone;\iconstraintone,\ivarone<\itermthree_i&\vdash_{\itermone_{\pairone_i}}\pairone_i:\typethree_i;\label{equ:natclos}\\
     \icontextone;\iconstraintone&\vdash_{\itermone_{\stone}}\stone:(\Nat[\itermsix,\itermseven],\typeone).\label{equ:natstack}
  \end{align}
  Moreover:
  \begin{align}
    \icontextone;\iconstraintone&\models\itermone\geq\itermone_{\val{n}}+\itermthree_1+\ldots+\itermthree_n+
      \sum_{\ivarone<\itermthree_1}\itermone_{\pairone_1}+\ldots+\sum_{\ivarone<\itermthree_n}\itermone_{\pairone_n}+
      \itermone_{\stone};\label{equ:natweight}\\
    \icontextone;\iconstraintone&\vdash\Nat[\itermfour+1,\itermfive+1]\tless\Nat[\itermsix,\itermseven].\label{equ:natsub}
  \end{align}
  From (\ref{equ:natn}) and (\ref{equ:natsub}), we get
  $$
  \icontextone;\iconstraintone;\varone_1:\qbang{\ivarone<\itermthree_1}\typethree_1,\ldots,\varone_n:
      \qbang{\ivarone<\itermthree_n}\typethree_n
      \vdash_{\itermone_{\val{n}}}\val{n+1}:\Nat[\itermsix,\itermseven].
  $$
  This, together with (\ref{equ:natclos}), allows us to reach
  $\icontextone;\iconstraintone\vdash_{\itermone_{(\val{n+1},\envone)}}(\val{n+1},\envone):\Nat[\itermsix,\itermseven]$,
  where
  $$
  \itermone_{(\val{n+1},\envone)}\equiv\itermone_{\val{n}}+\itermthree_1+\ldots+\itermthree_n+
    \sum_{\ivarone<\itermthree_1}\itermone_{\pairone_1}+\ldots+\sum_{\ivarone<\itermthree_n}\itermone_{\pairone_n}.
  $$
  By (\ref{equ:natstack}), the thesis can be easily reached.
\item 
  Consider the case $\conftwo=(\REC{\varone}{\termtwo},\envone,\stone)$.  
  Yet another time, we want to prove Point 1, that is 
  $\confthree=(\termtwo,(\REC{\varone}{\termtwo},\envone)\cdot\envone,\stone)$ is 
  such that $\icontextone;\iconstraintone\vdash_\itermtwo\confthree:\typeone$, where
  $\icontextone;\iconstraintone\models\itermone=\itermtwo$ and $\size{\conftwo}>\size{\confthree}$.
  The latter is easy, as usual: 
  $$
  \size{\conftwo}=\size{\REC{\varone}{\termtwo}}+\size{\stone}>\size{\termtwo}+\size{\stone}=\size{\confthree},
  $$  
  so we consider the former. By inspecting a proof of 
  $\icontextone;\iconstraintone\vdash^\eqpone_\itermone\conftwo:\typeone$, we
  can easily derive the following judgments (where $\envone\equiv\pairone_1,\ldots,\pairone_n$):

  {
  \begin{align}
    \icontextone,\ivartwo;\iconstraintone,\ivartwo<\itermfour;\varone_1:\qbang{\ivarone<\itermthree_1}\typethree_1,\ldots,\varone_n:
      \qbang{\ivarone<\itermthree_n}\typethree_n,\varone:\qbang{\ivarone<\itermfive}\typefour
      &\vdash_{\itermone_{\termtwo}}\termtwo:\typetwo;\label{equ:fixu}\\
    \icontextone,\ivarone;\iconstraintone,\ivarone<\itermsix_i&\vdash_{\itermone_{\pairone_i}}\pairone_i:\typesix_i;\label{equ:fixclos}\\
     \icontextone;\iconstraintone&\vdash_{\itermone_{\stone}}\stone:(\typefive,\typeone).\label{equ:fixstack}
  \end{align}}
  Moreover:

  {\footnotesize
  \begin{align*}
    \icontextone;\iconstraintone&\vdash\sb{\typetwo}{\ivartwo}{\natit{0}}\tless\typefive;\\
    \icontextone,\ivarone,\ivartwo;\iconstraintone,\ivarone<\itermfive,\ivartwo<\itermfour
    &\vdash\sb{\typetwo}{\ivartwo}{\Dfscomb{\ivartwo}{\ivartwo+\natit{1}}{\ivarone}{\itermfive}+\ivartwo+\natit{1}}\tless\typefour;\\
    \icontextone;\iconstraintone&\vdash\qbang{\ivarone<\itermsix_i}\typesix_i\tless
      \sum_{\ivartwo<\itermfour}\qbang{\ivarone<\itermthree_i}\typethree_i;\\
    \icontextone;\iconstraintone&\models\Dfscomb{\ivartwo}{\natit{0}}{\natit{1}}{\itermfive}\leq\itermfour,\itermseven;\\
    \icontextone;\iconstraintone&\models\itermone\geq\itermseven\mnu\natit{1}+\sum_{\ivartwo<\itermfour}\itermone_{\termtwo}+
      \itermsix_1+\ldots+\itermsix_n+\sum_{\ivarone<\itermsix_1}\itermone_{\pairone_1}+\ldots+
      \sum_{\ivarone<\itermsix_n}\itermone_{\pairone_n}+\itermone_{\stone}.
  \end{align*}}
  By manipulations of the indices similar to the one used in the proof of Subject Reduction, we can derive
  the following from (\ref{equ:fixu}), given the judgments above:
  \begin{align*}
    \icontextone;\iconstraintone;\tcontextone,
      \varone:\qbang{\ivarone<\sb{\itermfive}{\ivartwo}{\natit{0}}}{\sb{\typefour}{\ivartwo}{\natit{0}}}
      &\vdash_{\sb{\itermone_{\termtwo}}{\ivartwo}{\natit{0}}}\termtwo:\typefive;\\
    \icontextone;\iconstraintone,\ivarone<\sb{\itermfive}{\ivartwo}{\natit{0}};\tcontexttwo
      &\vdash_{\sb{\itermeight}{\ivarthree}{\ivarone}
      \mnu \natit{1}+\sum_{\ivartwo<\sb{\itermeight}{\ivarthree}{\ivarone}}\sb{\itermone_\termtwo}{\ivartwo}{\itermnine}} 
      \REC{\varone}{\termtwo}:\sb{\typefour}{\ivartwo}{\natit{0}}.
  \end{align*}
  In the equations above,
  \begin{align*}
    \itermeight&\equiv\Dfscomb{\ivartwo}{\natit{0}}{\natit{1}}{\sb{\itermfive}{\ivartwo}{\ivartwo+\natit{1}+\Dfscomb{\ivartwo}{\natit{0}}{\ivarthree}{\itermfive}}};\\
    \itermnine&\equiv \natit{1}+\ivartwo+\sum_{\ivarthree<\ivarone}\itermeight;
  \end{align*}
  and $\tcontextone,\tcontexttwo$ can be chosen in such a way as to guarantee:
  \begin{align*}
    \icontextone;\iconstraintone\vdash&\varone_1:\qbang{\ivarone<\itermsix_1}\typesix_1,\ldots,\varone_n:\qbang{\ivarone<\itermsix_n}\typesix_n
    \teq\sum_{\ivarone<\sb{\itermfive}{\ivartwo}{\natit{0}}}\tcontexttwo\uplus\tcontextone\\
    &\tless\varone_1:\sum_{\ivartwo<\itermfour}\qbang{\ivarone<\itermthree_1}\typethree_1,\ldots,\varone_n:\sum_{\ivartwo<\itermfour}
      \qbang{\ivarone<\itermthree_n}\typethree_n.
  \end{align*}
  So we have that
  $$
  \icontextone;\iconstraintone\vdash_{\itermone_{(\termtwo,(\REC{\varone}{\termtwo},\envone)\cdot\envone)}}(\termtwo,(\REC{\varone}{\termtwo},\envone)\cdot\envone):
  \typefive,
  $$
  where
  \begin{align*}
    \itermone_{(\termtwo,(\REC{\varone}{\termtwo},\envone)\cdot\envone)}\equiv&\;
     \sb{\itermone_\termtwo}{\ivartwo}{\natit{0}}+\sb{\itermfive}{\ivartwo}{\natit{0}}+\sum_{\ivarone<\sb{\itermfive}{\ivartwo}{\natit{0}}}
     (\sb{\itermeight}{\ivarthree}{\ivarone}\mnu \natit{1}+\sum_{\ivartwo<\sb{\itermeight}{\ivarthree}{\ivarone}}
      \sb{\itermone_\termtwo}{\ivartwo}{\itermseven})\\ 
       &\qquad+\itermsix_1+\ldots+\itermsix_n+\sum_{\ivarone<\itermsix_1}\itermone_{\pairone_1}+\ldots+\sum_{\ivarone<\itermsix_n}\itermone_{\pairone_n}.
  \end{align*}
  The value of $\itermone_{(\termtwo,(\REC{\varone}{\termtwo},\envone)\cdot\envone)}$ can then 
  be proved to be equal or smaller than
  $$
  \itermseven\mnu\natit{1}+\sum_{\ivartwo<\itermfour}\itermone_\termtwo+\itermsix_1+\ldots+\itermsix_n+
  \sum_{\ivarone<\itermsix_1}\itermone_{\pairone_1}+\ldots+\sum_{\ivarone<\itermsix_n}\itermone_{\pairone_n},
  $$
  under the hypotheses in $\icontextone$. This immediately yields the thesis, given (\ref{equ:fixstack}).
\item 
  Consider the case $\conftwo=(\varone_m,((\termone_0,\envone_0),\ldots,(\termone_n,\envone_n)),\stone)$.
  We want to prove Point 2, that is $\confthree=(\termone_{m},\envone_{m},\stone)$ is 
  such that $\icontextone;\iconstraintone\vdash_\itermtwo\confthree:\typeone$, where
  $\icontextone;\iconstraintone\models\itermone>\itermtwo$ and 
  $\size{\confthree}<\size{\conftwo}+\size{\termone}$. The latter is immediate by Lemma \ref{lem:size-bound-terms-km},
  so we consider the former. By inspecting a proof of 
  $\icontextone;\iconstraintone\vdash^\eqpone_\itermone\conftwo:\typeone$, we
  can easily derive the following judgments
  \begin{align}
    \icontextone;\iconstraintone;\varone_1:\qbang{\ivarone<\itermthree_1}\typethree_1,\ldots,\varone_n:
      \qbang{\ivarone<\itermthree_n}\typethree_n&\vdash_{\itermone_{\varone_m}}\varone_m:\typetwo\label{equ:varx};\\
    \icontextone,\ivarone;\iconstraintone,\ivarone<\itermthree_i&\vdash_{\itermone_{(\termone_i,\envone_i)}}(\termone_i,\envone_i):
      \typethree_i\label{equ:varclosu};\\
    \icontextone;\iconstraintone&\vdash_{\itermone_{\stone}}\stone:(\typetwo,\typeone).\label{equ:varstack}
  \end{align}
  Moreover:
  \begin{align}
    \icontextone;\iconstraintone&\models\itermthree_m\geq\natit{1};\label{equ:varder}\\
    \icontextone;\iconstraintone&\vdash\sb{\typethree_m}{\ivarone}{\natit{0}}\tless\typetwo;\label{equ:varsub}\\
    \icontextone;\iconstraintone&\vdash\itermone\geq\itermone_{\varone_m}+\itermthree_1+\ldots+\itermthree_n+
      \sum_{\ivarone<\itermthree_1}\itermone_{(\termone_1,\envone_1)}+\ldots+\sum_{\ivarone<\itermthree_n}\itermone_{(\termone_n,\envone_n)}+
      \itermone_{\stone}.\label{equ:varweight}
  \end{align}
  From (\ref{equ:varclosu}) where $i=m$, (\ref{equ:varder}), and (\ref{equ:varsub}), one obtains that
  $\icontextone;\iconstraintone\vdash_{\sb{\itermone_{(\termone_m,\envone_m)}}{\ivarone}{\natit{0}}}(\termone_m,\envone_m):\typetwo\label{equ:varclos}$
  and, by (\ref{equ:varstack}), that 
  $$
  \icontextone;\iconstraintone\vdash_{\sb{\itermone_{(\termone_m,\envone_m)}}{\ivarone}{\natit{0}}+\itermone_\stone}\confthree:\typeone.
  $$
  But from (\ref{equ:varweight}) and (\ref{equ:varder}) one easily infer that
  $$
  \icontextone;\iconstraintone\models\itermone>\sb{\itermone_{(\termone_m,\envone_m)}}{\ivarone}{\natit{0}}+\itermone_\stone,
  $$
  that is the thesis.
\end{varitemize}
This concludes the proof.
\end{proof}
It is worth noticing that if $\iconstraintone$ is inconsistent,  
the inequality $\icontextone;\iconstraintone\models\itermone>\itermtwo$ 
in Lemma~\ref{lem:weight-decrease}, Point 2, does not necessary imply that weight strictly decreases. Indeed, Intensional
Soundness only holds in presence of a consistent set of constraints:

\begin{thm}[Intensional Soundness]
\label{thm:int-soundness}
Let $\vdash_{\itermone} \termone :\Nat[\itermtwo,\itermthree]$ 
and $\termone\ev^{n}\val{m}$.
Then, $n\leq \size{\termone} \cdot (\semu{\itermone}+1)$.
\end{thm}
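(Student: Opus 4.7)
The plan is to turn the Weighted Subject Reduction Lemma (Lemma~\ref{lem:weight-decrease}) into a strictly decreasing measure on configurations along the $\Kld$ reduction. The measure I would use is
$$
\mu(\confone,\itermfour)=\size{\confone}+\size{\termone}\cdot\semu{\itermfour},
$$
where $\itermfour$ is the weight with which the configuration $\confone$ is typed. Note that $\size{\termone}$, the size of the \emph{original} program, is a fixed constant; the only varying ingredients are the configuration's size and weight.

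First I would apply Lemma~\ref{lem:program-typing} to turn the hypothesis $\vdash_{\itermone}\termone:\Nat[\itermtwo,\itermthree]$ into a typed initial configuration $\emcon;\emcon\vdash^\eqpone_\itermone(\termone,\varepsilon,\varepsilon):\Nat[\itermtwo,\itermthree]$. Then, writing the $\Kld$ evaluation as $\confone_0\to\confone_1\to\cdots\to\confone_n$ with $\confone_0=(\termone,\varepsilon,\varepsilon)$ and $\confone_n=(\val{m},\envone,\varepsilon)$, I would iteratively apply Lemma~\ref{lem:weight-decrease} to obtain a sequence of weights $\itermfour_0,\itermfour_1,\ldots,\itermfour_n$ with $\itermfour_0\equiv\itermone$ and $\emcon;\emcon\vdash^\eqpone_{\itermfour_i}\confone_i:\Nat[\itermtwo,\itermthree]$ for every $i$.

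The key step is to check that $\mu(\confone_i,\itermfour_i)>\mu(\confone_{i+1},\itermfour_{i+1})$ for every $i<n$. Lemma~\ref{lem:weight-decrease} provides exactly two cases:
\begin{varitemize}
\item If $\semu{\itermfour_i}=\semu{\itermfour_{i+1}}$ and $\size{\confone_{i+1}}<\size{\confone_i}$, then the size component strictly drops while the weighted component is unchanged, so $\mu$ strictly decreases by at least $1$.
\item If $\semu{\itermfour_i}>\semu{\itermfour_{i+1}}$ and $\size{\confone_{i+1}}<\size{\confone_i}+\size{\termone}$, then the weighted component $\size{\termone}\cdot\semu{\itermfour}$ drops by at least $\size{\termone}$, while the size component can grow by at most $\size{\termone}-1$, and again $\mu$ strictly decreases by at least $1$.
\end{varitemize}

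Since $\mu(\confone_i,\itermfour_i)$ is a non-negative integer and strictly decreases at each of the $n$ steps, one obtains
$$
n\;\leq\;\mu(\confone_0,\itermfour_0)\;=\;\size{(\termone,\varepsilon,\varepsilon)}+\size{\termone}\cdot\semu{\itermone}\;=\;\size{\termone}\cdot(\semu{\itermone}+1),
$$
as required. The only delicate point I anticipate is the bookkeeping for the weights $\itermfour_i$: Lemma~\ref{lem:weight-decrease} guarantees typability of $\confone_{i+1}$ with \emph{some} weight $\itermfour_{i+1}$ related to $\itermfour_i$ by one of the two clauses, and the chain needs to be threaded through the full reduction; but since both clauses preserve the very hypothesis used to invoke the lemma (namely reducibility from the initial closed program $(\termone,\varepsilon,\varepsilon)$), the induction on $i$ goes through routinely.
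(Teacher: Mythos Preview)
Your argument is correct and is essentially the paper's intended proof spelled out: the paper's one-line proof (``by induction on $n$, using Lemma~\ref{lem:weight-decrease} and Lemma~\ref{lem:size-bound-terms-km}'') is exactly the decreasing-measure argument you give, with $\mu(\confone,\itermfour)=\size{\confone}+\size{\termone}\cdot\semu{\itermfour}$ the natural choice. Note that you need not invoke Lemma~\ref{lem:size-bound-terms-km} separately, since its content is already baked into clause~2 of Lemma~\ref{lem:weight-decrease}; and the worry about inconsistent $\iconstraintone$ raised just before the theorem does not arise here because the initial context and constraint set are empty, so the semantic (in)equalities in Lemma~\ref{lem:weight-decrease} are genuine numerical (in)equalities.
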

\begin{proof}
  By induction on $n$, making essential use of Lemma \ref{lem:weight-decrease} and Lemma \ref{lem:size-bound-terms-km}.
\end{proof}
Please observe that an easy consequence of Theorem~\ref{thm:int-soundness} is intensional soundness \emph{for functions}. As an
example, if $\ivarone;\emcon;\emcon\vdash_{\itermone} \termone :\qbang{\ivartwo<\itermtwo}
\Nat[\ivarone]\lin\Nat[\itermthree,\itermfour]$, then the complexity of evaluating $\termone\;\val{n}$ is
at most $(\size{\termone\;\val{n}})\cdot(\semu{\sb{\itermone}{\ivarone}{\natit{n}}}+1)$. Observe, however, that $\size{\termone\;\val{n}}$
does not depend on $n$, since $\size{\val{n}}=1$.
%%%%%%%%%%%%%%%%%%%%%%%%%%%%%%%
\section{Relative Completeness}
\label{sec:relative-completeness}
%%%%%%%%%%%%%%%%%%%%%%%%%%%%%%%
This section is devoted to proving \emph{relative completeness} for the type system \PCFld.
In fact, \emph{two} relative completeness theorems will be presented. The first one 
(Theorem \ref{thm:program-relative-completeness}) states relative completeness \emph{for programs}:
for each \PCF\ program $\termone$ that evaluates to a numeral $\val{\num}$  there is a type derivation  
in \PCFld\ whose index terms capture both the number of reduction steps and the value of $\val{\num}$.
The second one (Theorem \ref{thm:function-relative-completeness}) states relative completeness \emph{for functions}:
for each \PCF\ term $\termone:\Nat\arr\Nat$ computing a \emph{total} function $\funone$ in time expressed by 
a function $\funtwo$ there exists a type derivation  in \PCFld\ whose index terms capture both the extensional 
behavior $\funone$ and the intensional property embedded into $\funtwo$.

Relative completeness does not hold in general. Indeed, it holds only when  the underlying equational program 
$\eqpone$ is \emph{universal}, i.e. when it is sufficiently expressive as to encode all total computable 
functions. A universal equational program is introduced in Section \ref{sec:universal-eq-prog}.

Relative completeness for programs will be proved using a weighted form of 
\emph{Subject Expansion} (Theorem \ref{thm:weighted-subject-expansion}) similar
to the one holding in intersection type theories. This will be proved in 
Section \ref{sec:subject-expansion}. The proof of relative completeness for
functions needs a further step: a \emph{uniformization} result (Lemma \ref{lem:uniformizing-typing}) 
relying on the properties of the universal model. This is the subject of Section \ref{sec:uniformization}.
%%%%%%%%%%%%%%%%%%%%%%%%%%%%%%%%%%%%%%%%%%%
\subsection{Universal Equational Program}
\label{sec:universal-eq-prog}
%%%%%%%%%%%%%%%%%%%%%%%%%%%%%%%%%%%%%%%%%%%
Since the class of equational programs is clearly recursively enumerable, it can be put
in one-to-one correspondence with natural numbers, using a coding scheme $\ulcorner \cdot \urcorner$ \emph{\`a la G\"odel}. 
Such a coding, as usual, can be used to define a \emph{universal equational program} $\eqpun$ that is able to simulate 
all equational programs (including itself). 

Let $\gn{\eqpone}{\funsymone}$ be the natural number coding an equational program
$\eqpone$ and a function symbol $\funsymone$ among the ones defined in it. This  can be easily %effectively 
computed from (a description of) $\eqpone$ and $\funsymone$.
 A signature $\sigun$  containing just the symbol $\emfs$ of arity $0$ and the symbols $\pairfs$ and 
$\evalfs$ of arity $2$ (plus some auxiliary symbols) is sufficient to define the universal program $\eqpun$.
For each $\funsymone$ of arity $\natone$, the equational program $\eqpun$ satisfies
$$
\semt{\evalfs(\gn{\eqpone}{\funsymone},\pairing_{\natone}(x_1,\ldots,x_\natone))}{\assone}{\eqpun}=
\semt{\funsymone(x_1,\ldots,x_\natone)}{\assone}{\eqpone},
$$
where $\pairing_\natone(\termone_1,\ldots,\termone_\natone)$ is
defined by induction on $\natone$:
\begin{align*}
  \pairing_0&\equiv\emfs;\\
  \pairing_{\natone+1}(\termone_1,\ldots,\termone_{\natone+1})&\equiv\pairfs(\pairing_{\natone}(\termone_1,\ldots,\termone_{\natone}),\termone_{\natone+1}).
\end{align*}
This way, $\eqpun$ acts as an interpreter for any equational program. Such a universal
program $\eqpun$ can be defined as a \emph{finite} sequence of equations, similarly to what
happens in the construction of, e.g., universal Turing machines.

The universal equational program  $\eqpun$ enjoys some nice properties 
which are crucial when proving Subject Expansion. The following lemma says, for example, 
that sums and bounded sums can always be formed (modulo $\cong$) whenever index terms are built and reasoned
about using the universal program:
\medskip 
\begin{lem}
\label{lem:obtaining-sum-equivalence}
  \begin{varenumerate}
  \item
    For every $\mtypeone$ and $\mtypetwo$ such that 
    $\icontextone;\iconstraintone\vdash^\eqpun\conv{\mtypeone}$,
    $\icontextone;\iconstraintone\vdash^\eqpun\conv{\mtypetwo}$, and
    $\TtoNDT{\mtypeone}=\TtoNDT{\mtypetwo}$,
    there are $\mtypethree$ and $\mtypefour$
    such that $\icontextone;\iconstraintone\vdash^{\eqpun}\mtypethree\teq
    \mtypeone$, $\icontextone;\iconstraintone\vdash^{\eqpun}\mtypefour\teq
    \mtypetwo$ and  $\mtypethree\uplus\mtypefour$ is defined.
  \item
    For every $\mtypeone$ and $\itermone$ such
    that $\icontextone,\ivarone;\iconstraintone,\ivarone<\itermone\vdash^{\eqpun}\conv{\mtypeone}$
    and $\icontextone;\iconstraintone\vdash^{\eqpun}\conv{\itermone}$,
    there is $\mtypetwo$ such that $\icontextone,\ivarone;\iconstraintone,\ivarone<\itermone\vdash^{\eqpun}\mtypetwo\teq
    \mtypeone$ and $\sum_{\ivarone<\itermone}\mtypetwo$ is defined.
  \end{varenumerate}
\end{lem}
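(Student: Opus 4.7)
The plan is to exploit universality of $\eqpun$ to bring each modal type into a ``sum-ready'' form by defining helper index terms through the G\"odel coding and $\evalfs$. Both points proceed by induction on the common PCF skeleton of the relevant types; the base case (a $\Nat$-type) is where universality of $\eqpun$ is actually invoked, and the inductive cases for $\lin$ and $\qbang{-}{-}$ only propagate the base-case equivalence upward, using Lemmas~\ref{lem:proof-strenghtening} and~\ref{lem:subst-respect-index-ineq-I2Type}.

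For Point~1, given $\mtypeone\equiv\qbang{\ivarone<\itermone}{\typeone}$ and $\mtypetwo\equiv\qbang{\ivartwo<\itermtwo}{\typetwo}$ with $\TtoNDT{\mtypeone}=\TtoNDT{\mtypetwo}$, I would construct a type $\typethree$ (in a fresh index variable $\ivarthree$) with the same PCF skeleton, whose index-term slots are filled by a case split. For each pair of corresponding index terms $\itermfour_1(\ivarone)$ and $\itermfour_2(\ivartwo)$ appearing in $\typeone$ and $\typetwo$, universality of $\eqpun$ yields a fresh function symbol $\funsymone$ satisfying $\funsymone(\ivarthree)\kleq\itermfour_1(\ivarthree)$ when $\ivarthree<\itermone$ and $\funsymone(\ivarthree)\kleq\itermfour_2(\ivarthree\mnu\itermone)$ when $\itermone\leq\ivarthree<\itermone+\itermtwo$. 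Placing these symbols in $\typethree$ at the corresponding positions, one then sets $\mtypethree\equiv\qbang{\ivarone<\itermone}{\sb{\typethree}{\ivarthree}{\ivarone}}$ and $\mtypefour\equiv\qbang{\ivartwo<\itermtwo}{\sb{\typethree}{\ivarthree}{\itermone+\ivartwo}}$, which satisfy $\mtypethree\teq\mtypeone$ and $\mtypefour\teq\mtypetwo$ by construction, while $\mtsum{\mtypethree}{\mtypefour}\equiv\qbang{\ivarthree<\itermone+\itermtwo}{\typethree}$ is defined verbatim.

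For Point~2, let $\mtypeone\equiv\qbang{\ivartwo<\itermtwo}{\typefive}$, where both $\itermtwo$ and $\typefive$ may depend on $\ivarone<\itermone$, and set $f(\ivarone,\ivartwo)=\sum_{\ivarfour<\ivarone}\sb{\itermtwo}{\ivarone}{\ivarfour}+\ivartwo$. The map $f$ is a bijection from $\{(\ivarone,\ivartwo):\ivarone<\itermone,\,\ivartwo<\itermtwo\}$ onto an initial segment of $\NN$, so universality of $\eqpun$ supplies function symbols $\funsymtwo_1,\funsymtwo_2$ with $\funsymtwo_1(f(\ivarone,\ivartwo))\kleq\ivarone$ and $\funsymtwo_2(f(\ivarone,\ivartwo))\kleq\ivartwo$ in the relevant domain. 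Take $\typeone$ to be obtained from $\typefive$ by the parallel substitution replacing each index-term occurrence of $\ivarone$ and $\ivartwo$ by $\funsymtwo_1(\ivarthree)$ and $\funsymtwo_2(\ivarthree)$ respectively, where $\ivarthree$ is fresh. Then $\mtypetwo\equiv\qbang{\ivartwo<\itermtwo}{\sb{\typeone}{\ivarthree}{f(\ivarone,\ivartwo)}}$ has exactly the shape required by the bounded-sum definition, and the inverse-pairing equations yield $\mtypetwo\teq\mtypeone$ componentwise through the rules of Figure~\ref{fig:tsub}.

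The main obstacle is not semantic but syntactic: one must actually code the case-split (for Point~1) and the inverse pairing (for Point~2) as genuine function symbols of $\eqpun$ whose defining equations are reflected in $\models^\eqpun$. This is precisely the service provided by $\evalfs$ together with the G\"odel-style coding of Section~\ref{sec:universal-eq-prog}, which reifies any total computable function on $\NN$ as a symbol available inside $\eqpun$. Well-definedness of the constructed types follows because each new symbol is total on its intended range, guaranteed by the hypotheses $\icontextone;\iconstraintone\vdash^{\eqpun}\conv{\mtypeone}$, $\icontextone;\iconstraintone\vdash^{\eqpun}\conv{\mtypetwo}$ and $\icontextone;\iconstraintone\vdash^{\eqpun}\conv{\itermone}$, together with the bound assumptions attached to the modal quantifiers themselves.
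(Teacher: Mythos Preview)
Your proposal is correct and follows essentially the same route as the paper. The paper's own proof is only a two-sentence sketch: it says the result is an induction on the structure of the involved formulas, and that one should strengthen the inductive hypothesis to cover basic types as well as modal types so that the $\lin$ case goes through. Your induction on the common \PCF\ skeleton is exactly this mutual induction, and your case-split construction for Point~1 and inverse-pairing construction for Point~2 are precisely the content of the base case that the paper leaves implicit when it appeals to universality of $\eqpun$.

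One small remark: in your description of Point~1 you write ``for each pair of corresponding index terms $\itermfour_1(\ivarone)$ and $\itermfour_2(\ivartwo)$'', but these index terms will in general also depend on the variables bound by \emph{inner} modal quantifiers and on the ambient context $\icontextone$. The case-splitting symbol you introduce must therefore take all of those as additional arguments, and the resulting term must be shown well-defined under the accumulated constraints. This is handled automatically if you formulate the induction with an explicit basic-type companion statement (as the paper suggests), carrying along the growing list of bound variables and their constraints; your global ``fill every slot'' description is equivalent but makes this bookkeeping less visible.
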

\begin{proof}
These are inductions on the structure of the involved formulas. Actually, it
is convenient to enrich the statements above (which only deals with \emph{modal}
types) with similar statements involving \emph{basic} types, this way facilitating
the inductive argument.
\end{proof}

%%%%%%%%%%%%%%%%%%%%%%%%%%%%%%%%%%%%%%%%%%%%%%%%%%%%%%%%%%%%%%%%%%%%%%%
\subsection{Subject Expansion and Relative Completeness for Programs }
%%%%%%%%%%%%%%%%%%%%%%%%%%%%%%%%%%%%%%%%%%%%%%%%%%%%%%%%%%%%%%%%%%%%%%%
\label{sec:subject-expansion}
\emph{Weighted Subject Expansion} (Theorem \ref{thm:weighted-subject-expansion} below)
says that typing is preserved while weights increase by at most one along any
$\Kld$ expansion step. This is somehow the converse of Weighted Subject Reduction.
Weighted Subject Expansion, however, does not hold in general but only when the underlying
equational program is universal.

In order to prove Weighted Subject Expansion, only typing that carry precise information should be considered.
As an example, we write $\icontextone;\iconstraintone\pvdash_{\itermone}\confone:\typeone$ if we can derive
$\icontextone;\iconstraintone\vdash_{\itermone}\confone:\typeone$ by \emph{precise} type derivations. 
The type of a precisely-typable configuration, in other words, carries exact information about the value of the objects at hand.
One can easily extend the above notation to type derivations for closures and stacks. Recall that a precise type derivation is a type derivation 
such that all premises in the form $\typeone\tless\typetwo$ (respectively, in the form $\itermone\leq\itermtwo$) are actually required
to be in the form $\typeone\cong\typetwo$ (respectively, $\itermone=\itermtwo$).

Furthermore, only specific typing transformations should be considered, namely those that leave the weight information
unaltered. In order to achieve this, some properties of precise typability for the $\Kld$ machine should be exploited. 
As an example, if a closure $\icontextone;\iconstraintone\pvdash_{\itermone}(\termone,\envone):\typeone$,
then $\icontextone;\iconstraintone\pvdash_{\itermtwo}(\termone,\envone):\typetwo$
whenever $\typetwo$ and $\itermtwo$ such that $\icontextone;\iconstraintone\vdash \typeone\teq\typetwo$
and $\icontextone;\iconstraintone\models\itermone =\itermtwo$. This is a natural variation on the
Subtyping Lemma for terms (Lemma~\ref{lem:proof-strenghtening}). 

Finally, it is worth noticing that  by considering an inconsistent set of constraints $\iconstraintone$,
it is possible to make any closure $(\termone,\envone)$  typable with type $\typeone$ (in the sense of \PCF) to be also
typable in the sense of \PCFld: $\icontextone;\iconstraintone\pvdash_\itermone(\termone,\envone):\typetwo$
whenever $\TtoNDT{\typetwo}=\typeone$ and for every index term $\itermone$.
This says that inconsistent sets cover a role similar to the $\omega$-rule in intersection type systems.

The following two lemmas will be useful in the sequel, and allow to ``join'' apparently uncorrelated typing
judgements into one:
\begin{lem}
\label{lem:typable-unifying-sum}
Let $\sbstone$ be the substitution $\sbst{\ivarone}{\ivarone+\itermone}$.
Suppose that $\tdone\prov\icontextone,\ivarone;\iconstraintone,\ivarone<\itermone\pvdash_\itermfour\pairone:\typeone$,
that $\tdtwo\prov\icontextone,\ivarone;\sbt{\iconstraintone}{\sbstone},\ivarone <\itermtwo\pvdash_{\sbt{\itermfour}{\sbstone}}
\pairone:\sbt{\typeone}{\sbstone}$, and that $\TtoNDT{\tdone}=\TtoNDT{\tdtwo}$.
Then, $\icontextone,\ivarone;\iconstraintone,\ivarone <\itermone+\itermtwo\pvdash_\itermfour\pairone:\typeone$.
\end{lem}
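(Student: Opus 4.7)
The plan is to prove the lemma by structural induction on the common \PCF-shadow $\TtoNDT{\tdone} = \TtoNDT{\tdtwo}$ of the two precise type derivations. Precision, together with the equality $\TtoNDT{\tdone} = \TtoNDT{\tdtwo}$, forces $\tdone$ and $\tdtwo$ to have exactly the same tree shape up to index-term annotations: every subtyping premise collapses to an equivalence and every weight comparison to an equality, so the rule chosen at every corresponding node is identical in the two derivations. This gives a canonical pairing of sub-derivations on which I can recurse.

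The core idea is to split the unified constraint $\ivarone < \itermone + \itermtwo$ into the two sub-ranges $\ivarone < \itermone$ and $\itermone \leq \ivarone < \itermone + \itermtwo$. Inside the first sub-range I would simply reuse $\tdone$, after possibly strengthening the constraint set via Lemma~\ref{model-strengthening-typing}. Inside the second sub-range, applying Lemma~\ref{lem:I2Term-substitution} to invert the shift $\sbstone$ transforms $\tdtwo$ into a derivation of $\pairone:\typeone$ with weight $\itermfour$ valid for $\ivarone \in [\itermone,\itermone+\itermtwo)$. Peeling off the outer closure rule, which is syntactically the same in both derivations, reduces the problem to unifying (i) a common term-typing premise for the head term of $\pairone$ and (ii) the component typings for each closure in the environment; both are handled by the same inductive argument, applied to terms and to closures respectively.

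The inductive step proceeds by cases on the last rule of the shared skeleton. Axiom, numeral, successor, predecessor and conditional pose no difficulty, since their index premises are closed under a disjunction of sub-cases that together cover $\ivarone < \itermone + \itermtwo$; abstraction is dispatched by direct recursion on the body. The rules involving bounded sums over modal contexts, namely $\Apty$ and $\Recty$, are where real work is needed: one must merge premises whose modal types differ only in their index annotations into a single modal type indexed over the combined range. This is where Lemma~\ref{lem:obtaining-sum-equivalence} enters, and it is exactly the reason the statement fails in general and requires $\eqpone$ universal, since only over $\eqpun$ can any two \PCF-compatible modal types be massaged into a shape where their sum is definable.

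The hardest case will be $\Recty$. Its premises involve forest-cardinality index terms whose $\ivarone$-dependencies reference earlier portions of the recursion tree, and naively shifting the range of $\ivarone$ by $\itermone$ does not preserve these cardinalities. I would reshape them using Lemma~\ref{lemma:dfsshift} to re-base a forest and Lemma~\ref{lemma:dfssum} to decompose a tree-sum, which together let me realign the recursion-tree structure encoded by $\tdtwo$ with the one encoded by $\tdone$ so that both fit inside a single recursion premise indexed by the unified range. Once this algebraic reshaping is in place, the remaining bookkeeping on weights and on the precision of every subtyping and weight comparison is routine.
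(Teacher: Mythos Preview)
Your proposal is correct and takes essentially the same approach as the paper: the paper's proof is a one-line sketch stating ``by simultaneous induction on $\tdone$ and $\tdtwo$, making essential use of the implicit assumption about the universality of the underlying equational program,'' and your plan fleshes this out faithfully, correctly identifying that $\Apty$ and $\Recty$ are the cases where universality (via Lemma~\ref{lem:obtaining-sum-equivalence}) is genuinely needed to merge the modal annotations.
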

\begin{proof}
  By simultaneous induction on $\tdone$ and $\tdtwo$. We make essential use of the implicit
  assumption about the universality of the underlying equational program.
\end{proof}
\begin{lem}
\label{lem:typable-unifying-bounded-sum}
Let $\sbstone$ be the substitution $\sbst{\ivarthree}{\sum_{\ivarthree<\ivarone}\sb{\itermtwo}{\ivarone}{\ivarthree} + \ivartwo}$.
Suppose that $\tdone\prov\icontextone,\ivarone,\ivartwo;\sbt{\iconstraintone}{\sbstone},\ivarone<\itermone,\ivartwo <\itermtwo
\pvdash_{\sbt{\itermfour}{\sbstone}}\pairone:\sbt{\typeone}{\sbstone}$.
Then, $\icontextone,\ivarone;\iconstraintone,\ivarthree <\sum_{\ivarone<\itermone}\itermtwo\pvdash_\itermfour\pairone:\typeone$.
\end{lem}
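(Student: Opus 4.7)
The plan is to prove this by induction on the structure of the precise derivation $\tdone$, following the same template used for Lemma~\ref{lem:typable-unifying-sum} but accommodating bounded sums instead of binary sums. The key semantic observation that underlies the argument is that, viewed as a map into $\NN$, the substitution $\sbstone$ implements a bijection between the pairs $(\ivarone,\ivartwo)$ satisfying $\ivarone<\itermone$ and $\ivartwo<\itermtwo$ and the single index $\ivarthree$ satisfying $\ivarthree<\sum_{\ivarone<\itermone}\itermtwo$. Because the underlying equational program is universal, both this bijection and its inverse are expressible by index terms, so any re-indexing needed along the way can be realised inside the system.

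In each inductive step I would strip the last rule of $\tdone$, apply the induction hypothesis to its sub-derivations (after appropriate constraint/context manipulations via Lemmas~\ref{model-strengthening-typing}, \ref{lem:proof-strenghtening} and \ref{lem:I2Term-substitution}), and then rebuild a precise derivation whose shape matches the target judgement. The enlightening cases are those of the axiom rule $\Axty$ and of the rules introducing modal binders, where an outer modality of the form $\qbang{\ivartwo<\itermtwo}{\cdot}$ sitting underneath the pair $(\ivarone,\ivartwo)$ must be repackaged, after flattening, as $\qbang{\ivarthree<\sum_{\ivarone<\itermone}\itermtwo}{\cdot}$. Here the statement of the lemma is precisely what is needed to pull the re-indexing through at the level of whole sub-derivations. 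Lemma~\ref{lem:obtaining-sum-equivalence}(2) provides the rewriting of modal types up to $\teq$ that guarantees the bounded sums actually fit together syntactically, as usual in this system.

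The main obstacle will be the rules that already contain bounded sums in their premises, namely $\Apty$ and $\Recty$. In those rules the hypothesis on the argument (respectively, on the body of the fixpoint) carries a bounded sum $\sum_{\ivarfour<\itermfive}{(\cdot)}$ over its context and weight; after applying the substitution $\sbstone$ such a sum becomes nested inside the additional indexing on $(\ivarone,\ivartwo)$, and the inductive conclusion requires collapsing the resulting double bounded sum into a single bounded sum along the bijection above. To handle this I would rely on the straightforward equational identity
\[
\sum_{\ivarone<\itermone}\sum_{\ivartwo<\itermtwo}\sb{\itermfive}{\ivarthree}{\sum_{\ivarthree<\ivarone}\sb{\itermtwo}{\ivarone}{\ivarthree}+\ivartwo}\;\kleq\;\sum_{\ivarthree<\sum_{\ivarone<\itermone}\itermtwo}\itermfive,
\]
which is available in the index language once $\eqpun$ is fixed and which plays exactly the role here that Lemma~\ref{lemma:dfssum} plays for forest cardinalities elsewhere in the paper. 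Once this flattening identity is used to match the weight summations, the remaining work is to reassemble the context and the conclusion using Lemma~\ref{lem:obtaining-sum-equivalence} and Lemma~\ref{lem:proof-strenghtening}, after which the target precise derivation falls out.
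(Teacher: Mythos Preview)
Your proposal is correct and follows essentially the same approach as the paper, which proves the lemma in a single line: ``By induction on the derivation $\tdone$, again using the properties of a universal equational program.'' Your elaboration of what that induction actually involves---the bijection between $(\ivarone,\ivartwo)$-pairs and $\ivarthree$, the use of universality to express the inverse re-indexing, the flattening identity for nested bounded sums, and the appeal to Lemma~\ref{lem:obtaining-sum-equivalence}(2) for the modal types---is a faithful unpacking of that sketch.
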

\begin{proof}
  By induction on the derivation $\tdone$, again using the properties of a universal equational program.  
\end{proof}
But there are even other ways to turn two typing derivations into a more general one, again relying on the
semantic nature of \PCFld:
\begin{lem}
Suppose that $\tdone\prov\icontextone;\iconstraintone,\itermone\leq\itermtwo\pvdash_\itermthree\pairone:\typeone$,
that $\tdtwo\prov\icontextone;\iconstraintone,\itermone>\itermtwo\pvdash_{\itermthree}
\pairone:\typeone$, and that $\TtoNDT{\tdone}=\TtoNDT{\tdtwo}$.
Then, $\icontextone;\iconstraintone\pvdash_\itermthree\pairone:\typeone$.
\end{lem}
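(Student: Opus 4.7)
The plan is to follow the same template as Lemmas~\ref{lem:typable-unifying-sum} and~\ref{lem:typable-unifying-bounded-sum}, namely a simultaneous induction on the structures of $\tdone$ and $\tdtwo$. Because both derivations share the same underlying \PCF\ skeleton (by the hypothesis $\TtoNDT{\tdone}=\TtoNDT{\tdtwo}$) and because \PCFld's typing rules are syntax-directed, at each step both $\tdone$ and $\tdtwo$ end with an instance of the same rule, and their immediate premises again share their \PCF\ skeletons. The induction hypothesis will therefore produce, for every pair of corresponding subjudgements, a single merged precise derivation under $\iconstraintone$ alone, from which the conclusion is assembled by re-applying the same \PCFld\ rule.

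The key device is the universality of $\eqpun$: from $\ifgtz$, $\mnu$ and $+$ (introduced in Section~\ref{sect:itep}) one can define a ``conditional'' index term constructor $\funsymthree(\itermone,\itermtwo,\itermfour_1,\itermfour_2)$ whose value equals $\itermfour_1$ whenever $\itermone\leq\itermtwo$ and equals $\itermfour_2$ otherwise. Using $\funsymthree$, any two index terms $\itermfour_1$ and $\itermfour_2$ appearing at corresponding positions of $\tdone$ and $\tdtwo$ can be replaced by a single $\itermfour$ such that $\icontextone;\iconstraintone,\itermone\leq\itermtwo\models^{\eqpun}\itermfour=\itermfour_1$ and $\icontextone;\iconstraintone,\itermone>\itermtwo\models^{\eqpun}\itermfour=\itermfour_2$. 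The same idea applies to modal types after their skeletons have been aligned by Lemma~\ref{lem:obtaining-sum-equivalence}, producing a single modal type that agrees with $\tdone$ under $\itermone\leq\itermtwo$ and with $\tdtwo$ under $\itermone>\itermtwo$.

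With these building blocks the induction is routine. In base cases such as $\Axty$ or $\Natty$, the merged modal types and index bounds are obtained directly by the conditional construction; since the weight $\itermthree$ is already identical in $\tdone$ and $\tdtwo$ by assumption, it needs no modification. In the inductive cases ($\Apty$, $\Ifty$, $\Recty$, and so on), the induction hypothesis is first invoked on the subderivations and the corresponding \PCFld\ rule is then reapplied: each side condition (subtyping, bounded sums, weight comparisons) already holds within one of the two disjoint cases, and so holds globally once the relevant index terms have been synthesised via $\funsymthree$.

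The main obstacle is the preservation of \emph{preciseness}. Unlike in the ordinary typability setting, weights and modal parameters cannot merely be bounded from above but must match exactly the ones dictated by the premises of each rule. This forces the conditional merging to be applied not only at the leaves but at every index annotation attached to every subformula occurring in the two derivations, and relies essentially on $\eqpun$ being universal so that the synthesized terms stand in equality, rather than inequality, with their sources in each of the two disjoint cases $\itermone\leq\itermtwo$ and $\itermone>\itermtwo$.
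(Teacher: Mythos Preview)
The paper states this lemma without proof. Your approach---simultaneous induction on $\tdone$ and $\tdtwo$ exploiting the universality of $\eqpun$ to synthesize conditional index terms---matches the template the paper uses for the immediately preceding Lemmas~\ref{lem:typable-unifying-sum} and~\ref{lem:typable-unifying-bounded-sum}, whose proofs are also left at the level of ``by induction, making essential use of universality.'' Your sketch is correct and is almost certainly what the authors had in mind.
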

It is now time to state Weighted Subject Expansion, since all the necessary ingredients have been introduced:
\begin{thm}[Weighted Subject Expansion]
\label{thm:weighted-subject-expansion}
Suppose that $\tdone\prov\icontextone;\iconstraintone\pvdash_\itermone\conftwo:\typeone$
and that $\tdtwo\rt\TtoNDT{\tdone}$, where $\tdtwo\prov\vdash\confone:\TtoNDT{\typeone}$.
Then $\tdthree\prov\icontextone;\iconstraintone\pvdash_\itermtwo\confone:\typeone$, 
where $\icontextone;\iconstraintone\models \itermtwo\leq \itermone+\natit{1}$ and 
$\TtoNDT{\tdthree}=\tdtwo$. Moreover, $\tdthree$
can be effectively computed from $\tdone$ and $\tdtwo$.
\end{thm}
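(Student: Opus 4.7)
The plan is to prove the theorem by case analysis on the machine transition $\confone \to \conftwo$, which is fully determined by the last rule of $\tdtwo$ (this rule must match the redex at the head of $\confone$). For each of the transition rules of $\Kld$, I would reconstruct $\tdthree$ by inverting the typing transformation used in the corresponding case of Weighted Subject Reduction (Lemma~\ref{lem:weight-decrease}). Most of the book-keeping amounts to splitting and merging modal-type data between the stack, the environment, and the term; here universality of $\eqpun$ is crucial, because Lemma~\ref{lem:obtaining-sum-equivalence} guarantees that any two equivalent modal types of the right \PCF{} skeleton can be re-expressed so that the sum $\uplus$ is defined. Throughout I would use the variant of the Subtyping Lemma available at the precise level (with $\teq$ in place of $\tless$) to tidy up types without changing weights, together with Lemma~\ref{lem:I2Term-substitution} to specialise index variables inside closure typings.

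For the transitions corresponding to Point~$1$ of Lemma~\ref{lem:weight-decrease} — the $\beta$-step for an abstraction meeting a closure on the stack, the numeral cases for successor, predecessor and conditional, and the three context-pushing rules — the construction is a routine inversion of the Subject Reduction argument and produces a $\tdthree$ of the same weight $\itermone$. The variable-lookup transition is the only case that forces a strict increase in weight: given a precise derivation of $(\termone_m, \envone_m, \stone)$ of weight $\itermone$, I would build $\tdthree$ by an instance of $\Axty$ applied to $\varone_m$ with modal type $\qbang{\ivarone<\natit{1}}{\typeone}$, merging the typing of $(\termone_m, \envone_m)$ back into the closure-typing for the whole environment and forming trivial $\qbang{\ivarone<\natit{0}}{-}$ types for the remaining entries (using the inconsistency trick recalled just before the statement). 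The side condition $\natit{1}\leq\itermthree$ in the axiom contributes exactly the extra $+\natit{1}$ of weight allowed by the statement.

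The main obstacle will be the fixpoint unfolding step, $(\REC{\varone}{\termone}, \envone, \stone) \to (\termone, (\REC{\varone}{\termone}, \envone)\cdot\envone, \stone)$. Here $\tdone$ contains a typing of $\termone$ in which $\varone$ is bound to some $\qbang{\ivarone<\itermfive}{\typefour}$, together with a typing of the closure $(\REC{\varone}{\termone}, \envone)$ that must be folded back into a single application of rule $\Recty$. My plan is to apply Lemma~\ref{lem:typable-unifying-sum} and Lemma~\ref{lem:typable-unifying-bounded-sum} iteratively to collapse all recursive occurrences into one derivation indexed by a freshly named forest cardinality of the shape $\Dfscomb{\ivartwo}{\ivartwo+\natit{1}}{\ivarone}{\itermtwo}$. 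The index-term manipulations will mirror in reverse those of the Subject Reduction proof, relying on the forest-cardinality identities from Lemma~\ref{lemma:dfsshift} and Lemma~\ref{lemma:dfssum} and on the universality of $\eqpun$ to name all the required sums and cardinalities within the signature. Once $\Recty$ has fired, attaching the unchanged stack typing produces $\tdthree$ of weight $\itermone$, so the fixpoint case also spends no slack. In every case the construction preserves the underlying \PCF{} shape step by step, so the final equality $\TtoNDT{\tdthree}=\tdtwo$ is immediate, and since every step uses only effective operations on derivations, $\tdthree$ is effectively computable from $\tdone$ and $\tdtwo$.
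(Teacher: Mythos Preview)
Your proposal is correct and follows essentially the same case analysis as the paper's proof, including the identification of the variable-lookup step as the sole source of the $+\natit{1}$ and the use of universality-based merging (via Lemmas~\ref{lem:typable-unifying-sum} and~\ref{lem:typable-unifying-bounded-sum}) for the fixpoint case. One caveat: the conditional context-pushing step $(\CASE{\termfour}{\termtwo}{\termthree},\envone,\stone)\to(\termfour,\envone,(\termtwo,\termthree,\envone)\cdot\stone)$ is not a routine inversion---in the paper it requires constructing fresh index terms and types (via universality) that agree with the two branch typings under the complementary constraints $\itermfour\leq\natit{0}$ and $\natit{1}\leq\itermfour$, and then invoking the unnamed joining lemma stated just before the theorem; you should treat it with the same care as the fixpoint case rather than grouping it with the genuinely trivial pushes.
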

\begin{proof}
The proof is by cases on the shape of the reduction $\confone\to\conftwo$. We just present some cases, the others can be obtained analogously.
\begin{varitemize}
  \item 
    Consider the case 
    $$
    \confone\equiv(\val{0},\envone,(\termone,\termtwo,\envtwo)\cdot\stone)\to (\termone, \envtwo,\stone)\equiv\conftwo.
    $$  
    By assumption we have that $\confone$ is typable in \PCF\ and that
    $\icontextone;\iconstraintone\pvdash_\itermone\conftwo:\typeone$. So, we have that
    \begin{align*}
      \icontextone;\iconstraintone&\pvdash_{\itermone_{(\termone,\envtwo)}}(\termone,\envtwo):\typetwo;\\
      \icontextone;\iconstraintone&\pvdash_{\itermone_\stone}\stone:(\typetwo,\typeone);\\
      \icontextone;\iconstraintone&\models\itermone=\itermone_{(\termone,\envtwo)}+\itermone_\stone;
    \end{align*}
    for some $\itermone_{(\termone,\envtwo)}$ and $\itermone_\stone$. We clearly also have 
    that $\icontextone;\iconstraintone,\natit{0}\leq \natit{0}\pvdash_{\itermone_{(\termone,\envtwo)}}(\termone,\envtwo):\typetwo$.
    $\iconstraintone,\natit{1}\leq \natit{0}$ is an inconsistent set of constraints, and since  $\confone$ is typable in \PCF\ (as remarked above),
    we also have that $\icontextone;\iconstraintone,\natit{1}\leq\natit{0}\pvdash_{\itermone_{(\termone,\envtwo)}}(\termtwo,\envtwo):\typetwo$.
    This implies, in particular, that $\icontextone,\iconstraintone\pvdash_{\itermone}(\termone,\termtwo,\envtwo)\cdot\stone:(\Nat[\natit{0}],\typeone)$.
    Now, assume that $\envone=(\termone_1,\envone_1)\cdot\ldots\cdot(\termone_n,\envone_n)$ where for every $1\leq i\leq n$,
    $(\termone_i,\envone_i)$ is typable in \PCF. Since $\iconstraintone,\ivarone<\natit{0}$ is inconsistent, we have that
    $$
    \icontextone,\ivarone;\iconstraintone,\ivarone<\natit{0}\pvdash_{\natit{0}}(\termone_i,\envone_i):\typethree_i
    $$
    for some $\typethree_i$. By Lemma \ref{lem:context-weakening} we can build a derivation for
    $$
    \icontextone;\iconstraintone; 
    \varone_1:\qbang{\ivarone<\natit{0}}{\typethree_1},\ldots, \varone_n:\qbang{\ivarone<\natit{0}}{\typethree_n}
    \pvdash_{\natit{0}} \val{0}:\Nat[\natit{0}].
    $$
    So, we have that 
    $$
    \icontextone;\iconstraintone\pvdash_{\natit{0}}(\val{0},\envone):\Nat[\natit{0}].
    $$
    Summing up, we obtain that 
    $$
    \icontextone;\iconstraintone\pvdash_\itermone\confone:\typeone,
    $$
    from which the thesis easily follows, since $\icontextone;\iconstraintone\models\itermone\leq\itermone+\natit{1}$.
  \item 
    Consider the case 
    $$
    \confone\equiv(\lambda\varone.\termone,\envone,\pairone\cdot\stone)\to (\termone, \pairone\cdot\envone,\stone)\equiv\conftwo.
    $$  
    By assumption we have that $\confone$ is typable in \PCF\ and that 
    $\icontextone;\iconstraintone\pvdash_\itermone\conftwo:\typeone$. So, we have that 
    \begin{align*}
      \icontextone;\iconstraintone;\varone_1:\qbang{\ivarone<\itermthree_1}{\typetwo_1},\ldots,\varone_n:\qbang{\ivarone<\itermthree_n}{\typetwo_n}
        &\pvdash_{\itermone_\termone}\termone:\typethree;\\
      \icontextone,\ivarone;\iconstraintone,\ivarone<\itermthree_i&\pvdash_{\itermone_{\pairone_i}}\pairone_i:\typetwo_i;\\
      \icontextone;\iconstraintone&\pvdash_{\itermone_\stone}\stone:(\typethree,\typeone);
    \end{align*}
    where: 
    $$
      \icontextone;\iconstraintone\models\itermone=\itermone_\termone+\itermthree_1+\ldots+\itermthree_n+\sum_{\ivarone<\itermthree_1}\itermone_{\pairone_1}+
        \ldots+\sum_{\ivarone<\itermthree_n}\itermone_{\pairone_n}+\itermone_\stone.
    $$
    For simplicity and without loosing any generality, we can consider the case where 
    $\pairone\cdot\envone\equiv\pairone_1\ldots\pairone_n$ with $\varone\equiv\varone_1$ and $\pairone\equiv\pairone_1$. So, in particular we 
    can build a derivation ending as follows:
    $$
    \infer
    {
      \icontextone;\iconstraintone;\varone_2:\qbang{\ivarone<\itermthree_2}{\typetwo_2},\ldots,\varone_n:\qbang{\ivarone<\itermthree_n}{\typetwo_n}
      \pvdash_{\itermone_\termone}\lambda\varone_1.\termone:\qbang{\ivarone<\itermthree_1}{\typetwo_1}\lin\typethree
    }
    {
      \icontextone;\iconstraintone;\varone_1:\qbang{\ivarone<\itermthree_1}{\typetwo_1},\ldots,\varone_n:\qbang{\ivarone<\itermthree_n}{\typetwo_n}
        \pvdash_{\itermone_\termone}\termone:\typethree
    }
    $$
    and thus we have that $\icontextone;\iconstraintone\pvdash_{\itermone_{(\lambda\varone.\termone,\envone)}}(\lambda\varone.\termone,\envone):
    \qbang{\ivarone<\itermthree_1}{\typetwo_1}\lin\typethree$, where
    $$
    \itermone_{(\lambda\varone.\termone,\envone)}\equiv\itermone_\termone+\itermthree_2+\ldots+\itermthree_n+\sum_{\ivarone<\itermthree_2}\itermone_{\pairone_2}+
        \ldots+\sum_{\ivarone<\itermthree_n}\itermone_{\pairone_n}.
    $$
    Further, we have that 
    $$
    \icontextone;\iconstraintone\pvdash_{\itermone_\stone+\itermthree_1+\sum_{\ivarone<\itermthree_1}\itermone_{\pairone_1}}\pairone_1\cdot\stone:
    (\qbang{\ivarone<\itermthree_1}{\typetwo_1}\lin\typethree,\typeone)
    $$
    and, as an easy consequence, that
    $$
    \icontextone;\iconstraintone\pvdash_{\itermone_{(\lambda\varone.\termone,\envone)}+\itermone_\stone+\itermthree_1+\sum_{\ivarone<\itermthree_1}\itermone_{\pairone_1}}\confone:\typeone.
    $$
    This easily leads to the conclusion, since
    \begin{align*}
    \icontextone;\iconstraintone\models\itermone&=\itermone_\termone+\itermthree_1+\ldots+\itermthree_n+\sum_{\ivarone<\itermthree_1}\itermone_{\pairone_1}+
        \ldots+\sum_{\ivarone<\itermthree_n}\itermone_{\pairone_n}+\itermone_\stone\\
        &=\itermone_{(\lambda\varone.\termone,\envone)}+\itermone_\stone+\itermthree_1+\sum_{\ivarone<\itermthree_1}\itermone_{\pairone_1}.
    \end{align*}
 \item 
   Consider the case 
   $$
   \confone\equiv(\REC{\varone}{\termone},\envone,\stone)\to (\termone, (\REC{\varone}{\termone},\envone)\cdot\envone,\stone)\equiv\conftwo.
   $$  
   By assumption we have that $\confone$ is typable in \PCF\ and that 
    $\icontextone;\iconstraintone\pvdash_\itermone\conftwo:\typeone$. So, we have that 
    \begin{align}
      \icontextone;\iconstraintone;\varone_1:\qbang{\ivarone<\itermthree_1}{\typetwo_1},\ldots,\varone_n:\qbang{\ivarone<\itermthree_n}{\typetwo_n}
        &\pvdash_{\itermone_\termone}\termone:\typethree;\label{equ:fixtfirst}\\
      \icontextone,\ivarone;\iconstraintone,\ivarone<\itermthree_i&\pvdash_{\itermone_{\pairone_i}}\pairone_i:\typetwo_i;\label{equ:fixcifirst}\\
      \icontextone;\iconstraintone&\pvdash_{\itermone_\stone}\stone:(\typethree,\typeone);
    \end{align}
    where:
    \begin{equation}\label{equ:fixweightI}
      \icontextone;\iconstraintone\models\itermone=\itermone_\termone+\itermthree_1+\ldots+\itermthree_n+\sum_{\ivarone<\itermthree_1}\itermone_{\pairone_1}+
        \ldots+\sum_{\ivarone<\itermthree_n}\itermone_{\pairone_n}+\itermone_\stone.
    \end{equation}
    For simplicity and without losing any generality, we can consider
    the case where 
$(\REC{\varone}{\termone},\envone)\cdot\envone\equiv\pairone_1\ldots\pairone_n$
with 
    $\varone\equiv\varone_1$ and $(\REC{\varone}{\termone},\envone)\equiv\pairone_1$. As a consequence, we can conclude that:
    \begin{align}
      \icontextone,\ivarone;\iconstraintone,\ivarone<\itermthree_1;\tcontextone&\pvdash_{\itermone_{\REC{\varone}{\termone}}}
      \REC{\varone}{\termone}:\typetwo_1;\label{equ:fixfix}\\
      \icontextone,\ivarone,\ivartwo;\iconstraintone,\ivarone<\itermthree_1,\ivartwo<\itermfour_i&\pvdash_{\itermtwo_{\pairone_i}}\pairone_i:\typethree_i;\label{equ:fixcisecond}
    \end{align}
    where $\tcontextone\equiv\varone_2:\qbang{\ivartwo<\itermfour_2}{\typethree_2},\ldots,\varone_n:\qbang{\ivartwo<\itermfour_n}{\typethree_n}$, and
    \begin{equation}\label{equ:fixweightII}
    \icontextone,\ivarone;\iconstraintone,\ivarone<\itermthree_1\models\itermone_{\pairone_1}=\itermone_{\REC{\varone}{\termone}}+
      \itermfour_2+\ldots+\itermfour_n+\sum_{\ivartwo<\itermfour_2}\itermtwo_{\pairone_2}+\ldots+\sum_{\ivartwo<\itermfour_n}\itermtwo_{\pairone_n}.
    \end{equation}
    Our objective now is to prove that
    \begin{equation}\label{equ:fixconcl}
    \icontextone,\iconstraintone\pvdash_{\itermone_{(\REC{\varone}{\termone},\envone)}}(\REC{\varone}{\termone},\envone):\typethree,
    \end{equation}
    where $\icontextone,\iconstraintone\models\itermone_{(\REC{\varone}{\termone},\envone)}=\itermone\mnu\itermone_\stone$.
    The thesis easily follows from (\ref{equ:fixconcl}). To do that, we proceed by spelling out what the premises of (\ref{equ:fixfix})
    are. They are:
    \begin{equation}\label{equ:fixtsecond}
      \icontextone,\ivarone,\ivartwo;\iconstraintone,\ivarone<\itermthree_1,\ivartwo<\Dfscomb{\ivartwo}{\natit{0}}{\natit{1}}{\itermeight};
        \varone:\qbang{\ivarthree<\itermeight}{\sb{\typefour}{\ivartwo}{\Dfscomb{\ivartwo}{\ivartwo+1}{\ivarthree}{\itermeight}+\ivartwo+\natit{1}}},
        \tcontexttwo\pvdash_{\itermtwo_\termone}\termone: \typefour,
    \end{equation}
    and the following two:
    \begin{align*}
      \icontextone,\ivarone;\iconstraintone,\ivarone<\itermthree_1&\pvdash\typetwo_1\teq\sb{\typefour}{\ivartwo}{\natit{\natit{0}}};\\
      \icontextone,\ivarone;\iconstraintone,\ivarone<\itermthree_1&\pvdash\tcontextone\teq\sum_{\ivartwo<\Dfscomb{\ivartwo}{\natit{0}}{\natit{1}}{\itermeight}}\tcontexttwo;
    \end{align*}
    where $\itermeight$ and $\itermtwo_\termone$ are index terms such that
    \begin{equation}\label{equ:fixweightIII}
      \icontextone,\ivarone;\iconstraintone,\ivarone<\itermthree_1\models\itermone_{\REC{\varone}{\termone}}=
         \Dfscomb{\ivartwo}{\natit{0}}{\natit{1}}{\itermeight}\mnu\natit{1}+\sum_{\ivartwo<\Dfscomb{\ivartwo}{\natit{0}}{\natit{1}}{\itermeight}}\itermtwo_\termone.
    \end{equation}
    Now, consider an index term $\itermseven$ such that
    $$
    \icontextone;\iconstraintone\models\Dfscomb{\ivartwo}{\natit{0}}{\natit{1}}{\itermseven}=\natit{1}+\sum_{\ivarone<\itermthree_1}\Dfscomb{\ivartwo}{\natit{0}}{\natit{1}}{\itermeight}
    $$
    Such an index term can be easily defined from $\itermeight$ and $\itermthree_1$, given that the underlying equational program
    is assumed to be universal. For the same reasons, one can define types $\typefive$ and $\typesix$, a type context $\tcontexthree$ and an index
    term $\itermnine$ such that the following holds (where $\sbstone$ is $\sbst{\ivartwo}{\natit{1}+\sum_{\ivarone<\ivarone}\Dfscomb{\ivartwo}{\natit{0}}{\natit{1}}{\itermeight}+\ivartwo}$):
    
    {\footnotesize
    \begin{align*}
      \icontextone;\iconstraintone&\pvdash\sb{\typesix}{\ivartwo}{\natit{0}}=\typethree;
        &
      \icontextone,\ivarone,\ivartwo;\iconstraintone,\ivarone<\itermthree_1,\ivartwo<\Dfscomb{\ivartwo}{\natit{0}}{\natit{1}}{\itermeight}&\pvdash 
        \sbt{\typesix}{\sbstone}=\typefour;\\
      \icontextone;\iconstraintone&\pvdash\sb{\typefive}{\ivartwo}{\natit{0}}=\typetwo_1;
        &
      \icontextone,\ivarone,\ivartwo,\ivarthree;\iconstraintone,\ivarone<\itermthree_1, \ivartwo<\Dfscomb{\ivartwo}{\natit{0}}{\natit{1}}{\itermeight},\ivarthree<\itermeight,&\pvdash 
         \sbt{\typefive}{\sbstone}=\sb{\typefour}{\ivartwo}{\Dfscomb{\ivartwo}{\ivartwo+1}{\ivarthree}{\itermeight}+\ivartwo+\natit{1}};\\
      \icontextone;\iconstraintone&\pvdash\sb{\itermnine}{\ivartwo}{\natit{0}}=\itermone_\termone;
        &
      \icontextone,\ivarone,\ivartwo;\iconstraintone,\ivarone<\itermthree_1,\ivartwo<\Dfscomb{\ivartwo}{\natit{0}}{\natit{1}}{\itermeight}&\models
         \sbt{\itermnine}{\sbstone}=\itermtwo_\termone;\\
      \icontextone;\iconstraintone&\pvdash\sb{\tcontexthree}{\ivartwo}{\natit{0}}\teq\tcontextone;
        &
      \icontextone,\ivarone,\ivartwo;\iconstraintone,\ivarone<\itermthree_1,\ivartwo<\Dfscomb{\ivartwo}{\natit{0}}{\natit{1}}{\itermeight}&\pvdash
         \sbt{\tcontexthree}{\sbstone}\teq\tcontexttwo.   
    \end{align*}}

    \noindent
    This is possible since the type derivations for (\ref{equ:fixtfirst}) and (\ref{equ:fixtsecond}) have exactly the same \PCF\ skeleton.
    By transforming them according to the equations above, one can merge them into one with conclusion:
    $$
    \icontextone,\ivartwo;\iconstraintone,\ivartwo<\Dfscomb{\ivartwo}{\natit{0}}{\natit{1}}{\itermseven};\varone:\qbang{\ivarone<\itermseven}{\typefive},\tcontexthree\pvdash_{\itermnine}\termone: \typesix.
    $$
    So, by using again the $\Recty$ rule we obtain:
    $$
    \icontextone;\iconstraintone;\sum_{\ivartwo<\Dfscomb{\ivartwo}{\natit{0}}{\natit{1}}{\itermseven}}\tcontexthree\pvdash_{\Dfscomb{\ivartwo}{\natit{0}}{\natit{1}}{\itermseven}
      \mnu\natit{1}+\sum_{\ivartwo<\Dfscomb{\ivartwo}{\natit{0}}{\natit{1}}{\itermseven}}\itermnine}\REC{\varone}{\termone}: \typethree.
    $$
    We are not at (\ref{equ:fixconcl}), however: it is still necessary to type $\envone$ appropriately.
    But note that we have:
    $$
    \icontextone,\iconstraintone\pvdash
    \sum_{\ivartwo< \Dfscomb{\ivartwo}{\natit{0}}{\natit{1}}{\itermseven}}\tcontexthree=\tcontextone\uplus \sum_{\ivartwo< \Dfscomb{\ivartwo}{\natit{0}}{\natit{1}}{\itermseven}\mnu 1}\tcontexttwo=
    \tcontextone\uplus\sum_{\ivarone<\itermthree_1}\sum_{\ivartwo< \Dfscomb{\ivartwo}{\natit{0}}{\natit{1}}{\itermeight}}\tcontexttwo=
    \tcontextone\uplus\sum_{\ivarone<\itermthree_1}\tcontextone.
    $$
    So we can find types $\typeseven_2,\ldots,\typeseven_n$ such that
    $$
    \sum_{\ivartwo< \Dfscomb{\ivartwo}{\natit{0}}{\natit{1}}{\itermseven}}\tcontexthree = 
      \varone_2:\qbang{\ivarone<\itermthree_2+\sum_{\ivarone<\itermthree_1}\itermfour_2}{\typeseven_2},\ldots, 
      \varone_n:\qbang{\ivarone<\itermthree_n+\sum_{\ivarone<\itermthree_1}\itermfour_n}{\typeseven_n},
    $$
    where for every $2\leq i\leq n$,
    \begin{align*}
      \icontextone,\ivarone;\iconstraintone,\ivarone<\itermthree_i&\pvdash\typeseven_i\teq\typetwo_i;\\
      \icontextone,\ivarone;\iconstraintone,\ivarone<\itermthree_1,\ivartwo<\itermfour_i&\pvdash\sb{\typeseven_i}{\ivarone}{\itermthree_i+\ivartwo+\sum_{\ivarone<\ivarone}\itermfour_i}\teq\typethree_i.
    \end{align*}
    Similarly, one can define index terms $\itermten_2,\ldots,\itermten_n$ such that
    \begin{align*}
      \icontextone,\ivarone;\iconstraintone,\ivarone<\itermthree_i&\models\itermten_i=\itermone_{\pairone_i};\\
      \icontextone,\ivarone;\iconstraintone,\ivarone<\itermthree_1,\ivartwo<\itermfour_i&\models\sb{\itermten_i}{\ivarone}{\itermthree_i+\ivartwo+\sum_{\ivarone<\ivarone}\itermfour_i}=\itermtwo_{\pairone_i}.
    \end{align*}
    By relabelling the type derivations of (\ref{equ:fixcifirst}) and (\ref{equ:fixcisecond}) (which are structurally equal) according to the types and index terms introduced above,
    one obtains:
    $$
      \icontextone,\ivarone;\iconstraintone,\ivarone<\itermthree_1+\sum_{\ivarone<\itermthree_1}\itermfour_i\pvdash_{\itermten_i}\pairone_i:\typeseven_i;
    $$
    From this it follows that $\icontextone;\iconstraintone\pvdash_{\itermone_{(\REC{\varone}{\termone},\envone)}}(\REC{\varone}{\termone},\envone):\typethree$,
    where
    \begin{align*}
      \itermone_{(\REC{\varone}{\termone},\envone)}&\equiv\Big ( \Dfscomb{\ivartwo}{\natit{0}}{\natit{1}}{\itermseven}\mnu\natit{1}+\sum_{\ivartwo<\Dfscomb{\ivartwo}{\natit{0}}{\natit{1}}{\itermseven}}\itermnine \Big)+
      \Big (\itermthree_2+\sum_{\ivarone<\itermthree_1}\itermfour_2+ \cdots+ \itermthree_n+\sum_{\ivarone<\itermthree_1}\itermfour_n+\\
      &\qquad\sum_{\ivarone<(\itermthree_2+\sum_{\ivarone<\itermthree_1}\itermfour_2)}\itermten_2 + \cdots+ \sum_{\ivarone<(\itermthree_n+\sum_{\ivarone<\itermthree_1}\itermfour_n)} \itermten_n\Big).
    \end{align*}
    Let us separately analyze the two thunks in which the expression above can be decomposed. On the one hand we have that:
    \begin{align*}
       \icontextone;\iconstraintone\models&\Dfscomb{\ivartwo}{\natit{0}}{\natit{1}}{\itermseven}\mnu\natit{1}+\sum_{\ivartwo<\Dfscomb{\ivartwo}{\natit{0}}{\natit{1}}{\itermseven}}\itermnine=
          \sum_{\ivarone<\itermthree_1}\Dfscomb{\ivartwo}{\natit{0}}{\natit{1}}{\itermeight}+\itermone_\termone+\sum_{\ivarone<\itermthree_1}\sum_{\ivartwo<\Dfscomb{\ivartwo}{\natit{0}}{\natit{1}}{\itermeight}}\itermtwo_{\termone}\\
          &\qquad=\sum_{\ivarone<\itermthree_1}\itermone_{\REC{\varone}{\termone}}+\itermthree_1+\itermone_\termone.
    \end{align*}
    On the other hand, let us observe that
    \begin{align*}
        \icontextone;\iconstraintone\models&\sum_{\ivarone<(\itermthree_2+\sum_{\ivarone<\itermthree_1}\itermfour_2)}\itermten_2 + \cdots+ \sum_{\ivarone<(\itermthree_n+\sum_{\ivarone<\itermthree_1}\itermfour_n)} \itermten_n\\
          &=\sum_{\ivarone<\itermthree_2}\itermone_{\pairone_2}+\sum_{\ivarone<\itermthree_2}\sum_{\ivartwo<\itermfour_2}\itermtwo_{\pairone_2}+\ldots+
            \sum_{\ivarone<\itermthree_n}\itermone_{\pairone_n}+\sum_{\ivarone<\itermthree_n}\sum_{\ivartwo<\itermfour_n}\itermtwo_{\pairone_n}.
    \end{align*}
    Combining the equations above with (\ref{equ:fixweightI}), (\ref{equ:fixweightII}) and (\ref{equ:fixweightIII}), 
    one easily reaches $\icontextone;\iconstraintone\models\itermone_{(\REC{\varone}{\termone},\envone)}=\itermone\mnu\itermone_{\stone}$,
    which is the thesis.
  \item 
    Consider the case 
    $$
    \confone\equiv(\CASE{\termfour}{\termtwo}{\termthree},\envone,\stone)
    \to (\termfour, \envone,(\termtwo,\termthree,\envone)\cdot\stone)\equiv\conftwo.
    $$
    By assumption we have that $\confone$ is typable in \PCF\ and that 
    $\icontextone;\iconstraintone\pvdash_\itermone\conftwo:\typeone$. So, we have that 
    \begin{align}
      \icontextone;\iconstraintone;\varone_1:\qbang{\ivarone<\itermthree_1}{\typetwo_1},\ldots,\varone_n:\qbang{\ivarone<\itermthree_n}{\typetwo_n}
      &\pvdash_{\itermone_\termfour}\termfour:\Nat[\itermfour];\label{equ:ifwvar}\\
      \icontextone,\ivarone;\iconstraintone,\ivarone<\itermthree_i&\pvdash_{\itermone_{\pairone_i}}\pairone_i:\typetwo_i;\label{equ:ifcloI}\\
      \icontextone;\iconstraintone,\itermfour\leq\natit{0}&\pvdash_{\itermone_{(\termtwo,\termthree,\envone)}}(\termtwo,\envone):\typethree;\label{equ:ifbranchI}\\
      \icontextone;\iconstraintone,\natit{1}\leq\itermfour&\pvdash_{\itermone_{(\termtwo,\termthree,\envone)}}(\termthree,\envone):\typethree;\label{equ:ifbranchII}\\
      \icontextone;\iconstraintone&\pvdash_{\itermone_\stone}\stone:(\typethree,\typeone);
    \end{align}
    where $\envone\equiv\pairone_1\ldots\pairone_n$. Moreover:
    \begin{equation}
      \icontextone;\iconstraintone\models\itermone=\itermone_\termfour+\itermthree_1+\ldots+\itermthree_n+\sum_{\ivarone<\itermthree_1}\itermone_{\pairone_1}+
        \ldots+\sum_{\ivarone<\itermthree_n}\itermone_{\pairone_n}+\itermone_{(\termtwo,\termthree,\envone)}+\itermone_\stone.
    \end{equation}
    By further spelling out (\ref{equ:ifbranchI}) and (\ref{equ:ifbranchII}), we obtain the following:
    \begin{align}
      \icontextone;\iconstraintone,\itermfour\leq\natit{0};\varone_1:\qbang{\ivarone<\itermfour_1}{\typefour_1},\ldots,\varone_n:\qbang{\ivarone<\itermfour_n}{\typefour_n}
      &\pvdash_{\itermone_\termtwo}\termtwo:\typethree;\label{equ:ifuvar}\\
      \icontextone,\ivarone;\iconstraintone,\itermfour\leq\natit{0},\ivarone<\itermfour_i&\pvdash_{\itermtwo_{\pairone_i}}\pairone_i:\typefour_i;\label{equ:ifcloII}\\
      \icontextone;\iconstraintone,\natit{1}\leq\itermfour;\varone_1:\qbang{\ivarone<\itermfive_1}{\typefive_1},\ldots,\varone_n:\qbang{\ivarone<\itermfive_n}{\typefive_n}
      &\pvdash_{\itermone_\termthree}\termthree:\typethree;\label{equ:ifvvar}\\
      \icontextone,\ivarone;\iconstraintone,\natit{1}\leq\itermfour,\ivarone<\itermfive_i&\pvdash_{\itermsix_{\pairone_i}}\pairone_i:\typefive_i;\label{equ:ifcloIII}
    \end{align}
    where
    \begin{align*}
      \icontextone;\iconstraintone,\itermfour\leq\natit{0}&\models\itermone_{(\termtwo,\termthree,\envone)}=\itermone_\termtwo+\itermfour_1+\ldots+\itermfour_n+
         \sum_{\ivarone<\itermfour_1}\itermtwo_{\pairone_1}+\ldots+\sum_{\ivarone<\itermfour_n}\itermtwo_{\pairone_n};\\
      \icontextone;\iconstraintone,\natit{1}\leq\itermfour&\models\itermone_{(\termtwo,\termthree,\envone)}=\itermone_\termthree+\itermfive_1+\ldots+\itermfive_n+
         \sum_{\ivarone<\itermfive_1}\itermsix_{\pairone_1}+\ldots+\sum_{\ivarone<\itermfive_n}\itermsix_{\pairone_n}.
    \end{align*}
    Please notice how the type derivations for (\ref{equ:ifcloI}), (\ref{equ:ifcloII}) and (\ref{equ:ifcloIII}) are structurally identical, i.e., their \PCF\ counterparts are the same.
    Now, let us build index terms $\itermseven_1,\ldots,\itermseven_n$, $\itermeight_{\pairone_1},\ldots,\itermeight_{\pairone_n}$, $\itermone_{\termtwo\termthree}$ and types 
    $\typesix_1,\ldots,\typesix_n$ such that:
    \begin{align*}
      \icontextone;\iconstraintone,\itermfour\leq\natit{0}&\models\itermseven_i=\itermfour_i;\\
      \icontextone;\iconstraintone,\natit{1}\leq\itermfour&\models\itermseven_i=\itermfive_i;\\
      \icontextone;\iconstraintone,\itermfour\leq\natit{0}&\models\itermone_{\termtwo\termthree}=\itermone_\termtwo;\\
      \icontextone;\iconstraintone,\natit{1}\leq\itermfour&\models\itermone_{\termtwo\termthree}=\itermone_\termthree;\\
      \icontextone;\iconstraintone,\ivarone<\itermthree_i&\models\itermeight_{\pairone_i}=\itermone_{\pairone_i};\\
      \icontextone;\iconstraintone,\itermfour\leq\natit{0},\ivarone<\itermfour_i&\models\sb{\itermeight_{\pairone_i}}{\ivarone}{\ivarone+\itermthree_i}=\itermtwo_{\pairone_i};\\
      \icontextone;\iconstraintone,\natit{1}\leq\itermfour,\ivarone<\itermfive_i&\models\sb{\itermeight_{\pairone_i}}{\ivarone}{\ivarone+\itermthree_i}=\itermsix_{\pairone_i};\\
      \icontextone;\iconstraintone,\ivarone<\itermthree_i&\pvdash\typesix_i\teq\typetwo_i;\\
      \icontextone;\iconstraintone,\itermfour\leq\natit{0},\ivarone<\itermfour_i&\pvdash\sb{\typesix_i}{\ivarone}{\ivarone+\itermthree_i}\teq\typefour_i;\\
      \icontextone;\iconstraintone,\natit{1}\leq\itermfour,\ivarone<\itermfive_i&\pvdash\sb{\typesix_i}{\ivarone}{\ivarone+\itermthree_i}\teq\typefive_i.
    \end{align*}
    As a consequence, one can rewrite (\ref{equ:ifwvar}), (\ref{equ:ifuvar}) and (\ref{equ:ifvvar}) as follows:

    {\footnotesize
    \begin{align*}
      \icontextone;\iconstraintone;\varone_1:\qbang{\ivarone<\itermthree_1}{\typesix_1},\ldots,\varone_n:\qbang{\ivarone<\itermthree_n}{\typesix_n}
      &\pvdash_{\itermone_\termfour}\termfour:\Nat[\itermfour];\\
      \icontextone;\iconstraintone,\itermfour\leq\natit{0};\varone_1:\qbang{\ivarone<\itermseven_1}{\sb{\typesix_1}{\ivarone}{\ivarone+\itermthree_1}},\ldots,\varone_n:
        \qbang{\ivarone<\itermseven_n}{\sb{\typesix_n}{\ivarone}{\ivarone+\itermthree_n}}&\pvdash_{\itermone_{\termtwo\termthree}}\termtwo:\typethree;\\
      \icontextone;\iconstraintone,\natit{1}\leq\itermfour;\varone_1:\qbang{\ivarone<\itermseven_1}{\sb{\typesix_1}{\ivarone}{\ivarone+\itermthree_1}},\ldots,\varone_n:
        \qbang{\ivarone<\itermseven_n}{\sb{\typesix_n}{\ivarone}{\ivarone+\itermthree_n}}&\pvdash_{\itermone_{\termtwo\termthree}}\termthree:\typethree;
    \end{align*}}

    \noindent from which one obtains
    
    {\footnotesize
    $$
    \icontextone;\iconstraintone;\varone_1:\qbang{\ivarone<\itermthree_1+\itermseven_1}{\typesix_1},\ldots,\varone_n:\qbang{\ivarone<\itermthree_n+\itermseven_n}{\typesix_n}
      \pvdash_{\itermone_\termfour+\itermone_{\termtwo\termthree}}\CASE{\termfour}{\termtwo}{\termthree}:\typethree.
    $$}
    
    \noindent Similarly, one obtains that
    $$
    \icontextone,\ivarone;\iconstraintone,\ivarone<\itermthree_i+\itermseven_i\pvdash_{\itermeight_{\pairone_i}}\pairone_i:\typesix_i;
    $$
    and, as a consequence, that $\icontextone;\iconstraintone\pvdash_{\itermone_{\confone}}\confone:\typeone$, where
    $$
    \itermone_{\confone}\equiv\itermone_\termfour+\itermone_{\termtwo\termthree}+\itermthree_1+\itermseven_1+\ldots+\itermthree_n+\itermseven_n+
      \sum_{\ivarone<\itermthree_1+\itermseven_1}\itermeight_{\pairone_1}+\ldots+\sum_{\ivarone<\itermthree_n+\itermseven_n}\itermeight_{\pairone_n}.
    $$
    But observe that
    \begin{align*}
    \icontextone;\iconstraintone,\itermfour\leq\natit{0}\models&\;\itermone_{\confone}=\itermone_\termfour+\itermone_\termtwo+\itermthree_1+\ldots+\itermthree_n+
        \sum_{\ivarone<\itermthree_1}\itermeight_{\pairone_1}+\ldots+\sum_{\ivarone<\itermthree_n}\itermeight_{\pairone_n}\\
         &\qquad\qquad
         +\itermseven_1+\ldots+\itermseven_n+\sum_{\ivarone<\itermseven_1}\sb{\itermeight}{\ivarone}{\ivarone+\itermthree_1}+\cdots
         +
           \sum_{\ivarone<\itermseven_n}\sb{\itermeight}{\ivarone}{\ivarone+\itermthree_n}\\
         &\hspace{16pt}=\itermone_\termfour+\itermone_\termtwo+\itermthree_1+\ldots+\itermthree_n+
        \sum_{\ivarone<\itermthree_1}\itermone_{\pairone_1}+\ldots+\sum_{\ivarone<\itermthree_n}\itermone_{\pairone_n}\\
         &\qquad\qquad
         +\itermfour_1+\ldots+\itermfour_n+\sum_{\ivarone<\itermfour_1}\itermtwo_{\pairone_1}+\cdots +
           \sum_{\ivarone<\itermfour_n}\itermtwo_{\pairone_n}\\
         &\hspace{16pt}=\itermone_\termfour+\itermthree_1+\ldots+\itermthree_n+
        \sum_{\ivarone<\itermthree_1}\itermone_{\pairone_1}+\ldots+\sum_{\ivarone<\itermthree_n}\itermone_{\pairone_n}+\itermone_{(\termtwo,\termthree,\envone)}=\itermone.
    \end{align*}
    Similarly, one can prove that $\icontextone;\iconstraintone,\natit{1}\leq\itermfour\models\itermone_{\confone}=\itermone$.
    Summing up, we get $\icontextone;\iconstraintone\models\itermone_{\confone}=\itermone$, which is the thesis.
  \item 
    Consider the case 
    $$
    \confone\equiv(\varone_m,((\termone_0,\envone_0),\ldots,(\termone_n,\envone_n)),\stone)
    \to (\termone_m, \envone_m,\stone)\equiv\conftwo.
    $$  
    By assumption we have that $\confone$ is typable in \PCF\ and that 
    $\icontextone;\iconstraintone\pvdash_\itermone\conftwo:\typeone$. So, we have that 
    \begin{align}
      \icontextone;\iconstraintone&\pvdash_{\itermone_{(\termone_m,\envone_m)}}(\termone_m,\envone_m):\typetwo;\\
      \icontextone;\iconstraintone&\pvdash_{\itermone_\stone}\stone:(\typetwo,\typeone);
    \end{align}
    where $\icontextone;\iconstraintone\models\itermone={\itermone_{(\termone_m,\envone_m)}}+\itermone_\stone$.
    Any closure $(\termone_i,\envone_i)$ (where $1\leq i\leq n$ but $i\neq m$) can be typed as
    follows:
    $$
    \icontextone;\iconstraintone,\ivarone<\natit{0}\pvdash_{\natit{0}}(\termone_i,\envone_i):\typethree_i
    $$
    for some type $\typethree_i$. This is because all these closures are by hypothesis typable in \PCF\ and,
    moreover, $\iconstraintone,\ivarone<\natit{0}$ is inconsistent. For obvious reasons, 
    $$
    \icontextone;\iconstraintone,\ivarone<\natit{1}\pvdash_{\itermone_{(\termone_m,\envone_m)}}(\termone_m,\envone_m):\typetwo.
    $$
    Finally, we can build the following type derivation
    $$
    \infer
    {
      \icontextone;\iconstraintone;\varone_1:\qbang{\ivarone_1<\natit{0}}
                  {\typethree_1},\ldots,\varone_m:\qbang{\ivarone<\natit{1}}{\typetwo} ,\ldots, \varone_n:\qbang{\ivarone_n<\natit{0}}{\typethree_n}
         \pvdash_{\zero}\varone_m:\typetwo 
    }
    {
      \icontextone;\iconstraintone\pvdash\sb{\typetwo}{\ivartwo}{\natit{0}}\tless\typetwo
    }
    $$
    But all this implies that $\icontextone;\iconstraintone\pvdash_{\itermone_\confone}\confone:\typeone$
    where $\icontextone;\iconstraintone\models\itermone_\confone=\itermone+\natit{1}$, which implies the thesis.
\end{varitemize}
This concludes the proof.
\end{proof}

Relative completeness for programs is a direct consequence of Weighted Subject Expansion:

\begin{thm}[Relative Completeness for Programs]
\label{thm:program-relative-completeness}
Let $\termone$ be a \PCF\ program such that $\termone\ev^{n}\val{m}$. Then, there exist two index terms $\itermone$ and $\itermtwo$ 
such that $\semt{\itermone}{}{\eqpun}\leq n$ and $\semt{\itermtwo}{}{\eqpun}=m$ and such that the term 
$\termone$ is typable in \PCFld\ as $\vdash_{\itermone}^{\eqpun}\termone:\Nat[\itermtwo]$.
\end{thm}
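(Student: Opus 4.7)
The plan is to derive the theorem by running Weighted Subject Expansion (Theorem~\ref{thm:weighted-subject-expansion}) backwards along the reduction $(\termone,\varepsilon,\varepsilon)\rts(\val{m},\envone,\varepsilon)$ that witnesses $\termone\ev^{n}\val{m}$. First I would note that, because $\termone$ is a \PCF\ program, every intermediate configuration along this reduction is typable in \PCF\ (at type $\Nat$), so the hypothesis of Theorem~\ref{thm:weighted-subject-expansion} concerning $\tdtwo\prov\vdash\confone:\TtoNDT{\typeone}$ is automatically available at every step.

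The base case is to produce a precise \PCFld\ typing of the terminal configuration $(\val{m},\envone,\varepsilon)$ of weight $\natit{0}$ at type $\Nat[\natit{m}]$. Rule $\Natty$ gives $\emcon;\emcon;\emcon\pvdash_{\natit{0}}\val{m}:\Nat[\natit{m}]$, and since $\envone$ is not actually used by $\val{m}$, each closure $(\termone_i,\envone_i)$ in $\envone$ can be typed (as is done in several places in the proof of Theorem~\ref{thm:weighted-subject-expansion}) using the inconsistent constraint $\ivarone<\natit{0}$, giving a precise typing $\emcon;\emcon\pvdash_{\natit{0}}(\val{m},\envone):\Nat[\natit{m}]$; together with $\emcon;\emcon\pvdash_{\natit{0}}\varepsilon:(\Nat[\natit{m}],\Nat[\natit{m}])$, this yields
\[
\emcon;\emcon\pvdash_{\natit{0}}(\val{m},\envone,\varepsilon):\Nat[\natit{m}].
\]

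Now I would apply Theorem~\ref{thm:weighted-subject-expansion} exactly $n$ times, walking the reduction $(\termone,\varepsilon,\varepsilon)\rt\confone_{1}\rt\cdots\rt\confone_{n-1}\rt(\val{m},\envone,\varepsilon)$ in reverse. Each application produces a precise \PCFld\ typing of the previous configuration whose weight grows by at most $\natit{1}$. After $n$ steps this gives some $\itermone$ with $\emcon;\emcon\models^{\eqpun}\itermone\leq\natit{n}$ and
\[
\emcon;\emcon\pvdash_{\itermone}(\termone,\varepsilon,\varepsilon):\Nat[\natit{m}].
\]
By Lemma~\ref{lem:program-typing} this is equivalent to $\vdash_{\itermone}^{\eqpun}\termone:\Nat[\natit{m}]$. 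Setting $\itermtwo\equiv\natit{m}$ we have $\semt{\itermone}{}{\eqpun}\leq n$ and $\semt{\itermtwo}{}{\eqpun}=m$, which is the thesis.

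The only subtlety I see is that Theorem~\ref{thm:weighted-subject-expansion} is stated for a single expansion step, so I would formalise the iteration by a straightforward induction on $n$ (where the inductive hypothesis is that a precise typing at weight $\leq n$ of $(\termone,\varepsilon,\varepsilon)$ at type $\Nat[\natit{m}]$ can be effectively built from one of $(\val{m},\envone,\varepsilon)$). The universality of $\eqpun$ is already baked into the Subject Expansion theorem, so no further semantic work is needed here; this is why the authors advertise the statement as an immediate corollary. The only real obstacle, namely the fine bookkeeping of weights and indices under each kind of expansion step, has already been absorbed into the proof of Theorem~\ref{thm:weighted-subject-expansion}.
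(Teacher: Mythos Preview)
Your proposal is correct and follows exactly the approach the paper takes: the paper's own proof is the single line ``By induction on $n$ using Weighted Subject Expansion and Lemma~\ref{lem:program-typing},'' and you have simply spelled out the base case (precisely typing the terminal configuration at weight $\natit{0}$ via rule $\Natty$ and inconsistent constraints for the unused environment) and the inductive step (one application of Theorem~\ref{thm:weighted-subject-expansion} per reduction step). The only refinement worth noting is that the hypothesis $\tdtwo\rt\TtoNDT{\tdone}$ in Theorem~\ref{thm:weighted-subject-expansion} asks for a specific \PCF\ derivation of $\confone$ that reduces to the skeleton of the current \PCFld\ derivation, not merely \emph{some} \PCF\ typing; this is obtained by fixing a \PCF\ derivation for $(\termone,\varepsilon,\varepsilon)$ once, pushing it forward along the whole reduction to get a chain $\tdtwo_0\rt\cdots\rt\tdtwo_n$, arranging the base-case \PCFld\ typing so that its skeleton is $\tdtwo_n$, and then using the ``moreover $\TtoNDT{\tdthree}=\tdtwo$'' clause of Theorem~\ref{thm:weighted-subject-expansion} to keep the skeletons aligned at every backward step.
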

\begin{proof}
  By induction on $n$ using Weighted Subject Expansion
  and  Lemma \ref{lem:program-typing}.
\end{proof}

%%%%%%%%%%%%%%%%%%%%%%%%%%%%%%%%%%%%%%%%%%%%%%%%%%%%%%%%%%%%%%%%%%%
\subsection{Uniformization and Relative Completeness for Functions}
\label{sec:uniformization}
%%%%%%%%%%%%%%%%%%%%%%%%%%%%%%%%%%%%%%%%%%%%%%%%%%%%%%%%%%%%%%%%%%%
It is useful to recall that by \emph{relative completeness for functions} we mean the following:
for each \PCF\ term $\termone$ computing a total function $f$ 
in time expressed by a function $g$ there exists a type derivation  
in \PCFld\
whose index terms capture both the extensional functional 
behavior $f$ and the intensional property $g$. 
Anticipating on what follows, and using an intuitive notation, this can be expressed 
by a typing judgement like
$$
\ivarone;\emcon;\varone:\Nat[\ivarone] \vdash_{\texttt{g}(\ivarone)} \termone:\Nat [\texttt{f}(\ivarone)].
$$
In order to show this form of relative completeness, a \emph{uniformization} result for
type derivations needs to be proved. 

Suppose that $\{\tdone\}_{n\in\mathbb{N}}$
is a sufficiently ``regular'' (i.e. recursively enumerable) family of type derivations 
such that any $\tdone_n$ is mapped
by $\TtoNDT{\cdot}$ to the \emph{same} \PCF\ type derivation. Uniformization
tells us that with the hypothesis above, there is a \emph{single} type derivation 
$\tdone$ which captures the whole family $\{\tdone_n\}_{n\in\mathbb{N}}$.
In other words, uniformization is an extreme form of polymorphism.
Note that, for instance, uniformization does not hold in intersection types, 
where \emph{uniform typing} permits only to define small classes of functions 
\cite{Leivant:acm:lfp:1990,LICS99*109,
journals/mscs/BucciarelliPS03}.

More formally, a family $\{\tdone_\natone\}_{\natone\in\NN}$ of type derivations is
said to be \emph{recursively enumerable} if there is a computable function $\funone$
which, on input $\natone$, returns (an encoding of) $\tdone_\natone$. Similarly,
recursively enumerable families of index terms, types and modal types can be defined.

It is easy to turn ``uniform families'' of semantic entailments into one compact form:
\begin{lem}\label{lemma:unifsem}
\begin{varenumerate}
\item
  If for every $\natone\in\NN$ it holds that
  $\icontextone;\sb{\iconstraintone}{\ivarone}{\natit{\natone}}\models^\eqpone\sb{\itermone}{\ivarone}{\natit{\natone}}\kleq\sb{\itermtwo}{\ivarone}{\natit{\natone}}$,
  then $\icontextone,\ivarone;\iconstraintone\models^\eqpone\itermone\kleq\itermtwo$.
\item
  If for every $\natone\in\NN$ it holds that
  $\icontextone;\sb{\iconstraintone}{\ivarone}{\natit{\natone}}\models^\eqpone\sb{\itermone}{\ivarone}{\natit{\natone}}\leq\sb{\itermtwo}{\ivarone}{\natit{\natone}}$,
  then $\icontextone,\ivarone;\iconstraintone\models^\eqpone\itermone\leq\itermtwo$.
\end{varenumerate}
\end{lem}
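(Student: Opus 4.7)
The plan is to prove both parts by directly unfolding the definition of the semantic entailment relation $\models^\eqpone$ and exploiting the compositionality of the interpretation of index terms. Recall that $\icontextone,\ivarone;\iconstraintone\models^\eqpone\itermone\kleq\itermtwo$ means that for every assignment $\assone$ to the variables in $\icontextone,\ivarone$ that satisfies every constraint in $\iconstraintone$, the value $\semt{\itermone}{\assone}{\eqpone}$ is defined iff $\semt{\itermtwo}{\assone}{\eqpone}$ is defined, and in the first case the two values coincide.

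So for Part 1, I would fix an arbitrary such assignment $\assone$ and set $\natone = \assone(\ivarone) \in \NN$. The hypothesis is applied at this specific $\natone$: restricting $\assone$ to the variables in $\icontextone$ yields an assignment $\assone'$ which satisfies $\sb{\iconstraintone}{\ivarone}{\natit{\natone}}$, because substituting the numeral $\natit{\natone}$ for $\ivarone$ in any constraint is semantically equivalent, under $\assone'$, to evaluating the original constraint under $\assone$ (this is a standard compositionality observation about the interpretation of index terms, using the fact that $\semt{\natit{\natone}}{\assone'}{\eqpone} = \natone = \assone(\ivarone)$). Thus the hypothesis applied at $\natone$ gives $\semt{\sb{\itermone}{\ivarone}{\natit{\natone}}}{\assone'}{\eqpone} \kleq \semt{\sb{\itermtwo}{\ivarone}{\natit{\natone}}}{\assone'}{\eqpone}$, and again by compositionality these values coincide with $\semt{\itermone}{\assone}{\eqpone}$ and $\semt{\itermtwo}{\assone}{\eqpone}$ respectively. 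Part 2 follows by exactly the same argument with $\leq$ in place of $\kleq$ throughout.

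The only step that requires care is the compositionality observation linking $\semt{\sbt{\termone}{[\natit{\natone}/\ivarone]}}{\assone'}{\eqpone}$ with $\semt{\termone}{\assone}{\eqpone}$ for arbitrary index terms $\termone$ (and analogously for constraints); this is a routine induction on the structure of index terms, dispatching the cases of variables, function symbols, bounded sums and forest cardinalities by a direct appeal to the inductive definition of $\semt{\cdot}{\cdot}{\eqpone}$ given in Section~\ref{sect:itep}. I do not expect any serious obstacle here, since the statement really is a semantic reformulation of the obvious fact that a universally quantified statement over $\NN$ is equivalent to the family of its instances. The mild subtlety lies in making sure that the Kleene equality version (Part 1) is handled correctly: one must argue separately that definedness of $\itermone$ under $\assone$ is equivalent to definedness of $\itermtwo$ under $\assone$, which again reduces to the corresponding Kleene equality of the substituted terms at the value $\natone = \assone(\ivarone)$.
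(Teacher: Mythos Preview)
Your proposal is correct and follows exactly the same approach as the paper, namely a direct unfolding of the definition of semantic entailment; the paper's own proof is even terser than yours and simply says the result is ``just a trivial consequence of the way semantic entailment is defined.'' Your added remarks on the compositionality/substitution property of $\semt{\cdot}{\cdot}{\eqpone}$ are a sound (and arguably more careful) elaboration of what the paper leaves implicit.
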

\begin{proof}
This is just an trivial consequence of the way semantic entailment is defined.
Suppose, for example, that for every $\natone\in\NN$ the following holds
$\icontextone;\sb{\iconstraintone}{\ivarone}{\natit{\natone}}\models^\eqpone
\sb{\itermone}{\ivarone}{\natit{\natone}}\kleq\sb{\itermtwo}{\ivarone}{\natit{\natone}}$.
Now, what should we do to prove $\icontextone,\ivarone;\iconstraintone\models^\eqpone\itermone\kleq\itermtwo$?
We should prove that for every value of the variables in
$\icontextone,\ivarone$ satisfying $\iconstraintone$, $\itermone$ and $\itermtwo$
are equal in the sense of Kleene. But this is just what the hypothesis ensures.
\end{proof}

Before embarking on the proof of uniformization for type derivations, it makes sense
to prove the same result for index terms and types, respectively.
\begin{lem}[Uniformizing Index Terms]\label{lemma:unindtms}
Suppose that:
\begin{varenumerate}
\item
  $\{\itermone_\natone\}_{\natone\in\NN}$ is recursively enumerable,
  where for every $\natone\in\NN$, $\itermone_\natone$ is an index term on a 
  signature $\sigun$;
\item
  There is a finite set of variables $\icontextone=\ivarone_1,\ldots,\ivarone_\nattwo$ 
  such that any variables appearing in any $\itermone_\natone$ is in $\icontextone$
\end{varenumerate}
Then there is a term $\itermone$ on the signature $\sigun$ such that 
$\icontextone;\emcon\vdash^\eqpun\sb{\itermone}{\ivarone}{\natit{\natone}}\kleq\itermone_\natone$
for every $\natone$.
\end{lem}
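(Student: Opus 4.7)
The plan is to exploit the universality of $\eqpun$ together with the recursive enumerability hypothesis to encode the entire family as a single function symbol, then wrap it with $\evalfs$ to get one index term depending on the extra variable $\ivarone$.

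First, I would observe that the partial function
$h : \NN^{\nattwo+1}\rightharpoonup\NN$
defined by
$h(\natone,k_1,\ldots,k_\nattwo) \kleq \semt{\itermone_\natone}{\assone_{\vec{k}}}{\eqpun}$,
where $\assone_{\vec{k}}(\ivarone_i)=k_i$, is partial computable. This relies on two facts: (i) from the recursive enumerability of $\{\itermone_\natone\}$, one can compute (a G\"odel code of) $\itermone_\natone$ given $\natone$; and (ii) the evaluation map $\itermone\mapsto\semt{\itermone}{\assone}{\eqpun}$ is itself partial computable, since $\eqpun$ is fixed and its defining equations together with the recursive definitions of bounded sums and forest cardinalities specify a computable simulation (undefinedness of $h$ on some inputs, e.g.\ from diverging forest cardinalities, is unproblematic because $h$ is allowed to be partial).

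Next, I would appeal to the universality of $\eqpun$. Since $h$ is partial computable, there is an equational program $\eqpone$ on a suitable signature containing a function symbol $\funsymone$ of arity $\nattwo+1$ whose semantics in $\eqpone$ agrees with $h$. By the defining property of $\eqpun$ stated in Section~\ref{sec:universal-eq-prog}, we then have
\[
\semt{\evalfs\bigl(\gn{\eqpone}{\funsymone},\pairing_{\nattwo+1}(\ftermone_0,\ftermone_1,\ldots,\ftermone_\nattwo)\bigr)}{\assone}{\eqpun}
\kleq h\bigl(\semt{\ftermone_0}{\assone}{\eqpun},\ldots,\semt{\ftermone_\nattwo}{\assone}{\eqpun}\bigr)
\]
for all index terms $\ftermone_0,\ldots,\ftermone_\nattwo$. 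This identification is exactly what lets us turn the ``external'' parameter $\natone$ into an ``internal'' index variable.

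I would then \emph{define}
\[
\itermone \;\equiv\; \evalfs\bigl(\gn{\eqpone}{\funsymone},\pairing_{\nattwo+1}(\ivarone,\ivarone_1,\ldots,\ivarone_\nattwo)\bigr),
\]
where $\ivarone$ is a fresh variable. For every $\natone\in\NN$, substituting $\natit{\natone}$ for $\ivarone$ and evaluating under any assignment $\assone$ of $\icontextone$ yields, by the universality identity above, $h(\natone,\assone(\ivarone_1),\ldots,\assone(\ivarone_\nattwo))$, which by definition of $h$ is Kleene-equal to $\semt{\itermone_\natone}{\assone}{\eqpun}$. Appealing to Lemma~\ref{lemma:unifsem}(1) (applied with $\icontextone$ as the variable context and no index variables to quantify over beyond $\icontextone$), this pointwise semantic equality is precisely the desired judgement $\icontextone;\emcon\vdash^{\eqpun}\sb{\itermone}{\ivarone}{\natit{\natone}}\kleq\itermone_\natone$.

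The main obstacle I expect is step (ii) above: arguing carefully that evaluation under $\eqpun$ is partial computable on \emph{all} index term constructions, especially forest cardinalities whose defining equations~(\ref{eqn:fcbase})--(\ref{eqn:fcind}) are corecursive in flavour and may diverge. The right way to handle this is to spell out a straightforward interpreter that unfolds the defining equations of $\eqpun$ (and of the forest cardinality/bounded sum schemata) in a fair manner and returns a value exactly when the intended semantics is defined, so that partiality is faithfully reflected by non-termination of the interpreter. Once this is in place, the rest of the construction is purely a matter of plugging into the universality property of $\eqpun$.
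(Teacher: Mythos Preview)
Your proposal is correct and follows essentially the same approach as the paper: define the partial computable function $(n,k_1,\ldots,k_m)\mapsto\semt{\itermone_n}{[\ivarone_i\leftarrow k_i]}{\eqpun}$, argue it is computable using recursive enumerability of the family plus computability of evaluation, and then invoke universality of $\eqpun$ to realize it by a single index term in the fresh variable $\ivarone$. Your write-up is in fact more explicit than the paper's, spelling out the $\evalfs$/$\pairing$ encoding and the partiality issue for forest cardinalities; the only superfluous step is the appeal to Lemma~\ref{lemma:unifsem}, since the target judgement is already a semantic entailment that you establish directly for every assignment.
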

\begin{proof}
Consider the function $\funone:\NN^{\nattwo+1}\rightarrow\NN$ defined as follows:
$$
(x_0,x_1,\ldots,x_\nattwo)\mapsto\semt{\itermone_{x_0}}{[\ivarone_1\leftarrow x_1,\ldots,\ivarone_\natone\leftarrow x_\nattwo]}{\eqpun}.
$$
An algorithm computing $\funone$ can be defined as follows:
\begin{varitemize}
\item
  From $x_0$, compute $\itermone_{x_0}$. Again, this can be done
  effectively.
\item
  Evaluate $\itermone_{x_0}$ where the variables $\ivarone_1,\ldots,\ivarone_\natone$
  takes values $x_1,\ldots,x_\natone$, respectively.
\end{varitemize}
In other words, $\funone$ is computable. Thus, the existence
of a term $\itermone$ like the one required is a consequence of
the universality of the equational program $\eqpun$.
\end{proof}
Observe how the index terms in $\{\itermone_\natone\}_{\natone\in\NN}$
need not be defined for all values of the variables occurring in them.
More: their domains of definition can all be different. The way $\itermone$
is defined, however, ensures that $\semu{\sb{\itermone}{\ivarone}{\natit{n}}}$ is
defined iff $\semu{\itermone_n}$ is defined. Uniformizing types requires
a little more care:
\begin{lem}[Uniformizing Types and Modal Types]\label{lemma:unityp}
Suppose that $\{\tdone_\natone\}_{\natone\in\NN}$ is recursively enumerable and that:
\begin{varenumerate} 
  \item
    for every $\natone\in\NN$, $\tdone_\natone\prov\icontextone;\iconstraintone_\natone\vdash^{\eqpun}\conv{\typeone_\natone}$;
  \item
    for every $\natone,\nattwo\in\NN$, $\TtoNDT{\typeone_\natone}=\TtoNDT{\typeone_\nattwo}$;
  \item
    every $\iconstraintone_\natone$ have the form
    $\itermone^\natone_1\leq\itermtwo^\natone_1,\ldots,\itermone^\natone_\nattwo\leq\itermtwo^\natone_\nattwo$,
    where $\nattwo$ does not depend on $\natone$.
\end{varenumerate}
Then there is one type $\typeone$ such that:
\begin{varenumerate}
\item
  $\icontextone,\ivarone;\iconstraintone\vdash^{\eqpun}\conv{\typeone}$;
\item  
  $\iconstraintone=\itermone_1\leq\itermtwo_1,\ldots,\itermone_\nattwo\leq\itermtwo_\nattwo$;
\item
  for every $1\leq\natthree\leq\nattwo$, both
  $\icontextone;\emcon\vdash^\eqpun\sb{\itermone_\natthree}{\ivarone}{\natit{\natone}}\kleq\itermone_\natthree^\natone$ and
  $\icontextone;\emcon\vdash^\eqpun\sb{\itermtwo_\natthree}{\ivarone}{\natit{\natone}}\kleq\itermtwo_\natthree^\natone$;
\item
  for every $\natone\in\NN$, it holds that $\icontextone;\sb{\iconstraintone}{\ivarone}{\natit{\natone}}
  \vdash^{\eqpun}\sb{\typeone}{\ivarone}{\natit{\natone}}\cong\typeone_\natone$.
\end{varenumerate}
Moreover, the same statement holds for modal types.
\end{lem}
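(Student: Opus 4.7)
The plan is to induct on the common PCF skeleton $\TtoNDT{\typeone_\natone}$, using Lemma~\ref{lemma:unindtms} as the key building block to turn any recursively enumerable family of index terms into one parametric index term. Before entering the induction I would first uniformize the constraint sets: since every $\iconstraintone_\natone$ has the fixed length $\nattwo$, the $2\nattwo$ coordinatewise families $\{\itermone_j^\natone\}_\natone$ and $\{\itermtwo_j^\natone\}_\natone$ are themselves recursively enumerable (extract them by computing $\tdone_\natone$ and reading off the inequalities), so Lemma~\ref{lemma:unindtms} yields index terms $\itermone_1,\ldots,\itermone_\nattwo,\itermtwo_1,\ldots,\itermtwo_\nattwo$ satisfying clause~3 of the statement. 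Take $\iconstraintone$ to be $\itermone_1\leq\itermtwo_1,\ldots,\itermone_\nattwo\leq\itermtwo_\nattwo$.

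For the induction itself the statement should be phrased so that both basic and modal types, and arbitrary ambient contexts, are handled simultaneously. In the base case $\TtoNDT{\typeone_\natone}=\Nat$, each $\typeone_\natone=\Nat[\itermthree^\natone,\itermfour^\natone]$; uniformizing both bound families via Lemma~\ref{lemma:unindtms} yields a candidate $\typeone=\Nat[\itermthree,\itermfour]$. Well-definedness $\icontextone,\ivarone;\iconstraintone\vdash^\eqpun\conv{\typeone}$ then follows from the well-definedness rule for $\Nat$ once I lift, via Lemma~\ref{lemma:unifsem}, the pointwise definedness of each $\itermthree^\natone$ and $\itermfour^\natone$ (inherited from the hypothesis $\tdone_\natone\prov\icontextone;\iconstraintone_\natone\vdash^\eqpun\conv{\typeone_\natone}$) to the uniform semantic judgement; the equivalence $\sb{\typeone}{\ivarone}{\natit{\natone}}\cong\typeone_\natone$ under $\sb{\iconstraintone}{\ivarone}{\natit{\natone}}$ is then immediate from the Kleene equalities produced by Lemma~\ref{lemma:unindtms} together with the $\Nat$-subtyping rule. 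The arrow case is routine: split each $\typeone_\natone=\mtypeone_\natone\lin\typetwo_\natone$ at the well-definedness rule for $\lin$, extract the two recursively enumerable subfamilies, apply the induction hypothesis in parallel, and set $\typeone=\mtypeone\lin\typetwo$. For the modal case, each $\mtypeone_\natone=\qbang{\ivartwo<\itermfive^\natone}{\typesix_\natone}$; uniformize the bound to a single $\itermfive$ via Lemma~\ref{lemma:unindtms}, and then invoke the induction hypothesis on $\{\typesix_\natone\}$ in the enlarged context $\icontextone,\ivartwo$ with the enlarged constraint set $\iconstraintone,\ivartwo<\itermfive$, obtaining $\typesix$; set $\mtypeone=\qbang{\ivartwo<\itermfive}{\typesix}$.

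The one point that requires care, and where I expect the main obstacle to lie, is the modal case: the induction hypothesis is invoked with a constraint list one longer than the one we started with and with one extra index variable in the ambient context. This forces the statement to be generalized, since the fixed bound $\nattwo$ on the length of $\iconstraintone$ is not intrinsic to the argument; the induction has to be phrased so that the context of index variables and the list of constraints can grow as the PCF skeleton is traversed under modal binders. Once this more flexible inductive statement is in place, every case reduces to a single application of Lemma~\ref{lemma:unindtms} to a recursively enumerable family of index terms followed by Lemma~\ref{lemma:unifsem} to transform pointwise semantic entailments into uniform ones, and the conclusion for modal types follows by exactly the same scheme as for basic types.
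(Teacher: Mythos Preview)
Your proposal is correct and follows essentially the same approach as the paper: induction on the common \PCF\ skeleton $\TtoNDT{\typeone_0}$ (simultaneously for basic and modal types), with Lemma~\ref{lemma:unindtms} handling each family of index terms and Lemma~\ref{lemma:unifsem} lifting the pointwise definedness to the uniform judgement. Your remark that the modal case forces a more flexible inductive statement (allowing the index-variable context and constraint list to grow) is a point the paper leaves implicit, but it is exactly what is needed and your handling of it is correct.
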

\begin{proof}
The proof goes by induction on the structure of the type $\TtoNDT{\typeone_0}$
and of the modal type $\TtoNDT{\mtypeone_0}$. An essential ingredient in the proof is,
of course, Lemma~\ref{lemma:unindtms}. Suppose, as an example, that $\TtoNDT{\typeone_0}\equiv\Nat$.
This implies that there are index terms $\itermthree_\natone,\itermfour_\natone$ such that,
for every $\natone\in\NN$,
$$
\typeone_n\equiv\Nat[\itermthree_\natone,\itermfour_\natone].
$$
Now, let $\itermone_1,\itermtwo_1,\ldots,\itermone_\nattwo,\itermtwo_\nattwo,\itermthree,\itermfour$
be the index terms obtained from the families 
$$
\{\itermone_1^\natone\}_{\natone\in\NN},\{\itermtwo_1^\natone\}_{\natone\in\NN},\ldots,\{\itermone_\nattwo^\natone\}_{\natone\in\NN},
\{\itermtwo_\nattwo^\natone\}_{\natone\in\NN},\{\itermthree_\natone\}_{\natone\in\NN},\{\itermfour_\natone\}_{\natone\in\NN}
$$
through Lemma~\ref{lemma:unindtms}. Let $\iconstraintone$ be just
$\itermone_1\leq\itermtwo_1,\ldots,\itermone_\nattwo\leq\itermtwo_\nattwo$ and let $\typeone$ be
$\Nat[\itermthree,\itermfour]$.
From $\tdone_\natone\prov\icontextone;\iconstraintone_\natone\vdash^{\eqpun}\conv{\typeone_\natone}$,
it follows that
\begin{align}
  \icontextone;\sb{\iconstraintone}{\ivarone}{\natit{n}}&\models^\eqpun\conv{\sb{\itermthree}{\ivarone}{\natit{n}}}\label{equ:unindI};\\
  \icontextone;\sb{\iconstraintone}{\ivarone}{\natit{n}}&\models^\eqpun\conv{\sb{\itermfour}{\ivarone}{\natit{n}}}\label{equ:unindII}.
\end{align}
By Lemma~\ref{lemma:unifsem}, it follows that
\begin{align*}
  \icontextone,\ivarone;\iconstraintone&\models^\eqpun\conv{\itermthree};\\
  \icontextone,\ivarone;\iconstraintone&\models^\eqpun\conv{\itermfour};
\end{align*}
which implies $\icontextone,\ivarone;\iconstraintone\vdash^{\eqpun}\conv{\typeone}$.
From (\ref{equ:unindI}) and $\icontextone;\emcon\vdash^\eqpun\sb{\itermthree}{\ivarone}{\natit{n}}\kleq\itermthree_\natone$,
it follows that 
$$
\icontextone;\sb{\iconstraintone}{\ivarone}{\natit{n}}\models^\eqpun\sb{\itermthree}{\ivarone}{\natit{n}}=\itermthree_\natone.
$$
Similarly, from ~\ref{equ:unindII} one obtains 
$$
\icontextone;\sb{\iconstraintone}{\ivarone}{\natit{n}}\models^\eqpun\sb{\itermfour}{\ivarone}{\natit{n}}=\itermfour_\natone.
$$
As a consequence, $\icontextone;\sb{\iconstraintone}{\ivarone}{\natit{\natone}}
\vdash^{\eqpun}\sb{\typeone}{\ivarone}{\natit{\natone}}\cong\typeone_\natone$.
\end{proof}
Now that we are able to unify a denumerable family of types into one, we have all the necessary tools
to turn a family of judgements into one. For \emph{subtyping} judgments, the task is relatively simple,
because types and index terms occurring inside any subtyping derivation also occur in its conclusion:
\begin{lem}[Uniformizing Subtyping Judgments]\label{lemma:unifrel}
  If for every $\natone\in\NN$ it holds that
  $\icontextone;\sb{\iconstraintone}{\ivarone}{\natit{\natone}}
  \vdash^\eqpone\sb{\typeone}{\ivarone}{\natit{\natone}}\tless
  \sb{\typetwo}{\ivarone}{\natit{\natone}}$,
  then $\icontextone,\ivarone;\iconstraintone\vdash^\eqpone\typeone\tless\typetwo$.
\end{lem}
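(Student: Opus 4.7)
The plan is to proceed by induction on the structure of $\typeone$ (or equivalently on the common \PCF\ skeleton $\TtoNDT{\typeone}=\TtoNDT{\typetwo}$, which must coincide since the hypothesis presumes derivability of $\sb{\typeone}{\ivarone}{\natit{\natone}}\tless\sb{\typetwo}{\ivarone}{\natit{\natone}}$ for every $\natone$). The key observation is that the subtyping rules in Figure~\ref{fig:tsub} are syntax-directed: the outermost constructors of $\typeone$ and $\typetwo$ completely determine which rule applies. Thus, for each $\natone$, the derivation of $\icontextone;\sb{\iconstraintone}{\ivarone}{\natit{\natone}}\vdash^\eqpone\sb{\typeone}{\ivarone}{\natit{\natone}}\tless\sb{\typetwo}{\ivarone}{\natit{\natone}}$ ends with the same rule, and inversion on that rule yields a $\natone$-indexed family of premises which I can uniformize recursively. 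To make the induction go through, I would actually strengthen the statement so that $\icontextone$ is arbitrary, which is needed for the modality case below.

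In the base case $\typeone\equiv\Nat[\itermone,\itermtwo]$ and $\typetwo\equiv\Nat[\itermthree,\itermfour]$, inversion of rule $\natleq$ produces, for each $\natone$, the entailments $\icontextone;\sb{\iconstraintone}{\ivarone}{\natit{\natone}}\models^\eqpone\sb{\itermthree}{\ivarone}{\natit{\natone}}\leq\sb{\itermone}{\ivarone}{\natit{\natone}}$ and $\icontextone;\sb{\iconstraintone}{\ivarone}{\natit{\natone}}\models^\eqpone\sb{\itermtwo}{\ivarone}{\natit{\natone}}\leq\sb{\itermfour}{\ivarone}{\natit{\natone}}$. Two applications of Lemma~\ref{lemma:unifsem}(2) collapse these families into $\icontextone,\ivarone;\iconstraintone\models^\eqpone\itermthree\leq\itermone$ and $\icontextone,\ivarone;\iconstraintone\models^\eqpone\itermtwo\leq\itermfour$, whence rule $\natleq$ concludes.

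In the arrow case $\typeone\equiv\mtypeone\lin\typeone'$ and $\typetwo\equiv\mtypetwo\lin\typetwo'$, inversion of $\linleq$ gives two families, $\icontextone;\sb{\iconstraintone}{\ivarone}{\natit{\natone}}\vdash^\eqpone\sb{\mtypetwo}{\ivarone}{\natit{\natone}}\tless\sb{\mtypeone}{\ivarone}{\natit{\natone}}$ and $\icontextone;\sb{\iconstraintone}{\ivarone}{\natit{\natone}}\vdash^\eqpone\sb{\typeone'}{\ivarone}{\natit{\natone}}\tless\sb{\typetwo'}{\ivarone}{\natit{\natone}}$, to which the induction hypothesis applies directly; rule $\linleq$ then concludes. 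In the modality case $\typeone\equiv\qbang{\ivartwo<\itermone}{\typeone'}$ and $\typetwo\equiv\qbang{\ivartwo<\itermtwo}{\typetwo'}$, inversion of $\qbleq$ yields, for each $\natone$, the entailment $\icontextone;\sb{\iconstraintone}{\ivarone}{\natit{\natone}}\models^\eqpone\sb{\itermtwo}{\ivarone}{\natit{\natone}}\leq\sb{\itermone}{\ivarone}{\natit{\natone}}$ (handled again by Lemma~\ref{lemma:unifsem}(2)) together with the family $\icontextone,\ivartwo;\sb{\iconstraintone}{\ivarone}{\natit{\natone}},\ivartwo<\sb{\itermone}{\ivarone}{\natit{\natone}}\vdash^\eqpone\sb{\typeone'}{\ivarone}{\natit{\natone}}\tless\sb{\typetwo'}{\ivarone}{\natit{\natone}}$. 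Since $\ivarone\neq\ivartwo$ may be assumed, the latter is exactly the form to which the (strengthened) induction hypothesis applies, with the index context taken to be $\icontextone,\ivartwo$ and the constraint set taken to be $\iconstraintone,\ivartwo<\itermone$; this yields $\icontextone,\ivartwo,\ivarone;\iconstraintone,\ivartwo<\itermone\vdash^\eqpone\typeone'\tless\typetwo'$, and rule $\qbleq$ concludes.

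The only delicate point is the modality case, where a naive formulation of the induction hypothesis with a fixed $\icontextone$ would fail to accommodate the extension by $\ivartwo$; stating the lemma with $\icontextone$ universally quantified at the outset removes this obstacle. Everything else is routine syntactic bookkeeping, with all semantic content of the argument concentrated in Lemma~\ref{lemma:unifsem}(2).
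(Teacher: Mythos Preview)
Your proof is correct and follows exactly the same approach as the paper: induction on the structure of $\typeone$, inversion of the syntax-directed subtyping rules, and appeal to Lemma~\ref{lemma:unifsem} for the semantic premises. The paper only spells out the $\Nat$ base case, while you also treat the arrow and modal cases and flag the need to apply the induction hypothesis with an enlarged index context in the modal case; this is a sound observation that the paper leaves implicit.
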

\begin{proof}
This is an induction on the structure of a proof of $\typeone$.
If, as an example, $\typeone\equiv\Nat[\itermone,\itermtwo]$, then
$\typetwo\equiv\Nat[\itermthree,\itermfour]$. From the hypothesis,
we know that
\begin{align*}
  \icontextone;\sb{\iconstraintone}{\ivarone}{\natit{\natone}}
  &\vdash^\eqpone\sb{\itermthree}{\ivarone}{\natit{\natone}}\leq
  \sb{\itermone}{\ivarone}{\natit{\natone}};\\
  \icontextone;\sb{\iconstraintone}{\ivarone}{\natit{\natone}}
  &\vdash^\eqpone\sb{\itermtwo}{\ivarone}{\natit{\natone}}\leq
  \sb{\itermfour}{\ivarone}{\natit{\natone}}.
\end{align*} 
By Lemma~\ref{lemma:unifsem}, we can conclude that
\begin{align*}
  \icontextone;\iconstraintone&\vdash^\eqpone\itermthree\leq\itermone;\\
  \icontextone;\iconstraintone&\vdash^\eqpone\itermtwo\leq\itermfour;
\end{align*} 
which immediately yields the thesis.
\end{proof}
In \emph{typing} judgments, on the other hand, there can be types and index terms
which occur in the derivation, but not in its conclusion --- think about how applications
are typed. We then need to impose some further constraints on the kind of (type derivation)
families which we can unify:
\begin{lem}[Uniformizing Typing Judgments]
\label{lem:uniformizing-typing}
If for every $\natone\in\NN$ it holds that
$\tdone_\natone\prov\icontextone;\sb{\iconstraintone}{\ivarone}{\natit{\natone}};\sb{\tcontextone}{\ivarone}{\natit{\natone}}
\vdash^{\eqpun}_{\sb{\itermone}{\ivarone}{\natit{\natone}}}\termone:\sb{\typeone}{\ivarone}{\natit{\natone}}$, where
$\{\tdone_\natone\}_{\natone\in\NN}$ is recursively enumerable and such that $\TtoNDT{\tdone_\natone}=\TtoNDT{\tdone_\nattwo}$ for every $\natone,\nattwo\in\NN$,
then $\icontextone,\ivarone;\iconstraintone;\tcontextone\vdash^{\eqpun}_{\itermone}\termone:\typeone$.
\end{lem}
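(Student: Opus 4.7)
The plan is to proceed by induction on the common \PCF\ skeleton of the derivations $\tdone_\natone$, i.e.\ on the structure of $\TtoNDT{\tdone_\natone}$ (which, by hypothesis, does not depend on $\natone$). At every node of this skeleton, each $\tdone_\natone$ ends with the same typing rule, and the premises involve some ``visible'' data (the typing context, the term, the type and the weight appearing in the conclusion, all of which are already of the required uniform shape by hypothesis) together with some ``auxiliary'' data (types, modal types, index terms, subtyping judgments, and semantic entailments that do not appear in the conclusion, e.g.\ the domain $\qbang{\ivarone<\itermone}{\typetwo}$ in the $\Apty$ rule, or the index terms $\itermtwo,\itermfive,\itermsix$ and the type $\typetwo$ in the $\Recty$ rule). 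The goal at each inductive step is to uniformize all the auxiliary data into single expressions whose substitutions $[\,\natit{\natone}/\ivarone\,]$ recover (up to $\kleq$ and $\cong$) those used in the $\tdone_\natone$, and then to assemble them with the typing rule.

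For each fixed node of the skeleton, the auxiliary data form families $\{\typeone'_\natone\}_\natone$, $\{\mtypeone'_\natone\}_\natone$, $\{\itermone'_\natone\}_\natone$ that are recursively enumerable because $\{\tdone_\natone\}_\natone$ is, and they have a common \PCF\ erasure for types. Accordingly, I apply Lemma~\ref{lemma:unindtms} to uniformize index term families, Lemma~\ref{lemma:unityp} to uniformize type and modal type families (producing the required well-formedness judgements under a uniformized set of constraints), Lemma~\ref{lemma:unifrel} to uniformize the subtyping premises, and Lemma~\ref{lemma:unifsem} to turn the pointwise semantic inequalities on weights, bounds and decorations of $\Nat$-types into a single entailment $\icontextone,\ivarone;\iconstraintone\models^{\eqpun}\cdots$. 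Wherever the premises of the rule contain a typing judgement for a proper subterm $\termtwo$ of $\termone$, the family of corresponding sub-derivations of $\tdone_\natone$ is itself r.e.\ with constant \PCF\ skeleton, so the induction hypothesis yields a single typing derivation for $\termtwo$ with the uniformized data. Finally I close the step with the same typing rule, observing that the bounded sums, products and subtypings in the rule (e.g.\ $\tcontexthree\tless\tcontextone\uplus\sum_{\ivarone<\itermone}\tcontexttwo$) commute with the substitution $[\,\natit{\natone}/\ivarone\,]$, so the uniformized judgement indeed reduces, on each $\natone$, to the one in $\tdone_\natone$.

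The main obstacle is the $\Recty$ case, where one must uniformize a whole bundle of interdependent data: the ``tree of calls'' index term $\itermone$, the bounds $\itermfive$ and $\itermsix$, the auxiliary type $\typetwo$ depending on $\ivartwo$, and the subtyping premise involving the forest cardinality $\Dfscomb{\ivartwo}{\ivartwo+\natit{1}}{\ivarone}{\itermone}$. Here I will exploit the fact that the universal program $\eqpun$ can express any computable function on indices, so Lemmas~\ref{lemma:unindtms} and~\ref{lemma:unityp} let me pick a single index term playing the role of $\itermone$, together with uniform $\itermfive,\itermsix,\typetwo$, from which the required instances are recovered by substituting $\natit{\natone}$ for $\ivarone$; then Lemma~\ref{lemma:unifrel} uniformizes the subtyping involving the forest cardinality (which is preserved by the substitution, since $\Dfscomb{\cdot}{\cdot}{\cdot}{\cdot}$ is a term constructor of the index language). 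The $\Apty$ and $\Ifty$ cases are handled analogously but with strictly less bookkeeping.

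Once every case of the induction is carried out, we obtain a derivation $\tdone$ whose conclusion is $\icontextone,\ivarone;\iconstraintone;\tcontextone\vdash^{\eqpun}_\itermone\termone:\typeone$ and whose $\TtoNDT{\cdot}$-image agrees with the common $\TtoNDT{\tdone_\natone}$, as desired.
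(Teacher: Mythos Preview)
Your proposal is correct and follows essentially the same approach as the paper: induction on the structure of $\termone$ (equivalently, on the common \PCF\ skeleton), uniformizing the auxiliary index terms, types, subtyping judgments, and semantic entailments at each step via Lemmas~\ref{lemma:unindtms}, \ref{lemma:unityp}, \ref{lemma:unifrel}, and~\ref{lemma:unifsem}, and then invoking the induction hypothesis on the sub-derivations. The paper only spells out the variable and application cases explicitly, so your discussion of the $\Recty$ case actually goes a bit beyond what the paper writes down, but the method you outline there is the intended one.
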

\begin{proof}
  The proof goes by induction on the structure of $\termone$. Some interesting cases:
  \begin{varitemize}
  \item
    Suppose that $\termone$ is a variable $\varone$. Then $\tdone_\natone$ has the following shape:
    $$
    \infer[\Axty]
    {\icontextone;\sb{\iconstraintone}{\ivarone}{\natit{\natone}};\sb{\tcontexttwo}{\ivarone}{\natit{\natone}},\varone:\qbang{\ivartwo<\sb{\itermone}{\ivarone}{\natit{\natone}}}{\sb{\typeone}{\ivarone}{\natit{\natone}}}\vdash_{\sb{\itermtwo}{\ivarone}{\natit{\natone}}}^{\eqpun} \varone:\sb{\typetwo}{\ivarone}{\natit{\natone}}}
    {
      \begin{array}{c}
        \icontextone;\sb{\iconstraintone}{\ivarone}{\natit{\natone}}\models^{\eqpun} \natit{0}\leq \sb{\itermtwo}{\ivarone}{\natit{\natone}} \qquad
        \icontextone;\sb{\iconstraintone}{\ivarone}{\natit{\natone}}\models^{\eqpun} \natit{1} \leq \sb{\itermone}{\ivarone}{\natit{\natone}}\\
        \icontextone;\sb{\iconstraintone}{\ivarone}{\natit{\natone}}\vdash^{\eqpun}\sb{\sb{\typeone}{\ivarone}{\natit{\natone}}}{\ivartwo}{\natit{0}}\tless{\sb{\typetwo}{\ivarone}{\natit{\natone}}}\\
        \icontextone;\sb{\iconstraintone}{\ivarone}{\natit{\natone}}\vdash^{\eqpun}\conv{(\qbang{\ivarone<\sb{\itermone}{\ivarone}{\natit{\natone}}}{\typeone})}\qquad
        \icontextone;\sb{\iconstraintone}{\ivarone}{\natit{\natone}}\vdash^{\eqpun}\conv{\sb{\tcontexttwo}{\ivarone}{\natit{\natone}}}
      \end{array}
    }
    $$
    Notice that $\sb{\sb{\typeone}{\ivarone}{\natit{\natone}}}{\ivartwo}{\natit{0}}$ is literally the same
    as $\sb{\sb{\typeone}{\ivartwo}{\natit{0}}}{\ivarone}{\natit{\natone}}$. Lemma~\ref{lemma:unifsem} and Lemma~\ref{lem:uniformizing-typing}
    allow us to derive the following
    \begin{align*}
      \icontextone,\ivarone;\iconstraintone&\models^{\eqpun} \natit{0}\leq \itermtwo;\\
      \icontextone,\ivarone;\iconstraintone&\models^{\eqpun} \natit{1} \leq \itermone;\\
      \icontextone,\ivarone;\iconstraintone&\vdash^{\eqpun}\sb{\typeone}{\ivartwo}{\natit{0}}\tless\typetwo;\\
      \icontextone,\ivarone;\iconstraintone&\vdash^{\eqpun}\conv{(\qbang{\ivarone<\itermone}{\typeone})};\\
      \icontextone,\ivarone;\iconstraintone&\vdash^{\eqpun}\conv{\tcontexttwo};
    \end{align*}
    from which the thesis easily follows.
  \item
    Suppose that $\termone$ is $\termtwo\termthree$. Then the derivations in $\{\tdone_\natone\}_{\natone\in\NN}$ have
    the following shape: 
    $$
    \infer[\Apty]
    {\icontextone;\sb{\iconstraintone}{\ivarone}{\natit{\natone}};\sb{\tcontexthree}{\ivarone}{\natit{\natone}}\vdash^{\eqpun}_{\sb{\itermfour}{\ivarone}{\natit{\natone}}} 
       \termone\termtwo:\sb{\typetwo}{\ivarone}{\natit{\natone}}}
    {
      \begin{array}{c}
        \icontextone;\sb{\iconstraintone}{\ivarone}{\natit{\natone}};\tcontextone_n\vdash^{\eqpun}_{\itermtwo_n}\termone:\qlin{\ivartwo<\itermone_\natone}
          {\typeone_\natone}{\sb{\typetwo}{\ivarone}{\natit{n}}}\\
        \icontextone,\ivartwo;\sb{\iconstraintone}{\ivarone}{\natit{\natone}},\ivartwo<\itermone_\natone ;\tcontexttwo_\natone\vdash^{\eqpun}_{\itermthree_\natone} \termtwo:\typeone_\natone\\
        \icontextone;\sb{\iconstraintone}{\ivarone}{\natit{\natone}}\vdash^{\eqpun}\sb{\tcontexthree}{\ivarone}{\natit{\natone}}\tless \tcontextone_\natone\uplus
           \sum_{\ivartwo<\itermone_\natone}{\tcontexttwo_\natone}\\
        \icontextone;\sb{\iconstraintone}{\ivarone}{\natit{\natone}}\models^{\eqpun}\sb{\itermfour}{\ivarone}{\natit{\natone}}\geq\itermtwo_\natone+\itermone_\natone
           +\sum_{\ivartwo<\itermone_\natone}\itermthree_\natone
      \end{array}
    }
    $$
    By Lemma~\ref{lemma:unindtms} and Lemma~\ref{lemma:unityp}, there are index terms $\itermone,\itermtwo,\itermthree$ and a type $\typeone$, and
    typing contexts $\tcontextone$ and $\tcontexttwo$ such that the following holds:
    \begin{align*}
      \icontextone;\emcon&\models^{\eqpun}\sb{\itermone}{\ivarone}{\natit{\natone}}\kleq\itermone_\natone;\\
      \icontextone;\emcon&\models^{\eqpun}\sb{\itermtwo}{\ivarone}{\natit{\natone}}\kleq\itermtwo_\natone;\\
      \icontextone,\ivartwo;\ivartwo<\sb{\itermone}{\ivarone}{\natit{\natone}}&\models^{\eqpun}\sb{\itermthree}{\ivarone}{\natit{\natone}}\kleq\itermthree_\natone;\\
      \icontextone,\ivartwo;\iconstraintone,\ivartwo<\itermone&\vdash^{\eqpun}\conv{\typeone};\\
      \icontextone,\ivartwo;\sb{\iconstraintone}{\ivarone}{\natit{\natone}},\ivartwo<\sb{\itermone}{\ivarone}{\natit{\natone}}&
        \vdash^{\eqpun}\sb{\typeone}{\ivarone}{\natit{\natone}}\cong\typeone_\natone;\\
      \icontextone,\ivartwo;\iconstraintone,\ivartwo<\itermone&\vdash^{\eqpun}\conv{\tcontextone};\\
      \icontextone,\ivartwo;\sb{\iconstraintone}{\ivarone}{\natit{\natone}},\ivartwo<\sb{\itermone}{\ivarone}{\natit{\natone}}&
        \vdash^{\eqpun}\sb{\tcontextone}{\ivarone}{\natit{\natone}}\cong\tcontextone_\natone;\\
      \icontextone,\ivartwo;\iconstraintone,\ivartwo<\itermone&\vdash^{\eqpun}\conv{\tcontexttwo};\\
      \icontextone,\ivartwo;\sb{\iconstraintone}{\ivarone}{\natit{\natone}},\ivartwo<\sb{\itermone}{\ivarone}{\natit{\natone}}&
        \vdash^{\eqpun}\sb{\tcontexttwo}{\ivarone}{\natit{\natone}}\cong\tcontexttwo_\natone.
    \end{align*}
    From the above, we first of all obtain
    $$
    \icontextone;\sb{\iconstraintone}{\ivarone}{\natit{\natone}}\models^{\eqpun}\sb{\itermfour}{\ivarone}{\natit{\natone}}\geq\sb{\itermtwo}{\ivarone}{\natit{\natone}}+
    \sb{\itermone}{\ivarone}{\natit{\natone}}+\sum_{\ivartwo<\sb{\itermone}{\ivarone}{\natit{\natone}}}\sb{\itermthree}{\ivarone}{\natit{\natone}},
    $$
    that by Lemma~\ref{lemma:unifsem} becomes
    $$
    \icontextone,\ivarone;\iconstraintone\models^{\eqpun}\itermfour\geq\itermtwo+\itermone+\sum_{\ivartwo<\itermone}\itermthree.
    $$
    Analogously, this time through Lemma~\ref{lemma:unifrel}, one easily reach
    $$
    \icontextone,\ivarone;\iconstraintone\vdash^\eqpun\tcontexthree\tless\tcontextone\uplus\sum_{\ivartwo<\itermone}\tcontexttwo.
    $$
    Again, one can reach
    \begin{align*}
        \icontextone;\sb{\iconstraintone}{\ivarone}{\natit{\natone}};\sb{\tcontextone}{\ivarone}{\natit{\natone}}&\vdash^{\eqpun}_{\sb{\itermtwo}{\ivarone}{\natit{\natone}}}
           \termone:\qlin{\ivartwo<\sb{\itermone}{\ivarone}{\natit{\natone}}}
          {\sb{\typeone}{\ivarone}{\natit{\natone}}}{\sb{\typetwo}{\ivarone}{\natit{n}}};\\
        \icontextone,\ivartwo;\sb{\iconstraintone}{\ivarone}{\natit{\natone}},\ivartwo<\sb{\itermone}{\ivarone}{\natit{\natone}} ;\sb{\tcontexttwo}{\ivarone}{\natit{\natone}}&
           \vdash^{\eqpun}_{\sb{\itermthree}{\ivarone}{\natit{\natone}}} \termtwo:\sb{\typeone}{\ivarone}{\natit{\natone}};      
    \end{align*}
    to which one can apply the induction hypothesis. The thesis easily follows.
  \end{varitemize}
This concludes the proof.
\end{proof}

Uniformization is the key to prove relative completeness
for functions from relative completeness for programs:

\begin{thm}[Relative Completeness for Functions]
\label{thm:function-relative-completeness}
Suppose that $\termone$ is a \PCF\ term such that  $\vdash \termone:\Nat\arr\Nat$. Moreover,
suppose that there are two (total and computable) functions $\funone,\funtwo:\NN\rightarrow\NN$ such that
$\termone\;\val{n}\ev^{\funtwo(n)}\val{\funone(n)}$.
Then there are terms $\itermone,\itermtwo,\itermthree$ with 
$\semu{\itermone+\itermtwo}\leq g$  and $\semu{\itermthree}=f$, 
such that
$$
\ivarone;\emcon;\emcon\vdash_\itermone^\eqpun\termone :\qbang{\ivartwo<\itermtwo}
\Nat[\ivarone]\lin\Nat[\itermthree].
$$
\end{thm}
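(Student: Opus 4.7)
The plan is to derive the function-level statement from relative completeness for programs (Theorem~\ref{thm:program-relative-completeness}) applied pointwise at each numeral, and then collapse the resulting denumerable family of derivations for $\termone$ into a single parametric one via Lemma~\ref{lem:uniformizing-typing}. For each $n\in\NN$ the program $\termone\,\val{n}$ evaluates to $\val{\funone(n)}$ in $\funtwo(n)$ steps, so Theorem~\ref{thm:program-relative-completeness} effectively yields a derivation
$$\tdone_n\prov\emcon;\emcon;\emcon\vdash^\eqpun_{\itermfour_n}\termone\,\val{n}:\Nat[\itermfive_n]$$
with $\semu{\itermfour_n}\leq\funtwo(n)$ and $\semu{\itermfive_n}=\funone(n)$. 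Since these derivations arise from Weighted Subject Expansion applied to PCF derivations of $\termone\,\val{n}:\Nat$, and since each $\val{n}:\Nat$ is derived by a single use of $\Natty$, the skeletons $\TtoNDT{\tdone_n}$ can be arranged to be structurally identical for all $n$.

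Each $\tdone_n$ ends with an $\Apty$ step, whose inversion (using that the ambient context is empty, so the premises' contexts are empty as well) produces index terms $\itermone_n,\itermtwo_n,\itermthree_n$ and a type $\typeone_n\equiv\Nat[\itermsix_n,\itermseven_n]$ such that
\begin{align*}
\emcon;\emcon;\emcon&\vdash^\eqpun_{\itermtwo_n}\termone:\qlin{\ivartwo<\itermone_n}{\typeone_n}{\Nat[\itermfive_n]},\\
\emcon;\emcon&\models^\eqpun\itermfour_n\geq\itermtwo_n+\itermone_n+{\textstyle\sum_{\ivartwo<\itermone_n}\itermthree_n},
\end{align*}
with $\itermsix_n\leq\natit{n}\leq\itermseven_n$ forced by the $\Natty$ rule used to type $\val{n}$. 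Hence $\Nat[\natit{n}]\tless\typeone_n$, so Lemma~\ref{lem:proof-strenghtening}, applied at the contravariant argument position of the linear arrow, lets me tighten the derivation of $\termone$ itself to
$$\emcon;\emcon;\emcon\vdash^\eqpun_{\itermtwo_n}\termone:\qlin{\ivartwo<\itermone_n}{\Nat[\natit{n}]}{\Nat[\itermfive_n]}$$
without touching the PCF skeleton.

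Finally I uniformize. The sequences $\{\itermtwo_n\}_n$, $\{\itermone_n\}_n$, $\{\itermfive_n\}_n$ are recursively enumerable because all the previous constructions are effective in $n$, so Lemma~\ref{lemma:unindtms} supplies index terms $\itermone,\itermtwo,\itermthree$ over $\sigun$ with sole free variable $\ivarone$ satisfying $\sb{\itermone}{\ivarone}{\natit{n}}\kleq\itermtwo_n$, $\sb{\itermtwo}{\ivarone}{\natit{n}}\kleq\itermone_n$ and $\sb{\itermthree}{\ivarone}{\natit{n}}\kleq\itermfive_n$. The refined family is then, up to provable equality of indices, of exactly the shape that Lemma~\ref{lem:uniformizing-typing} takes as input for the parametric judgement
$\ivarone;\emcon;\emcon\vdash^\eqpun_{\itermone}\termone:\qlin{\ivartwo<\itermtwo}{\Nat[\ivarone]}{\Nat[\itermthree]}$; applying that lemma delivers the required derivation. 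The bounds are then immediate: $\semu{\sb{\itermone+\itermtwo}{\ivarone}{\natit{n}}}=\semu{\itermtwo_n+\itermone_n}\leq\semu{\itermfour_n}\leq\funtwo(n)$ and $\semu{\sb{\itermthree}{\ivarone}{\natit{n}}}=\semu{\itermfive_n}=\funone(n)$.

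The main obstacle is securing structural uniformity of the PCF skeletons across all $n$, which is precisely the hypothesis of Lemma~\ref{lem:uniformizing-typing}. This is manageable because Theorem~\ref{thm:weighted-subject-expansion} explicitly preserves PCF skeletons, and neither the inversion of $\Apty$ nor the contravariant subtyping used above alter them; the PCF derivation of $\termone\,\val{n}:\Nat$ is itself skeleton-uniform in $n$, since $\termone$ is fixed and $\Natty$ treats every $\val{n}$ identically.
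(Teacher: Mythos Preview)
Your proposal is correct and follows essentially the same route as the paper: apply Relative Completeness for Programs (Theorem~\ref{thm:program-relative-completeness}) pointwise to each $\termone\,\val{n}$, then collapse the resulting family via Lemma~\ref{lem:uniformizing-typing}. The paper's proof is a two-line sketch; you have filled in exactly the details that are needed, in particular the inversion of $\Apty$ to isolate derivations for $\termone$ alone and the contravariant subtyping step to force the argument type to $\Nat[\natit{n}]$, as well as the verification that the \PCF\ skeletons remain uniform throughout.
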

\begin{proof}
A consequence of relative completeness for 
programs (Theorem \ref{thm:program-relative-completeness})
and Lemma \ref{lem:uniformizing-typing}. Indeed,
a type derivation for 
$\ivarone;\emcon;\emcon\vdash_\itermone\termone :\qbang{\ivartwo<\itermtwo}\Nat[\ivarone]\lin\Nat[\itermthree]$
can be obtained simply by uniformizing all type derivations $\tdone_\natone$
for programs in the form $\termone\val{n}$. In turn, those
type derivations can be built effectively by way of Subject
Expansion.
\end{proof}

\section{On the Undecidability of Type Checking}
\label{sec:type-checking}
As we have seen in the last two sections, \PCFld\ is not only sound,
but complete: all true typing judgements involving programs
can be derived, and this can be indeed lifted to first-order functions,
as explained in Section~\ref{sec:uniformization}.

There is a price to pay, however. Checking a type derivation
for correctness is undecidable in general, simply because
it can rely on semantic assumptions in the form of inequalities
between index terms, or on subtyping judgements, which themselves
rely on the properties of the underlying equational program $\eqpone$.
If $\eqpone$ is sufficiently involved, e.g. if we work with $\eqpun$,
there is no hope to find a decidable complete type checking procedure.
In this sense, \PCFld\ is a non-standard type system.

Indeed, \PCFld\ is not actually a type system, but rather
a \emph{framework} in which various distinct type systems
can be defined. Concrete type systems can be developed along
two axes: on the one hand by concretely instantiating $\eqpone$, on the
other by choosing specific and sound formal systems for the verification
of semantic assumptions. This way sound and possibly decidable
type systems can be derived.
Even if completeness can only be achieved if $\eqpone$ is
universal, soundness holds for every equational program $\eqpone$.
Choosing a simple equational program $\eqpone$ results in
a (incomplete) type system for which the problem of 
checking the inequalities can be much easier, if not
decidable. And even if $\eqpone$ remains universal, assumptions
could be checked using techniques such as abstract interpretation or 
theorem proving.

By the way, the just described phenomenon is not peculiar to \PCFld. Unsurprisingly,
program logics have similar properties, since the rule
$$
\infer
  {\{p\}P\{q\}}
  {p\Rightarrow r & \{r\}P\{s\} & s\Rightarrow q}
$$
is part of most relatively complete Hoare-Floyd logics and, of course,
the premises $p\Rightarrow r$ and $s\Rightarrow q$ have to be taken
semantically for completeness to hold.

\section{\PCFld\ and Implicit Computational Complexity}

One of the original motivations for the studies which lead to the definition of \PCFld\ came
from Implicit Computational Complexity. There, one aims at giving characterizations of 
complexity classes which can often be turned into type systems or static analysis methodologies
for the verification of resource usage of programs. Historically \cite{Hofmann:2000:PLC:346048.346051,MarionHDR}, what prevented most
ICC techniques to find concrete applications along this line was their poor expressive power:
the class of programs which can be recognized as being efficient by (tools derived from) ICC
systems is often very small and does not include programs corresponding to
natural, well-known algorithms. This is true despite the fact that ICC systems are \emph{extensionally} 
complete --- they capture complexity classes seen as classes of \emph{functions}.
The kind of Intensional Completeness enjoyed by \PCFld\ is much stronger: all \PCF\ programs with a
certain complexity can be proved to be so by deriving a typing judgement for them.

Of course, \PCFld\ is not at all an implicit system: bounds appear everywhere!
On the other hand, \PCFld\ allows to analyze the time complexity of higher-order
functional programs directly, without translating them into low level programs.
In other words, \PCFld\ can be viewed as an abstract framework where to experiment 
new implicit computational complexity techniques.

\section{Related Work}\label{sect:RW}
Other type systems can be proved to satisfy completeness properties similar
to the ones enjoyed by \PCFld. 

The first example that comes to mind is the one of intersection types. In intersection
type disciplines, the class of strongly and weakly normalizable lambda terms 
can be captured~\cite{Dezani-Giovannetti-deLiguoro:Tokyo98}. Recently, these results have been refined in such a way that
the actual complexity of reduction of the underlying term can be read from its
type derivation~\cite{DBLP:journals/corr/abs-0905-4251,bernadetleng11}. What intersection types lack is the possibility to analyze
the behavior of a functional term in one single type derivation --- all function
calls must be typed separately \cite{Leivant:acm:lfp:1990,LICS99*109,
journals/mscs/BucciarelliPS03}. This is in contrast with 
Theorem~\ref{thm:function-relative-completeness} which  gives a unique type derivation for every \PCF\ program computing
a total function on the natural numbers.

Another example of type theories which enjoy completeness properties are refinement
type theories~\cite{DBLP:conf/pldi/FreemanP91}, as shown in~\cite{Denney:1998:RTS:647321.721207}. Completeness, however, only holds
in a logical sense: any property which is true in all Henkin models can be
captured by refinement types. The kind of completeness we obtain here is clearly more
operational: the result of evaluating a program and the time complexity
of the process can both be read off from its type.

As already mentioned in the Introduction, linear logic has been a
great source of inspiration for the authors. Actually, it is not a
coincidence that linear logic was a key ingredient in the development
of one of the earliest fully-abstract game models for \PCF. Indeed,
\PCFld\ can be seen as a way to internalize history-free game
semantics~\cite{Abramsky2000} into a type system. And already
\BLL\ and \QBAL, both precursors of \PCFld, have been designed being
greatly inspired by the geometry of interaction. \PCFld\ is a way to
study the extreme consequences of this idea, when bounds are not only
polynomials but arbitrary first-order total functions on natural
numbers.

\bibliographystyle{abbrv}
\bibliography{main}

\end{document}